\documentclass[preprint]{article}
\usepackage{graphicx}
\usepackage{amsmath}
\usepackage{arydshln}
\usepackage{amssymb}
\usepackage{rotating}
\usepackage{hyperref}
 \usepackage{amsthm}
 \usepackage{subcaption}

 \fontfamily{pbk}\selectfont

\usepackage{placeins}

\usepackage{float}
\floatstyle{plaintop}
\restylefloat{table}

\usepackage{nicematrix}

\usepackage{tikz}
\usepackage{eso-pic} 

\usepackage[table]{xcolor} 
\definecolor{lightgray}{HTML}{EFEFEF}

\usetikzlibrary{shapes.geometric, arrows.meta, positioning, calc}

\newcommand{\toprowspace}{\rule{0pt}{5ex}}
\newcommand{\midrowspace}{\rule{0pt}{3ex}}
\usepackage{tikz}           
\usetikzlibrary{tikzmark}

 \usepackage{scalerel}

 \usepackage{etoolbox}

\usepackage{booktabs} 


\newcommand{\tablefont}{\footnotesize} 


\AtBeginEnvironment{tabular}{\tablefont}

\AtBeginEnvironment{tabular*}{\tablefont}

\usepackage[dvipsnames]{xcolor}

 \usepackage{xcolor}
 \usepackage{rotating}
 \usepackage{comment}
\newtheorem{thm}{Theorem}[section]
\newtheorem{cor}{Corollary}[section]
\newtheorem{prop}{Proposition}[section]
\newtheorem{conj}{Conjecture}[section]
\newtheorem{remark}{Remark}[section]
\newcommand*{\img}[1]{%
    \raisebox{-.3\baselineskip}{%
        \includegraphics[
        height=\baselineskip,
        width=\baselineskip,
        keepaspectratio,
        ]{#1}%
    }%
}
\usepackage{wrapfig}
\graphicspath{ {./images/} }

\newcommand{\be}{\begin{equation}}
\newcommand{\ee}{\end{equation}}
\newcommand{\ba}{\begin{eqnarray}}
\newcommand{\ea}{\end{eqnarray}}

\newcommand{\al}{\alpha}

\begin{document}
\setlength\dashlinedash{.6pt}
\setlength\dashlinedash{2pt}

\hoffset=-.4truein\voffset=-0.5truein
\setlength{\textheight}{8.5 in}
\begin{titlepage}
\begin{center}
\vskip 40 mm
{\large \bf The HOMFLY--PT polynomial and HZ factorisation}

\vskip 10mm
{\bf {Andreani Petrou and Shinobu Hikami }}
\vskip 5mm

Okinawa Institute of Science and Technology Graduate University,\\
 1919-1 Tancha, Okinawa 904-0495, Japan.

\vskip 5mm
{\bf Abstract}
\vskip 3mm
\end{center}
The Harer--Zagier (HZ) transform maps the HOMFLY--PT polynomial into a rational function. For some special  knots and links, the latter admits a simple factorised form, 
which is referred to as \emph{HZ factorisation}. This property is preserved under 
full twists and  the Jucys--Murphy twists, which are hence used to generate infinite HZ-factorisable families of hyperbolic knots. For such families, the HOMFLY--PT polynomial can be fully encoded in two sets of integers, corresponding to the numerator and denominator exponents,  
which turn out to be related to the double-grading in
Khovanov homology. 
Moreover, a relation between the HOMFLY--PT and Kauffman polynomials, which was only known to hold for torus knots, is now proven for several of these hyperbolic families. Such a  relation has a peculiar implication in topological string theory, namely, it is equivalent to the vanishing of the two-crosscap BPS invariants. 
It is conjectured that the HOMFLY--PT-Kauffman relation provides a criterion for HZ   factorisability.

 \end{titlepage}

\section{Introduction}
  The well known relation between knot polynomial invariants and the 3-dimensional Chern--Simons (CS) gauge theory \cite{Witten,Jones}  has led to an increasingly fruitful interchange between pure mathematics and theoretical physics. 
 The HOMFLY--PT polynomial $H(\mathcal{K};a,z)$ \cite{HOMFLY,PT} of a knot or link $\mathcal{K}$, is a Laurent polynomial in two variables that can be defined via the skein relation
\be\label{skein}
    a^{-1}H(\img{positivecrossing.png})-aH(\img{negativecrossing.png})=zH(\img{zero_crossing.png}),
\ee
along with normalisation condition $H(\bigcirc)=1$. 
It corresponds to the $SU(N)$-invariant $\bar{H}(\mathcal{K};q^N,q)$ of CS theory, which  can be derived   by taking averages of Wilson loops around the knot $\mathcal{K}$. Such a quantity is related to $H(\mathcal{K};a,z)$ 
 as defined in (\ref{skein}) by 
\be\label{SU(N)invariant}
\bar{H}(\mathcal{K};q^N,q)=\frac{a-a^{-1}}{z}H(\mathcal{K};a,z)\bigg{\vert}_{\scaleto{a=q^N,\;z=q-q^{-1}}{10pt}}.
\ee
The $SU(N)$-invariant is referred to as the \emph{unnormalised HOMFLY--PT polynomial} since $\bar{H}(\bigcirc)=\frac{q^N-q^{-N}}{q-q^{-1}}$.  Here $q$ is a quantum group  parameter \cite{Jones} and can be determined by the level $k\in\mathbb{Z}$ of CS theory and the rank $N$ of the gauge group as $q=e^{-\pi i/(k+N)}$. 
When $N=2$ (i.e. $a=q^2$) and $N=0$ ($a=1$),
the HOMFLY--PT reduces  to the Jones and Alexander polynomials, respectively. The $SO(N+1)$-invariant, on the other hand, corresponds to the Kauffman polynomial.

    The Harer--Zagier (HZ) transform   was first introduced in \cite{HZ}  where it was used as a generating function for the computation of  Euler characteristics of the moduli space of Riemann surfaces, via  random matrix theory \cite{ItzyksonZuber1,Goulden}. This matrix model approach
  is useful, not only for the study of Euler characteristics, but also for computing
  the intersection numbers of the orbifold, which are the Gromov-Witten invariants for a point \cite{Penner,Kontsevich,BrezinHikami,BrezinHikami2}. 
  The  HZ transform in relation to Gaussian correlators (Gaussian means) has been considered in \cite{Morozov2}.
 
 The HZ transform has recently been applied  to knot polynomials \cite{Morozov, Petrou}.
In this context, it amounts to a discrete version of the Laplace transform applied to the parameter $N$ in the unnormalised HOMFLY--PT polynomial, 
which can be expressed as
\be\label{KhovanovC}
Z(\mathcal{K};\lambda,q) = \sum_{N=0}^\infty   \bar{H}(\mathcal{K}; a=q^N,q)\lambda^N.
\ee
At each monomial  $a^\beta$ ($\beta\in\mathbb{Z}$) in $\bar{H}$, the sum over $N$ can be evaluated via geometric series  (assuming $|\lambda q^\beta|<1$)  as
 \be\label{HZ2}
 \sum_{N=0}^\infty a^\beta \lambda^N  = \sum_{N=0}^\infty (q^{\beta}\lambda)^N = \frac{1}{1- \lambda q^\beta}.
 \ee 
 Thus the evaluation of HZ amounts to the substitution $a^\beta\mapsto(1- \lambda q^\beta)^{-1}$ in $\bar H(a,q)$ and, hence, it transforms the HOMFLY--PT polynomial into a rational function, involving the ratio of polynomials in the parameters $q$ and $\lambda$. 
The HZ transform is said to be \emph{factorisable}, if both the numerator and denominator can be expressed as the product of monomials in $\lambda$ of the form $(1\pm\lambda q^\beta)$.

  As an explicit example, consider the right handed  trefoil knot $3_1$, which has HOMFLY--PT polynomial 
 \be
 H(3_1;a,z)= - a^4 + a^2 ( z^2+2) \big\vert_{z= q-q^{-1}}= -a^4 + a^2 (q^2 + q^{-2}).
 \ee
 Its unnormalised version is obtained by multiplying with an overall factor $(a-a^{-1})/(q-q^{-1})$ and becomes $\bar {H}(3_1) =  (q - q^3)^{-1}(a(1+q^4) - a^3 (1 + q^2 + q^4) +  a^5 q^2  )$. The HZ transform is obtained by applying (\ref{KhovanovC}) 
and has the factorised form
 \be\label{3_1}
 Z(3_1;\lambda,q) = \frac{\lambda (1- q^9 \lambda)}{(1- q \lambda)(1- q^3 \lambda)(1- q^5 \lambda)}.
 \ee
  In fact,   $3_1=T(2,3)$ belongs to the family of torus knots $T(m,n)$, whose all members are known to have factorised HZ transform \cite{Morozov}. Beyond the torus family,  a hyperbolic family of pretzel knots that also enjoys this property was recently found in \cite{Petrou}.

  In the present article, we extend the results of \cite{Petrou} by finding further infinite families of hyperbolic knots which admit a factorisable HZ transform. As explained in Sec.~\ref{sec:factknots}, such knots are related to each other by full twists $F_m$ or  Jucys--Murphy twists $E_m$, examples of which are shown in Fig.~\ref{fig:FnEn}. The factorisability property seems to be preserved under these operations applied an indefinite amount of times and, hence, they are used to generate the infinite families. Furthermore, in Sec.~\ref{sec:factlinks} we consider the HZ factorisability in the case of links with two components.   
  Such knots and links are special since their HZ function  can be fully determined by two  sets of integers corresponding to the numerator and denominator exponents of $q$; and a parameter $m$, which gives a lower bound to the braid index, according to the Morton--Franks--Williams inequality. 
  We describe how the HZ transform may give an explanation for the non-sharpness of the inequality for the exceptional cases.
 
In Sec.~\ref{sec:InverseHZ},  Theorems~\ref{thmInverseHZ} and ~\ref{thmInverseAlex} suggest a way to recover the HOMFLY--PT and Alexander polynomials of a knot or link, given its HZ function. This is achieved by applying the inverse HZ transform, which is computed via  contour integrals. As an application, we provide the compact formula (\ref{FactorisedHOMFLY}) for the HOMFLY--PT polynomial of the infinite families of knots that admit HZ factorisability, which is somewhat reminiscent of the Rosso-Jones formula for torus knots \cite{RossoJones}. In (\ref{eq:factInverseAlex}) we also give a closed expression for the Alexander polynomial of the factorised cases.

   Furthermore, in Sec.~\ref{sec:HandKF} 
   we conjecture that HZ factorisation for knots occurs if and only if their HOMFLY--PT and Kauffman polynomials are related by \be
   H(\mathcal{K})=\widehat{KF}(\mathcal{K})):=\widetilde{KF}_{\rm even}(\mathcal{K};a,z)-\frac{z}{a-a^{-1}}\widetilde{KF}_{\rm odd}(\mathcal{K};a,z)\ee  (for notation see (\ref{tildeKF})), 
  hence providing a necessary and sufficient condition  for factorisability. 
  This can also be reduced to a relation (\ref{AKF}) between the Alexander and Kauffman polynomials. In fact, such relations were long known to exist for torus knots \cite{Labastida}, but verifying  their validity beyond the torus case has since been an open problem. The present article remedies this, by finding that these relations  
  hold for an exhaustive list  with up-to-12-crossings hyperbolic knots that admit  HZ factorisability and, in Theorem~\ref{thm:} -which is among the central results of this article-, we prove it for the HZ-factorisable families $P(2,-3,\pm(2j+1))$, $\mathcal{K}_{j,2}$, $\mathcal{K}_{j,3}$ and $5_2^{3k}$ (defined in Sec.~\ref{sec:factorisedHZ}). We propose a similar HOMFLY--PT/Kauffman relation (\ref{linkcriterion})  in the case of links, but this is no longer in 1-1 correspondence with HZ factorisability.  In Sec.~\ref{sub:BPS}, we further suggest that the relation $H(\mathcal{K})=\widehat{KF}(\mathcal{K})$   is equivalent to the vanishing of the two-crosscap BPS invariants $\hat{N}^{c=2}_{g,Q}$ of topological string theory, which are related to link invariants via the gauge/string duality \cite{MorozovCheck}. The BPS invariants for several HZ-factorisable two-component links are included in the Appendix.

 In Sec.~\ref{sec:Kh}, we suggest a homological interpretation of the exponents in the factorised HZ transform. This is  based on the observation that  for a given knot, the HZ exponents coincide with the  $j$-grading appearing in the rows of its  Khovanov homology (Kh) table \cite{Bar-Natan} and their appearance in the numerator or denominator is associated with the sign of the corresponding Euler characteristic \cite{Khovanov,Khovanov2,Bar-Natan1,Bar-Natan2}. 
 In fact, Prop.~\ref{propJonesCoeffpm2} shows that in cases of HZ factorisability, the Jones polynomial (the graded Euler characteristic of Khovanov homology)  encapsulates all the information about the HZ exponents.

 A summary and discussion can be found in Sec.~\ref{Sec:summary}. In a follow up paper \cite{PetrouII}, we explain  HZ-factorisability through a more rigorous approach involving represenation theory and, furthermore, we investigate the structure of the HZ function and its exponents for  more general, non HZ-factorisable knots. 

\vskip 3mm
\section{Knots and links with factorised HZ transform}\label{sec:factorisedHZ}

The HOMFLY--PT polynomial and, consequently, also its HZ transform, behave differently in the case of links with odd and even  number of components, 
due to the different parity of the powers of the HOMFLY--PT variables $a$ and $z$. 
Hence,  the factorisability properties of their HZ function 
are considered separately. 
\subsection{Knots\label{sec:factknots}}
We consider the unnormalised HOMFLY--PT polynomial $\bar{H}(\mathcal{K};q^N,q)$ of a knot  as defined in (\ref{SU(N)invariant}).
Its HZ transform (\ref{KhovanovC})  is said to be factorisable when it can be written in the form
\be\label{eq:factorisableZ}
Z(\mathcal{K};\lambda,q)=\frac{\lambda\prod_{i=0}^{m-2}(1-\lambda q^{\alpha_i})}{\prod_{i=0}^{m}(1-\lambda q^{\beta_i})},
\ee
where  
$m,\alpha_i,\beta_i\in\mathbb{Z}$ are constants that depend on $\mathcal{K}$ and satisfy $\sum_{i=0}^{m-2}\alpha_i=\sum_{i=0}^{m}\beta_i$. In fact, the denominator exponents $\{\beta_i=e+2i\}_{i=0}^m$ are fully determined by $e\in\mathbb{Z}$, which is the lowest power of $a$ in $\bar{H}$ 
and the parameter $m$. As we shall explain below, the latter mostly coincides with the braid index $\iota$.

In the previous article \cite{Petrou} we verified that the factorisability property (\ref{eq:factorisableZ}) holds for the $m$-stranded torus knots\footnote{Up to 13 crossings, torus knots are listed in \cite{KnotInfo} as $3_1=T(2,3)$, $5_1=T(2,5)$, $7_1=T(2,7)$, $8_{19}=T(3,4)$, $9_1=T(2,9)$, $10_{124}=T(3,5)$, $11a_{367}=T(2,11)$ and $13a_{4878}=T(2,13)$.} $T(m,n)$ 
and for the pretzel family\footnote{The overline notation $\overline n$ is used interchangeably  with the more standard $-n$ to indicate negative tangles.} 
$P(\overline{2},3,\overline{2j+1})$, 
with exponents\\
 {\bf $\bullet$ }$T(m,n)$; $\alpha_i=(m+1)n-m+2+2i$; $e=(m-1)n-m$ \\
  {\bf $\bullet$ }$P(\overline{2},3,\overline{2j+1})$; $m=3$;  $\alpha_0=13-2j$, $\alpha_1=3-6j$; $e=1-2j$.
  
An exhaustive list for hyperbolic knots with up to 13 crossings satisfying (\ref{eq:factorisableZ}) together with their HZ parameters is the following\footnote{We used the website KnotInfo \cite{KnotInfo} for the data of the HOMFLY--PT polynomial of knots with up to 13 crossings and its HZ transform was computed using the Mathematica software. Note that, although we often use Rolfsen notation to describe a knot $\mathcal{K}$, we sometimes refer to the mirror image $\mathcal{K^*}$ of the one  originally listed in the Rolfsen table. 
The HOMFLY--PT polynomials and HZ transforms of mirror knots are related by $H(\mathcal{K};a,z)=H(\mathcal{K^*};a^{-1},-z)$ and $Z(\mathcal{K};\lambda,q)=Z(\mathcal{K^*};\lambda,q^{-1})$ respectively, and hence conclusions about factorisation are not affected by this choice. For clarity, sometimes a superscript $\mathcal{K}^{\pm}$ will be added to the Rolfsen notation  indicating whether we refer to the positive or negative one (as indicated by the sign of their writhe), respectively, which are the mirror image of each other. Note that their positivity or negativity is also reflected in the sign of the exponents in the HZ transform.\label{ft:mirror-knot}}.\\
 {\bf $\bullet$ }$5_2=P(\overline{2},3,\overline{1})$; $m=3$; $\alpha_0=13$, $\alpha_1=3$; $e=1$\\
 {\bf $\bullet$ }$8_{20}=P(\overline{2},3,\overline{3})$;  $m=3$; $\alpha_0=11$, $\alpha_1=-3$; $e=-1$\\
 {\bf $\bullet$ }$10_{125}=P(\overline{2},3,\overline{5})$; $m=3$;  $\alpha_0=9$, $\alpha_1=-9$; $e=-3$\\
 {\bf $\bullet$ }$10_{128}$; $m=4$; $\alpha_0=7$, $\alpha_1=17$, $\alpha_2=21$; $e=5$\\
{\bf $\bullet$ }$10_{132}$; $m=2(\neq\iota=4)$; $\alpha_0=15$; $e=3$ \\
{\bf $\bullet$ }$10_{139}=:\mathcal{K}_{-1,2}$; $m=3$; $\alpha_0=15$, $\alpha_1=25$, $e=7$\\
{\bf $\bullet$ }$10_{161}=:\mathcal{K}_{0,2}$; $m=3$; $\alpha_0=9$, $\alpha_1=23$; $e=5$.\\
{\bf $\bullet$ }$12n_{235}=P(\overline{2},3,\overline{7})$; $m=3$; $\alpha_0=-15$, $\alpha_1=7$; $e=-5$ .\\
{\bf $\bullet$ }$12n_{242}=P(\overline{2},3,7)=\mathcal{K}_{-2,2}$; $m=3$; $\alpha_0=21$, $\alpha_1=27$; $e = 9$. 
\\{\bf $\bullet$ }$12n_{318}$; $m=4$; $\alpha_0=-5$, $\alpha_1=5$, $\alpha_2=15$; $e=-1$.
\\{\bf $\bullet$ }$12n_{749}=:\mathcal{K}_{1,2}$;    $m=2(\neq \iota=3)$; $\alpha_0=21$; $e=5$.

\vskip1mm

Interestingly, these knots 
can be related to one another by two braid operations, the  full twist $F_m=(\sigma_{m-1}\sigma_{m-2}\cdots\sigma_{1})^m$ and the Jucys--Murphy twist \cite{Jucys, Murphy} 
$E_m:=\sigma_{1}\sigma_{2}\cdots\sigma_{m-1}\sigma_{m-1}$ $\cdots\sigma_2\sigma_{1}$  or, equivalently (rotated by $180^\circ$), 
$\tilde{E}_m:=\sigma_{m-1}\cdots\sigma_1\sigma_1\cdots\sigma_{m-1}$, examples of which for $m=5$ are shown in Fig.~\ref{fig:FnEn}.   The relations among the above knots  under these operations are summarised in 
  Table~\ref{fig:factTable}. 

  \begin{figure}[h!]
\begin{subfigure}[h]{0.32\linewidth}
\centering
\includegraphics[width=0.59\textwidth]{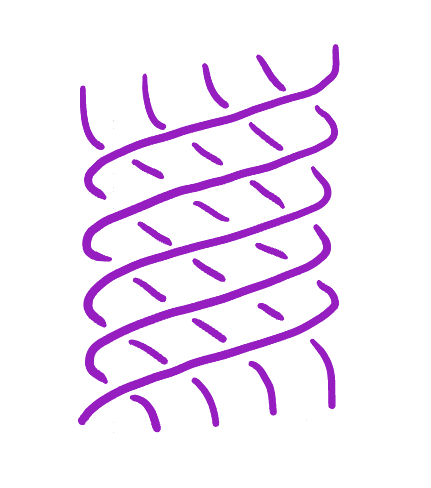}
\vspace{-1mm}
\caption{Full twist $F_5$}
\end{subfigure}
\begin{subfigure}[h]{0.325\linewidth}
\centering
\includegraphics[width=0.54\textwidth]{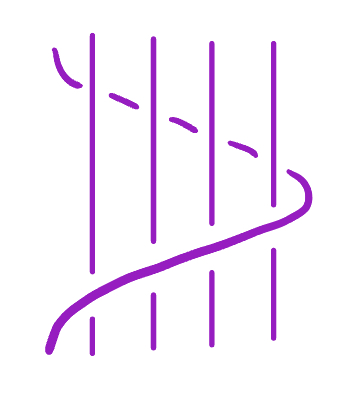}
\vspace{1mm}
\caption{Jucys--Murphy braid $E_5$ }
\end{subfigure}
\begin{subfigure}[h]{0.32\linewidth}
\centering
\includegraphics[width=0.54\textwidth,angle=180]{TwistE5.jpeg} 

\caption{Jucys--Murphy braid $\tilde E_5$ }
\end{subfigure}
\caption{5-strand examples of the braid operations that preserve the factorisability of the HZ transform. The Jucys-Murphy twist is depicted in two equivalent braid representations in (b) 
and (c), 
which are related to each other by $180^\circ$ rotation.}\label{fig:FnEn}
\end{figure}

\begingroup
  \begin{table}[h!]
  \caption{Relations between HZ-factorisable knots  with   $m=3$ (top) and $m=4$ (bottom). They can be obtained from each other in the horizontal direction by introducing a \textcolor{Purple}{Jucys--Murphy twist $E_m$}   or \textcolor{Magenta}{$\tilde{E}_m$}, or in the diagonal direction by a \textcolor{teal}{full twist $F_m$}, as indicated by the    arrows.}
      \centering
    
  \label{fig:factTable}
  
     $(m=3)\;\;$ 
     {\footnotesize
     \begin{NiceTabular}{|c|ccc|}
      \hline
    \toprowspace
     \toprowspace
         j  & \tikzmarknode{a}{$\sigma_1\otimes T(2,-2j-1)$}\tikzmarknode{g}{} %
         &$\;\;\;\;\;\;$\tikzmarknode{b}{$\mathcal{K}_{j,1}=P(\overline2,3,\overline{2j+1})$}\tikzmarknode{f}{}& \tikzmarknode{c}{$\mathcal{K}_{j,2}$} \\
         &$\sigma_1\sigma_2^{-2j-1}$  &&\\
          \rowcolor{gray!15}&&\tikzmarknode{d}{$\mathcal{K}_{j-1,1}$}& \tikzmarknode{e}{$\mathcal{K}_{j-1,2}$}\\[4pt]
          \hline
          \hline
           \midrowspace
        3& $\sigma_1\sigma_2^{-7}$& $12n_{235}^-$&
        $8_{20}\otimes 10_{125}$\\
            \cellcolor{gray!15}2& \cellcolor{gray!15}$\sigma_1\sigma_2^{-5}$& \cellcolor{gray!15}$10_{125}$&\cellcolor{gray!15}$8_{20}\otimes 8_{20}$\\
        1& $\sigma_1\sigma_2^{-3}$& $8_{20}^+$&$12n_{749}^+$\\
         \rowcolor{gray!15}0& $\sigma_1\sigma_2^{-1}$& $5_{2}^+$&$10_{161}^+$\\
       -1& $\sigma_1\sigma_2$& $5_{1}=T(2,5)$&$10_{139}^+$\\
        \rowcolor{gray!15}-2& $\sigma_1\sigma_2^3$& $8_{19}=T(3,4)$&$12n_{242}^+$\\
      -3& $\sigma_1\sigma_2^5$& $10_{124}=T(3,5)$&$T(3,7)$\\
        \hline
          
      \end{NiceTabular}
      \begin{tikzpicture}[overlay, remember picture,
                    shorten >=4pt, shorten <=4pt]
                    ]
\draw [violet,  thick, ->] (a) --node[midway, above, violet] {$E_3$} (b);
\draw [violet,  thick, ->] (b) --node[midway, above, violet] {$E_3$} (c);
\end{tikzpicture}
\begin{tikzpicture}[overlay, remember picture,
                    shorten >=4pt, shorten <=4pt]
                    ]
  \draw [teal, thick, ->] (g) -- node[midway, below, teal] {$F_3$} (d);
\draw [teal,  thick, ->] (f) --node[midway, below, teal] {$F_3$} (e);
\end{tikzpicture}}
  \end{table}
\endgroup

\begingroup
  \begin{table}[h!]
      \centering
      ($m=4$)
      \vskip1mm
  \vspace{6pt}
    \begin{tabular}{c}
\begin{minipage}{0.54\linewidth}
\centering
     \setlength{\tabcolsep}{15pt}
\renewcommand{\arraystretch}{1}
   \begin{NiceTabular}{|rc|}
       \hline
      $\sigma_3\otimes 10_{139}^-$\tikzmarknode{a}{}&\\
\scalebox{0.8}{ $\sigma_3\sigma_1^{-1}\sigma_2^{-2}\sigma_1(\sigma_2^{-1}\sigma_1^{-1})^4$}     &  \\
\rowcolor{gray!15}\tikzmarknode{b}{$\sigma_3\otimes 5_{2}^-$}\tikzmarknode{f}{}&\tikzmarknode{d}{$12n_{318}^+$}\\
\rowcolor{gray!15}\scalebox{0.8}{$\sigma_3\sigma_1^{-1}\sigma_2^{-2}\sigma_1(\sigma_2^{-1}\sigma_1^{-1})\;$}     &  \\
\rowcolor{white!15}\tikzmarknode{c}{$\sigma_3\otimes 3_{1}^+$}&\tikzmarknode{e}{$10_{128}^+$}\\
\rowcolor{white!15}\scalebox{0.8}{$\sigma_3\sigma_1^{-1}\sigma_2^{-2}\sigma_1(\sigma_2\sigma_1)$}     &  \\
\hline
      \end{NiceTabular}
          \begin{tikzpicture}[overlay, remember picture,
                    shorten >=4pt, shorten <=4pt]
                    ]
\draw [magenta,  thick, ->] (b) --node[midway, above, magenta] {$\tilde{E}_4$} (d);
\draw [magenta,  thick, ->] (c) -- (e);
\end{tikzpicture}
\begin{tikzpicture}[overlay, remember picture,
                    shorten >=4pt, shorten <=4pt]
                    ]
\draw [teal,  thick, ->] (a) --node[midway, above, teal] {$F_4$} (d);
\draw [teal,  thick, ->] (f) -- (e);
\end{tikzpicture}
      \end{minipage}
\hfill

\hspace{-6mm}
\begin{minipage}{0.45\linewidth}
\centering
 \setlength{\tabcolsep}{20pt}
\renewcommand{\arraystretch}{1}
      \begin{NiceTabular}{|rc|}
      \hline
     $\sigma_3\otimes T(3,-5)$ \tikzmarknode{a}{}&\\
\scalebox{0.8}{$\sigma_3(\sigma_2^{-1}\sigma_1^{-1})^5$}     &  \\

\rowcolor{gray!15}\tikzmarknode{b}{$\sigma_3\otimes T(3,-2)$} \tikzmarknode{f}{}&\tikzmarknode{d}{$10_{132}^+$}\\
\rowcolor{gray!15}\scalebox{0.8}{$\sigma_3(\sigma_2^{-1}\sigma_1^{-1})^2$}     &  \\
\rowcolor{white!15}\tikzmarknode{c}{$\sigma_3\otimes T(3,1)$}&\tikzmarknode{e}{$7_{1}^+$}\\
\rowcolor{white!15}\scalebox{0.8}{$\sigma_3\sigma_2\sigma_1$}     &  \\
\hline
      \end{NiceTabular}
             \begin{tikzpicture}[overlay, remember picture,
                    shorten >=4pt, shorten <=4pt]
                    ]
\draw [magenta,  thick, ->] (b) --node[midway, above, magenta] {$\tilde{E}_4$} (d);
\draw [magenta,  thick, ->] (c) -- (e);
\end{tikzpicture}
\begin{tikzpicture}[overlay, remember picture,
                    shorten >=4pt, shorten <=4pt]
                    ]
\draw [teal,  thick, ->] (a) --node[midway, above, teal] {$F_4$} (d);
\draw [teal,  thick, ->] (f) -- (e);
\end{tikzpicture}
      \end{minipage}
      \end{tabular}
  \end{table}
\endgroup
\FloatBarrier

The factorisability of the HZ transform seems to   be preserved after further applying these braid operations an arbitrary number of times. Hence infinite families of knots that are HZ-factorisable can be constructed 
as follows. Given an $m$-strand braid representative of a knot $\mathcal{K}$ that satisfies (\ref{eq:factorisableZ}), concatenation  with $F_m^k$ or $E_m^k$  for any $k\geq 0$, which will be denoted by the symbol $\otimes$,   
results into the infinite families $\mathcal{K}\otimes F_m^k=:\mathcal{K}^{mk}$ or $\mathcal{K}\otimes E_m$, respectively. 
Note that in the case of full twists, which are equivalent to  Dehn twists, the resulting family does not depend on the choice of the initial braid representative for $\mathcal{K}$.  
This is not the case, however, for the Jucys-Murphy twist, for which the resulting knots and their factorisability will  depend on the choice of initial braid representative and, hence, the latter should be specified to obtain an unambiguous definition for $\mathcal{K}\otimes E_m$ (or $\mathcal{K}\otimes \tilde E_m$). 
Several explicit examples  shall be considered below.

\vskip2mm
{\bf $\bullet$ } 
The family $5_2\otimes F_3^k=:5_2^{3k}$  is 
generated by $k$ full twists on $5_2^+$ (where the subscript denotes positive writhe, c.f. footnote~\ref{ft:mirror-knot}), which has braid index $3$ and is depicted in   Fig.~\ref{fig:5_2_3k}. This family has been previously considered in \cite{Barkowski}.
\begin{figure}[!h]
   \centering
   \includegraphics[scale=0.29]{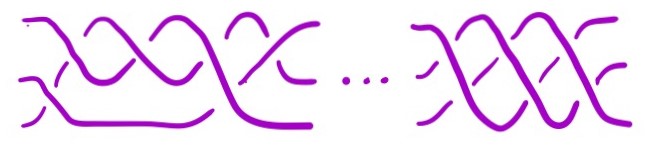}
\caption{The family $5_2^{3k}$ is the closure of the braid obtained by a braid representative for $5_2^+$ (left part) concatenated with $k$ full twists $F_3$ (one of such twists is depicted in the right part).\vspace{2mm}}
   \label{fig:5_2_3k}
\end{figure}
For instance, at $k=1$ it includes the 10-crossing knot $10_{139}^+=5_2^+\otimes F_3$. 
In turn, as suggested in  Table.~\ref{fig:factTable} ($m=3$), the knot $5_2^+$ itself can be obtained by a full twist on 
a 3-stranded version of the left handed trefoil $T(2,-3)$ 
with braid  $\sigma_1\sigma_2^{-3}$, shown in Fig.~\ref{fig:s1s23}. 
\begin{figure}[h]
    \centering
    \includegraphics[width=0.10\linewidth,angle=90]{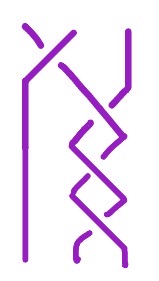}
    \caption{Braid $\sigma_1\sigma_2^3$. When a full twist $F_3$ is attached this yields a braid diagram for the knot $5_2$. }
    \label{fig:s1s23}
\end{figure} 
Hence, 
$5_2^+=\sigma_1\sigma_2^{-3}\otimes F_3$ and $5_2^{3k}=\sigma_1\sigma_2^{-3}\otimes F_3^{k+1}$.
 The HOMFLY--PT polynomial for this family  can be computed recursively, for $k\geq1$, by
\ba\label{523krec}
    H(5_{2}^{3k})&=&a^{2}(1+z^{2})\left(H_{T(3,3k+1)}+a^{4}z^{2}\sum_{i=0}^{3k-2}a^{2i}H_{T(3,3k-1-i)}
    +a^{6k}(1-v^{2})\right)\nonumber\\
    &&+a^{6}z^{2}H_{T(3,3k-1)}
\ea
and its HZ transform can be written in a symmetric way in terms of the number of crossings $n=6k+4$ as
\be\label{5_2^{3k}}
Z(5_2^{3k};\lambda,q)=\frac{\lambda\left(1-\lambda q^{2n-5}\right)\left(1-\lambda q^{2n+5}\right)}{\left(1-\lambda q^{n-3}\right)\left(1-\lambda q^{n-1}\right)\left(1-\lambda q^{n+1}\right)\left(1-\lambda q^{n+3}\right)}.
\ee
In terms of $k$ the HZ parameters become $\alpha_i(5_2^{3k})=12k+3+10i$, $e(5_2^{3k})=6k+1$  and $m=3$. 
\begin{remark}\label{rmk:additive}
\textup{Note that  for each subsequent member of the family the value of  $e$ increases by $6$, which is equal to the writhe (same as the number of crossings) added 
by a full twists $F_3$. 
Moreover, the values of the numerator exponents $\alpha_i$ also change in an additive way with each extra full twist. }   
\end{remark}

\vspace{2mm}
{\bf $\bullet$ }The whole of the pretzel family  $P(\overline{2},3,\overline{2j+1})=\sigma_1\sigma_2^{-(2j+3)}\otimes F_3$ can be further generalised into $P(\overline{2},3,\overline{2j+1})^{3k}=\sigma_1\sigma_2^{-(2j+3)}\otimes F_3^{k+1}$,   
with HZ transform\footnote{In general, it is often difficult to find  simple recursive or explicit formulas for the HOMFLY--PT polynomial  for  families generated by full twists. However, for the first few members of the family (with no more than 45 crossings), it can be computed using the Mathematica software, with the aid of the package "Knot Theory" \cite{Bar-Natan}. Although the HOMFLY--PT polynomial becomes increasingly  complicated, the   exponents in the HZ transform, which remains factorisable, show a clear pattern that follows the observation in Remark~\ref{rmk:additive}. In the sequel, the HZ formulas for such families are determined in this way, by extrapolating on the pattern exhibited by their first few members. In a follow up paper \cite{PetrouII} we will provide a more rigorous derivation of these formulas.} 
\ba\label{P(-2,3,-2j-1),3k}
Z(P(\overline{2},3,\overline{2j+1})^{3k};\lambda,q)=\frac{\lambda\left(1-\lambda q^{12k-2j+13}\right)\left(1-\lambda q^{12k-6j+3}\right)}{\prod_{i=0}^3\left(1-\lambda q^{6k-2j+1+2i}\right)}.
\ea
At $j=0$ it reduces to the family $5_2^{3k}$ mentioned above, while at $j=1$ it contains the subfamily  $8_{20}^{3k}$, 
which includes the knot $10_{161}^+=8_{20}\otimes F_3=\sigma_1\sigma_2^{-5}\otimes F_3^2$ at $k=1$. Further such relations can be traced in Table~\ref{fig:factTable} ($m=3$).

\vspace{2mm}
{\bf $\bullet$ } A different family can be generated by $k$ full twists $F_3$ on the mirror of the pretzel family $P(2,\overline{3},2j+1)$. It is denoted as $P(2,\overline{3},2j+1)^{3k}=\sigma_1^{2j+3}\sigma_2^{-1}\otimes F_3^{k+1}$ and,  at $k=1,2$, these are simply torus knots $T(2,2j+3)=P(2,\overline{3},2j+1)\otimes F_3$ and $T(2,2j+7)=P(2,\overline{3},2j+1)\otimes F_3^2$. Their 
 HZ transform reads
\be\label{3k}
Z(P(2,\bar{3},2j+1)^{3k};\lambda,q)=\frac{\lambda\left(1-\lambda q^{12k+2j-13}\right)\left(1-\lambda q^{12k+6j-3}\right)}{\prod_{i=0}^3\left(1-\lambda q^{6k+2j-7+2i}\right)}.
\ee
Note that this expression is very similar to Eq.~(\ref{P(-2,3,-2j-1),3k}) with the exponents of $q$ related to $k$ (amount of full twists) being the same, while the remaining ones  switch sign (c.f. footnote \ref{ft:mirror-knot}). 
Similarly, by applying (positive) full twists on the mirror of any of the knots satisfying (\ref{eq:factorisableZ}), 
 different families can be generated, 
 but their HZ transforms will be related in a similar way as in this case.

\vspace{2mm}
{\bf $\bullet$ } The family $P(\overline{2},3,2j+1)=\sigma_1^{2j-1}\sigma_2\otimes F_3$ 
contains the torus knots $5_1=T(2,5)$ at $j=0$, $8_{19}=T(3,4)$ at $j=1$, $10_{124}=T(3,5)$ at $j=2$ and the hyperbolic knot $12n_{242}$
 at $j=3$. The latter 
can equivalently be obtained by a full twist on a 4-stranded trefoil as $\sigma_1^3\sigma_2^{-1}\sigma_3^{-1}\otimes F_4$, or alternatively as $T(3,4)\otimes E_3:=(\sigma_2\sigma_1)^4\otimes E_3$ (c.f.  Table~\ref{fig:factTable}).  
The HOMFLY--PT polynomial of this family can be obtained  recursively via the formula
\be\label{P232j+1} H_{P(\overline{2},3,2j+1)}=v^{4}(2+z^{2}-v^{2})H_{T(2,2j+1)}+z^{2}v^{2}H_{T(2,2j+3)}+zv^{3}H_{T(2,2j+2)}
\ee
and its HZ transform, in terms of the crossing number $n=6+2j$, reads
\be\label{eq:Pmin23plus2j}
    Z(P(\overline{2},3,2j+1))=\frac{\lambda \left(1-\lambda q^{n+9}\right) \left(1-\lambda q^{3n-9}\right)}{\left(1-\lambda q^{n-3}\right) \left(1-\lambda q^{n-1}\right) \left(1-\lambda q^{n+1}\right) \left(1-\lambda q^{n+3}\right)}.
\ee
By increasing the number of full twists, this can be  generalised further to $P(\overline{2},3,2j+1)^{3k}=\sigma_1^{2j-1}\sigma_2\otimes F_3^{k+1}$  with 
HZ transform
\be\label{3kII}
Z(P(\bar{2},3,2j+1)^{3k};\lambda,q)=\frac{\lambda\left(1-\lambda q^{12k+2j+15}\right)\left(1-\lambda q^{12k+6j+9}\right)}{\prod_{i=0}^3\left(1-\lambda q^{6k+2j+3+2i}\right)}.
\ee

\vspace{2mm}
{\bf $\bullet$ }  The knot $10_{128}^+$, which has braid index $\iota=4=m$, can be obtained by a full twist on a 4-stranded version of $5_2^-=\sigma_3\sigma_1^{-1}\sigma_2^{-2}\sigma_1\sigma_2^{-1}\sigma_1^{-1}$, as  depicted in Fig.~\ref{fig:52F4}. 
\begin{figure}[h]
    \centering
\includegraphics[width=0.135\linewidth,angle=90.5]{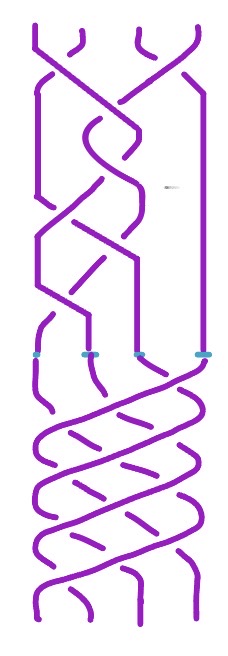}
    \caption{4-strand braid representative of $5_2$ (left) concatenated with a full twist $F_4$ (right).}
    \label{fig:52F4}
\end{figure} 
 It generates the family $10_{128}^{4k}=10_{128}\otimes F_4^{k}$ with HZ transform
\be\label{1284k}   
Z(10_{128}^{4k};\lambda,q)=\frac{\lambda\left(1-\lambda q^{20k+7}\right)\left(1-\lambda q^{20k+17}\right)\left(1-\lambda q^{20k+21}\right)}{\prod_{i=0}^4\left(1-\lambda q^{12k+5+2i}\right)}.
\ee

\vspace{2mm}
{\bf $\bullet$ } The knot $10_{132}^+$, which has braid index $\iota=4$ but  has the same HOMFLY--PT polynomial as $T(2,5)$, can be obtained by a full twist on a 4-stranded version of $T(3,-5)$ in the form $\sigma_1(\sigma_3^{-1}\sigma_2^{-1})^5\otimes F_4$. 
It generates the family $10_{132}^{4k}=\sigma_1(\sigma_3\sigma_2)^{-5}\otimes (F_4)^{k+1}$ with 
HZ transform 
\be\label{1324k}
Z(10_{132}^{4k};\lambda,q)=\frac{\lambda\left(1-\lambda q^{-1+20k}\right)\left(1-\lambda q^{1+20k}\right)\left(1-\lambda q^{15+20k}\right)}{\prod_{i=0}^4\left(1-\lambda q^{12k-1+2i}\right)}.
\ee

 \vskip2mm
Next, we turn to families obtained by concatenation with the Jucys--Murphy's braid $E_m$. As we mentioned earlier, this will be braid word depended and hence for the resulting knot $\mathcal{K}\otimes E_m^k$ to be well defined, the former 
shall be specified in each case.

\vspace{2mm}
{\bf $\bullet$ } The family $\mathcal{K}_{j,k}:=\sigma_1\sigma_2^{-2j-1}\otimes E_3^k$ 
has HZ transform
\be\label{containT211}
    Z(\mathcal{K}_{j,k})=\frac{\lambda \left(1-\lambda q^{6(k-j)-3}\right) \left(1-\lambda q^{10k-2j+3}\right)}{\prod_{i=0}^3\left(1-\lambda q^{4k-2j-3+2i}\right)}.
\ee
Note that at $k=1$ this is just the pretzel family $P(\bar2,3,\overline{2j+1})$. 
At $k=2$, it includes $10_{161}$ at $j=0$, $12n_{749}$ at $j=1$, $10_{139}$ at $j=-1$, $12n_{242}$ at $j=-2$,  
while the knot corresponding to $j=2$ can also be obtained as $8_{20}\otimes8_{20}:=(\sigma_2^{-3}\sigma_1\sigma_2^3\sigma_1)(\sigma_2^3\sigma_1\sigma_2^{-3}\sigma_1)$
 and  to $j=3$ as $8_{20}\otimes10_{125}:=(\sigma_2^{-3}\sigma_1\sigma_2^3\sigma_1)(\sigma_2^3\sigma_1\sigma_2^{-5}\sigma_1)$ (c.f.  Table~\ref{fig:factTable}). The family $\mathcal{K}_{j+1,2}$ can alternatively  be obtained via a full twist $F_4$ on a 4-stranded $T(2,-(2j+1))$ in the form $\sigma_1^{-(2j+1)}\sigma_2\sigma_3^{-1}\otimes F_4$. 
Its HOMFLY--PT polynomial has the recursive formula
\ba\label{eq:recursivejk}
H(\mathcal{K}_{j,2})&=&a^{-2}H(\mathcal{K}_{j-1,2})+z^{2}a^{-2j-2}\sum_{i=1}^{j}a^{2i}H(\mathcal{K}_{i-1,2})\nonumber\\
&&-a^{-2j+2}(1-a^{2}+z^{2})H(5_1^+)-z^2a^{-2j}H(P(\bar{2},3,3)).
\ea
At $k=3$, the recursive formula becomes
\ba\label{eq:recursivej3}
&&H(\mathcal{K}_{j,3})=a^{-2}H(\mathcal{K}_{j-1,3})+z^{2}a^{-2j-2}\sum_{i=1}^{j}a^{2i}H(\mathcal{K}_{i-1,3})\\
&&-a^{-2j+4}(1-a^{2}+z^{2})H(7_1^+)-z^2a^{-2j+2}H(P(\bar{2},3,5))-z^2a^{-2j}H(P(\bar{2},3,7)\nonumber,
\ea
but it gets much more complicated for $k\geq 4$.
For general $k$, at $j=0$ the family can also be expressed as $5_{2}\otimes E_3^{k-1}:=(\sigma_2^3\sigma_1\sigma_2^{-1}\sigma_1)(\sigma_2^2\sigma_1^2)^{k-1}$, 
within which  
$5_2\otimes E_3^2=8_{20}\otimes10_{139}:=(\sigma_2^{-3}\sigma_1\sigma_2^3\sigma_1)(\sigma_2^4\sigma_1\sigma_2^3\sigma_1^2)$ (at $k=3$) 
  has the same HOMFLY--PT polynomial as $T(2,11)$. 
Moreover, at $j=1$ this corresponds to $8_{20}\otimes E_3^{k-1}:=(\sigma_1\sigma_2^3\sigma_1\sigma_2^{-3})(E_3)^{k-1}$, 
within which  
$8_{20}\otimes E_3^2=8_{20}\otimes10_{161}:=(\sigma_2^{-3}\sigma_1\sigma_2^3\sigma_1)(\sigma_2^3\sigma_1\sigma_2^{-1}\sigma_1\sigma_2^2\sigma_1^2)$ at $k=3$  (c.f.  Table~\ref{fig:factTable}).

\vspace{2mm}
 {\bf $\bullet$ } The family $12n_{318}\otimes(\tilde{E}_4)^{k}$, which is generated by $12n_{318}=5_2^-\otimes\tilde{E}_4$ $:=(\sigma_3\sigma_1^{-1}\sigma_2^{-2}\sigma_1\sigma_2^{-1}\sigma_1^{-1})\tilde{E}_4$  
 has HZ transform
\be
    Z(12n_{318}\otimes\tilde{E}_4^{k})=\frac{\lambda  \left(1-\lambda q^{-5+8k}\right)\left(1-\lambda q^{5+8k}\right) \left(1-\lambda q^{15+14k}\right)}{\prod_{i=0}^4 \left(1-\lambda q^{6k-1+2i}\right) }.
\ee

\vspace{2mm}
 {\bf $\bullet$ } The family $T(3,s)\otimes E_3^k:=(\sigma_1\sigma_2)^{s}(E_3)^k$ when $s\mod3=1$ 
has HZ transform
\be\label{t3sE3k}
    Z(T(3,s)\otimes E_3^k)= \frac{\lambda \left(1-\lambda q^{6k+4s-1}\right) \left(1-\lambda q^{10k+4s+1}\right)}{\prod_{i=0}^3\left(1-\lambda q^{4k+2s-3+2i}\right)}.
\ee
When $s\mod 3=2$, i.e. $s=3j-1$ ($j\geq 1$), the HZ transform for $ T(3,3j-1)\otimes E_3^k$ can still be obtained from the above formula since
$T(3,3j-1)\otimes E_3^k=T(3,3j+1)\otimes E_3^{k-1}$. 
The knots $T(3,2)\otimes E_3=T(3,4)$ and $T(3,4)\otimes E_3=12n_{242}$ also belong to the pretzel family $P(\overline{2},3,2j+1)$  (c.f.~(\ref{eq:Pmin23plus2j}) and   Table~\ref{fig:factTable}).
The knot $T(3,7)\otimes E_3$ corresponds to an $18$-crossing knot, which can equivalently be expressed in terms of  
torus knots with additional strands of the form $(\sigma_2\sigma_1)^4\sigma_3^{-1}$ or $\sigma_1^7\sigma_2\sigma_3^{-1}$ 
times a full twist $F_4$.

\vspace{2mm}
 {\bf $\bullet$ } The family $T(4,s)\otimes E_4^k$ when $s\mod 4=3$ has HZ transform
\be\label{T4se4k}
    Z(T(4,s)\otimes E_4^k)=\frac{\lambda \left(1-\lambda q^{8k+5s}\right) \left(1-\lambda q^{8k+5s+2}\right)\left(1-\lambda q^{14k+5s-2}\right)}{\prod_{i=0}^4\left(1-\lambda q^{6k+3s-4+2i}\right)}.
\ee
When $s\mod 4=1$, i.e. $s=4j+1$ ($j\geq 1$), we get $T(4,4j+1)\otimes E_4^k=T(4,4j-1)\otimes E_4^{k+1}$ since $T(4,4j-1)\otimes E_4=T(4,4j+1)$.

\vspace{2mm}
 {\bf $\bullet$ } The family $T(m,m+1)\otimes E_m^k:=(\sigma_1\cdots\sigma_{m-1})^{m+1}(E_m)^k$ which has arbitrary braid index $m$  
 has HZ transform\footnote{We were only able to confirm the validity of (\ref{T56e5}) for $m\leq 5$.} 
 \ba\label{T56e5}
    &&\hspace{-14mm}Z(T(m,m+1)\otimes E_m^k;\lambda,q)=\nonumber\\
    &&\frac{\lambda\left(1-\lambda q^{2(2m-1)k+m(m+3)-1}\right) \prod_{i=0}^{m-3}\left(1-\lambda q^{2mk+m(m+1)+3+2i}\right) }{\prod_{i=0}^{m}\left(1-\lambda q^{2(m-1)k+m(m-1)-1+2i}\right) }.
\ea

\vspace{2mm}
{\bf $\bullet$ } Some members in the family of \textbf{ twisted torus knots $T(m,n,p,pk)$}, which are obtained from a torus knot $T(m,n)$ by $k$ full twists on its last $p$ strands, also have a factorised HZ transform. Depending on the relative values of $(m,n,p,k)$ these contain many different kinds of knots (see e.g. \cite{LeetwistedTorus}). For instance, they are torus knots when $p=m$ (or $p=n$), cables of torus knots when $p$ is a multiple of $n$ (or $m$) and hyperbolic when $p(\neq m)$ is not a multiple of $n$ (to guarantee hyperbolicity, it is also required that  the number of full twists is $k\geq 3$).
An example of twisted torus knots which are hyperbolic  is the family $T(3,3j+1,2,2k)$, 
 for which
\be
Z(T(3,3j+1,2,2k);\lambda,q)=\frac{\lambda \left(1-\lambda q^{2k+12j+5}\right) \left(1-\lambda q^{6k+12j+3}\right)}{\prod_{i=0}^3\left(1-\lambda q^{2k+6j-1+2i}\right) }.
\ee
Note that $T(3,3j+2,2,2k)=T(3,3j+1,2,2(k+1))$, while the particular cases  $T(3,1,2,2)=3_1$ ($j=0,k=1$)  and  $T(3,4,2,4)=12n_{242}$ ($j=1,k=2$).
Another example of twisted torus knots with factorised HZ transform is the family $T(4,2j+1,3,3k)$, members of which are depicted in Fig.~\ref{fig:T(4,2j+1,3,3k)}
\begin{figure}[h!]
    \centering
   \begin{turn}{91}
    \includegraphics[width=0.15\textwidth]{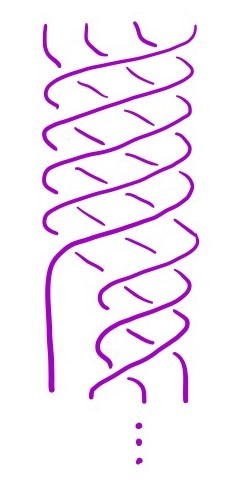}
    \end{turn}
    \caption{The twisted torus knot $T(4,5,3,3)$ is the closure of this braid. $T(4,5,3,3k)$ are obtained by inserting more 3-strand full twists at the place marked with 3 dots.}
    \label{fig:T(4,2j+1,3,3k)}
\end{figure}
and whose HZ transform for odd $j$ is
\be\label{eq:T(m,n,p,k)}
Z(T(4,2j+1,3,3k))=\frac{\lambda \left(1-\lambda q^{6k+10j+3}\right) \left(1-\lambda q^{12k+10j+5}\right)\left(1-\lambda q^{12k+10j+7}\right)}
    {\prod_{i=0}^4\left(1-\lambda q^{6(k+j)-1+2i}\right)}.
\ee 
For even $j$, the numerator exponents are changed to $\alpha_0=6k+10j+7$, $\alpha_1=12k+10j+3$ and $\alpha_2=12k+10j+5$. At $j=1$ (the factors with exponent $\alpha_0=\beta_4=6k+13$ cancel), these are just the torus knots $T(3,4+3j)$. They can be hyperbolic only when $\gcd(2j+1,3)=1$. Note that the twisted torus knots $T(5,7,2,2)$ and $T(5,7,3,3)$ have HZ transform that only partially factorises (admits quasi-factorisation), meaning that the numerator can be expressed as an overall factor $(1-\lambda q^\alpha)$ times a more general ($\lambda,q$)-polynomial.

\vskip 3mm
\subsection{Links\label{sec:factlinks}}

The unnormalised HOMFLY--PT polynomial for a link $\mathcal{L}=(\mathcal{K}_1,\mathcal{K}_2,...,\mathcal{K}_{2l})$ consisting of $2l$ components will be defined as
\be\label{SU(N)links}
\bar{H}(\mathcal{L})=\frac{a^{-1}-a}{q-q^{-1}}H(\mathcal{L}).
\ee
In particular, in this case the normalisation  has been multiplied by an overall minus sign, in order to remain consistent with our conventions\footnote{This ambiguity in the overall sign is an artifact of a discrepancy between conventions used in the literature. The fact that it can be ignored for links with odd number of components (including knots) is  because of the even parity in the powers of $a$ and $z$ that appear in $H$. } in (\ref{skein}).

\vskip2mm
 {\bf $\bullet$ } The   2-stranded torus links $T(2,2j)$ with parallel relative orientation\footnote{Throughout this paper, when we refer to torus links $T(m,n)$ we mean the links obtained as the closure of the braid $(\sigma_{m-1}\cdots\sigma_1)^n$, which naturally induces the parallel orientation on all components.} on the 2 components have a factorisable HZ transform. A simple example 
 at $j=1$ is shown in Fig.~\ref{fig:hopfLink}. This corresponds to the Hopf link $T(2,2)$, which is tabulated as $L2a1\{1\}$ in \cite{KnotInfo}, in which the notation   $\{\cdot\}$  denotes the choice of orientation. For $j>1$, the torus links with parallel orientation are tabulated as $L4a1\{1\}=T(2,4),$ $L6a3\{0\}^+=T(2,6),$ $L8a14\{0\}=T(2,8)$ and $L10a118\{0\}=T(2,10)$. 
\begin{figure}[h!]
    \centering
    \includegraphics[width=0.15\textwidth]{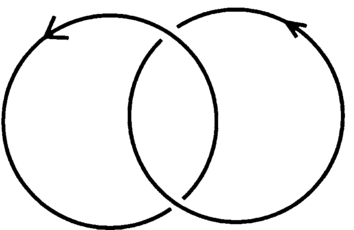}
    \caption{Hopf link $T(2,2)$ with parallel relative orientation on its components.}
    \label{fig:hopfLink}
\end{figure}
For these torus links the HZ transform reads
 \begin{equation}\label{T(2,2j)}
     Z(T(2,2j);\lambda,q)=\frac{\lambda\left(1+\lambda q^{6j}\right)}{\left(1-\lambda q^{2j-2}\right)\left(1-\lambda q^{2j}\right)\left(1-\lambda q^{2j+2}\right)}.
 \end{equation}
 The difference with the HZ function corresponding to 2-stranded torus knots $Z(T(2,2j+1))$  is the plus instead of minus sign that appears in the numerator. Note that in  the case of the Hopf link $T(2,2)$, the opposite orientation, i.e. $L2a1\{0\}$ results in the same HOMFLY--PT polynomial and hence its HZ transform still factorises, but this is not true for $j>1$. 
 The HZ transform is also  not factorisable for torus links with higher number of strands $m\geq3$, no matter the choice of relative orientation. For example,  the HZ transform of the 3-stranded torus link $T(3,3)=L6n1\{0,1\}$ 
 becomes 
 \be\label{L6n1}
 Z(L6n1\{0,1\};\lambda,q)= \frac{\lambda(1+ 2 q^{11}\lambda + 2 q^{13}\lambda + q^{24}\lambda^2)}{(1-q^3 \lambda)(1-q^5\lambda)(1-q^7\lambda)(1-q^9\lambda)}.\ee

Below, we give an exhaustive list  of non-torus links  with up to 11 crossings  (using the notation\footnote{As for knots (c.f. footnote~\ref{ft:mirror-knot}), the links referred to here might, in fact, correspond to the mirror image $\mathcal{L}^*$, as compared to the one listed in \cite{KnotInfo}. Their positivity/negativity is again denoted by a superscript $\pm$ and it is reflected in the signs of the HZ exponents.} 
of \cite{KnotInfo}) that admit a factorised HZ transform in the form
\be\label{eq:factorisableZZ}
Z(\mathcal{L};\lambda,q)=\frac{\lambda(1+\lambda q^{\alpha_0})(1-\lambda q^{\alpha_1})}{\prod_{i=0}^{3}(1-\lambda q^{e+2i})}.
\ee

\noindent{\bf $\bullet$ } $L7n1\{0\}^+
=T(3,3,2,1)=P(3,\bar2,2)$; $\alpha_0=12$, $\alpha_1=16$; $e=4$\\
 {\bf $\bullet$ } $L7n2\{0\} \& L7n2\{1\}=P(3,\bar2,\bar2)$;
 $\alpha_0=0$; $\alpha_1=-12;e=-6$\\
 {\bf $\bullet$ } $L9n12\{1\}$;  $\alpha_0=12$, $\alpha_1=24$; $e=6$\\
 {\bf $\bullet$ } $L9n14\{0\}=P(3,\bar2,\bar4)$; 
 $m=3;\alpha_0=-6,\alpha_1=10;e=-2$\\
 {\bf $\bullet$ } $L9n15\{0\}^+
 =T(3,4,2,1)=P(3,\bar2,4)$;  $\alpha_0=18$, $\alpha_1=18$; $e=6$\\
 {\bf $\bullet$ } $L10n42\{1\}$;  $\alpha_0=-6$, $\alpha_1=10$; $e=-2$\\
 {\bf $\bullet$ } $ L11n132\{0\}^+
 $; $\alpha_0=6$, $\alpha_1=22$; $e=4$\\
{\bf $\bullet$ } $ L11n133\{0\}^+
$; $\alpha_0=18$, $\alpha_1=26$; $e=8$ \\
{\bf $\bullet$ } $L11n204\{0\}^+
=T(3,5,2,1)=P(3,\bar2,6)$; $\alpha_0=24$, $\alpha_1=20$; $e=8$ \\
{\bf $\bullet$ } $L11n208\{0\}=P(3,\bar2,\bar6)$; $\alpha_0=-12$, $\alpha_1=8$; $e=-4$. 
\vskip2mm
\noindent{}For all  the links listed above  the parameter $m$ is equal to $3$, coinciding with their braid index, with the only exception being $L10n42\{1\}$, which has braid index $\iota=4\neq m $. They all consist of  two components. For the links $L9n12$, $L11n132$ and $L11n133$ these are $(\bigcirc,5_1)$,  
while for the remaining ones they are $(\bigcirc,3_1)$. They are all hyperbolic links, with the exception of the twisted torus links $T(3,n,2,1)$, which have vanishing hyperbolic volume (i.e. their complement does not admit a hyperbolic structure). Their connections via full twists and Jucys-Murphy's braids can be seen in Table~\ref{fig:link map}.

\vskip2mm
\begingroup
  \begin{table}[h!]
  \caption{Factorisability map for 2-component links with $m=3$. 
     Links in this table can be obtained from each other in the horizontal direction by introducing a \textcolor{Purple}{Jucys--Murphy twist $E_m$}, 
     or in the diagonal direction by a \textcolor{teal}{full twist $F_m$}, as indicated by the arrows.}
      \centering

    \setlength{\tabcolsep}{20pt}
\renewcommand{\arraystretch}{2}
\setlength\extrarowheight{-4pt}
{\footnotesize
      \begin{NiceTabular}{|c|ccc|}
      \hline    
\rowcolor{gray!15}&&\tikzmarknode{d}{$\mathcal{L}_{j+1,1}$}&\tikzmarknode{e}{$\mathcal{L}_{j+1,2}$}\\
         j  & $\sigma_1\sigma_2^{2j}$\tikzmarknode{a}{} 
         &\tikzmarknode{b}{$\mathcal{L}_{j,1}=P(\bar 2,3,2j)$}\tikzmarknode{f}{}& \tikzmarknode{c}{$\mathcal{L}_{j,2}$} \\[8pt]
          \hline
          \hline
          \midrowspace
          \midrowspace
          \midrowspace
          3& $\sigma_1\sigma_2^{6}$& $L11n_{204}\{0\}$&\\
          \rowcolor{gray!15} 2& $\sigma_1\sigma_2^{4}$& $L9n_{15}\{0\}$&\\
            1& $\sigma_1\sigma_2^{2}$& $L7n_{1}\{0\}$&$L11n_{133}\{0\}$\\
       \rowcolor{gray!15} 0& $\sigma_1\sigma_2^{0}$& $T(2,2)\#T(2,3)$&$L9n_{12}\{0\}$\\
        -1& $\sigma_1\sigma_2^{-2}$& $L7n_{2}$&$L11n_{132}\{0\}$\\
       \rowcolor{gray!15} -2& $\sigma_1\sigma_2^{-4}$& $L9n_{14}\{0\}$&\\
        -3& $\sigma_1\sigma_2^{-6}$& $L11n_{208}\{0\}$&\\
        \hline
          
      \end{NiceTabular}
      \begin{tikzpicture}[overlay, remember picture,
                    shorten >=4pt, shorten <=4pt]
                    ]
\draw [violet,  thick, ->] (a) --node[midway, below, violet] {$E_3$} (b);
\draw [violet,  thick, ->] (f) --node[midway, below, violet] {$E_3$} (c);
\end{tikzpicture}
\begin{tikzpicture}[overlay, remember picture,
                    shorten >=4pt, shorten <=4pt]
                    ]
\draw [teal,  thick, ->] (a) --node[midway, above, teal] {$F_3$} (d);
\draw [teal,  thick, ->] (f) --node[midway, above, teal] {$F_3$} (e);
\end{tikzpicture}}
    \label{fig:link map}
    \vskip2mm
  \end{table}
\endgroup

{\bf $\bullet$} The pretzel family $P(3,\bar2,2j)=(\bigcirc,3_1)$ which contains $L7n1\{0\}=T(3,3,2,1)$, $L9n15\{0\}=T(3,4,2,1)$  and 
 $L11n204\{0\}=T(3,5,2,1)$ at $j=1,2,3$, respectively (but $P(3,\bar2,2j)\neq T(3,j+2,2,1)$ for $j>3$)  and $L7n2$, $L9n14$, $L11n208$ for $j=-1,-2,-3$ has HZ transform 
 \be
 Z(P(3,\bar2,2j)) = \frac{\lambda(1+ q^{6(1+j)}\lambda)(1-q^{14+2j}\lambda)}{(1-q^{2+2j}\lambda)(1-q^{4+2j}\lambda)(1-q^{6+2j}\lambda)(1-q^{8+2j}\lambda)}.
 \ee

{\bf $\bullet$} The family $\mathcal{L}_{j,k}=\sigma_1\sigma_2^{2j}\otimes E_3^k=(\bigcirc,3_1)$, which is obtained by $k$ Jucys--Murphy braids  and contains $P(3,\bar2,2j)$ at $k=1$, has HZ transform 
 \be
 Z(\mathcal{L}_{j,k}) = \frac{\lambda(1+ q^{6(k+j)}\lambda)(1-q^{10k+2j+4}\lambda)}{(1-q^{2j+4k-2}\lambda)(1-q^{2j+4k}\lambda)(1-q^{2j+4k+2}\lambda)(1-q^{2j+4k+4}\lambda)}.
\ee

{\bf $\bullet$} The family of twisted torus links $T(3,n,2,1)=(\bigcirc,3_1)$ with $n\geq3$, 
has HZ transform when $n\mod 3=0$
 \be
 Z(T(3,n,2,1)) = \frac{\lambda(1+ q^{4n}\lambda)(1-q^{4(n+1)}\lambda)}{(1-q^{2n-2}\lambda)(1-q^{2n}\lambda)(1-q^{2n+2}\lambda)(1-q^{2n+4}\lambda)},
 \ee
 while  the numerator exponents become $\alpha_0=\alpha_1=4n+2$ when  $n\mod 3=1$ and $\alpha_0=4(n+1)$, $\alpha_1=4n$, when $n\mod 3=2$. Additional full twists $T(3,n,2,1)\otimes F_3^k$  just yield $T(3,n+3k,2,1)$. On the other hand, by concatenating with $E_3$ we obtain:\\
 $\bullet$ $T(3,3,2,1)\otimes E_3^k:=(\sigma_2\sigma_1)^3\sigma_2\otimes E_3^k=(\sigma_1\sigma_2)^3\sigma_2\otimes E_3^k$; $\alpha_0=6(k+2)$, $\alpha_1=10k+16$, $e=4k+4$. \\
  $\bullet$ $T(3,4,2,1)\otimes E_3^k:=(\sigma_2\sigma_1)^4\sigma_2\otimes E_3^k$; $\alpha_0=6(k+3)$, $\alpha_1=10k+18$, $e=4k+6$. \\
    $\bullet$ $T(3,5,2,1)\otimes E_3^k:=
(\sigma_1\sigma_2)^2\sigma_1^2(\sigma_2\sigma_1)^2\sigma_2
    \otimes E_3^{k}$: $\alpha_0=6(k+4)$, $\alpha_1=10k+20$, $e=4k+8$, while $(\sigma_2\sigma_1)^5\sigma_2\otimes E_3$ is not factorisable.\\

 {\bf $\bullet$} The family\footnote{The orientation in this case does not need to be specified since $H(L7n2\{0\})=H(L7n2\{1\})$. Note moreover that $L7n2^-\otimes F_3^k$ are just the torus links $T(2,4k)$.} $L7n2^+\otimes F_3^k=(T(2,2k+3),\bigcirc)$ and has the HZ function
 \be
 Z(L7n2^+\otimes F_3^k) = \frac{\lambda(1+ q^{12k}\lambda)(1-q^{12(k+1)}\lambda)}{(1-q^{6k}\lambda)(1-q^{6k+2}\lambda)(1-q^{6k+4}\lambda)(1-q^{6k+6}\lambda)}.
 \ee

 {\bf $\bullet$} The family $L9n14\{0\}\otimes F_3^k$ has HZ  transform
  \be
 Z(L9n14\{0\}\otimes F_3^k) = \frac{\lambda(1+ q^{12k-6}\lambda)(1-q^{12k+10}\lambda)}{(1-q^{6k-2}\lambda)(1-q^{6k}\lambda)(1-q^{6k+2}\lambda)(1-q^{6k+4}\lambda)},
 \ee
 while for $L9n14\{0\}^+\otimes E_3^k:=\sigma_{1}^{-1} \sigma_{2} \sigma_{1} \sigma_{2}^{-2} \sigma_{1} \sigma_{2} \sigma_{1}^{-1} \sigma_{2}\otimes E_3^k$ is becomes
 \be
 Z(L9n14\{0\}\otimes E_3^k) = \frac{\lambda(1+ q^{6(k-1)}\lambda)(1-q^{10(k+1)}\lambda)}{(1-q^{4k-2}\lambda)(1-q^{4k}\lambda)(1-q^{4k+2}\lambda)(1-q^{4k+4}\lambda)}.
 \ee

{\bf $\bullet$} The 2-component twisted torus links $T(4,4k+1,2,1)=(T(3,3k+1),\bigcirc)$ with braid index 4, an example of which is shown in Fig.~\ref{fig:t4521},
 have HZ transform 
\be\label{T4n21}
 Z(T(4,4k+1,2,1)) = \frac{\lambda(1+q^{6(3k+1)}\lambda)(1- q^{6(3k+1)}\lambda)(1-q^{8(3k+1)}\lambda)}{\prod_{i=0}^4(1-q^{12k+2i}\lambda)}.
 \ee
  \begin{figure}[h!]
     \centering
\includegraphics[width=0.14\linewidth,angle=91]{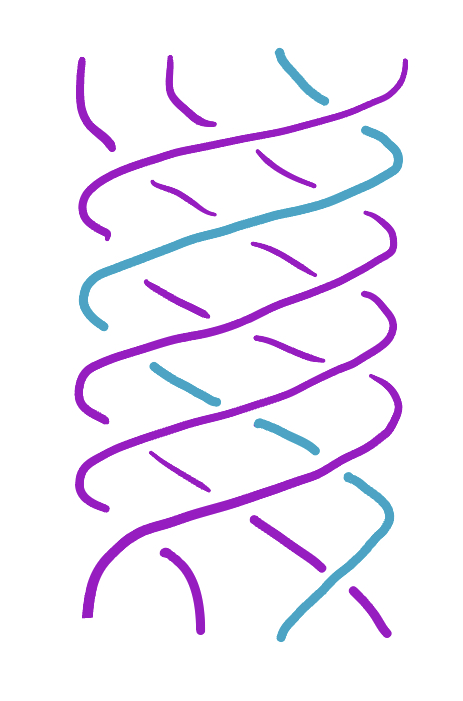}
     \caption{The  twisted torus link $T(4,5,2,1)$ is the closure of the braid $(\sigma_3\sigma_2\sigma_1)^5\sigma_2$. The two components $(\textcolor{Fuchsia}{T(3,4)},\textcolor{TealBlue}{\bigcirc})$ of the link are depicted with different colors.}
     \label{fig:t4521}
 \end{figure}
 
The HZ transform for $T(4,5,2,3)$ and $T(4,n,2,1)$ for  $ n=4,6,7$  are only quasi-factorisable,  while for $n=8$ is completely not factorasibale. Moreover, $Z(T(5,7,2,1))$ is again quasi-factorisable, while $Z(T(5,6,2,1))$ contains two overall factors $(1 - \lambda q^{34}) (1 - \lambda q^{36})$ in the numerator.

Note that, at $q=1$, all the above formulas become $\lambda(1+\lambda)/(1-\lambda)^3 = \sum_{N=0}^\infty N^2 \lambda^N.$ 
Moreover, for  the 3-component link  $L6n1\{0,1\}$ whose HZ transform  given in (\ref{L6n1}),  at $q\to 1$ it can be expanded as a power series as
\be\label{6n1}
Z(\mathcal{L};\lambda,q=1) = \frac{\lambda(1+4 \lambda + \lambda^2)}{(1-\lambda)^4} =\sum_{N=0}^{\infty}N^3 \lambda^N.
\ee 
\begin{prop}\label{prop:Zq1}
 The HZ transform at $q=1$ for a link $\mathcal{L}$ with $l$ components ($l\geq 1$), is expressed as
\be
Z(\mathcal{L};\lambda,q=1)= \sum_{N=0}^{\infty} N^l \lambda^N. 
\ee
\end{prop}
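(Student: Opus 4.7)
The proof strategy is to show that the summand $\bar H(\mathcal{L};q^N,q)$ evaluated at $q=1$ simplifies to $N^l$, independent of the particular link beyond its component count. Substituting this into the definition (\ref{KhovanovC}) then yields the claimed generating function immediately.

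First, multiply the skein relation (\ref{skein}) by the link-independent normalisation factor appearing in (\ref{SU(N)invariant}) and set $a=q^N$, $z=q-q^{-1}$, to obtain a skein-type identity directly for $\bar H$. At $q=1$ the right-hand side picks up the factor $z=q-q^{-1}=0$, so the identity collapses to $\bar H(L_+;q^N,q)|_{q=1}=\bar H(L_-;q^N,q)|_{q=1}$, once any parity-dependent sign adjustment from (\ref{SU(N)links}) has been tracked through (the parities of $L_\pm$ agree, while that of $L_0$ differs, but the $L_0$ term disappears regardless). Hence at $q=1$ the evaluation of $\bar H$ is invariant under crossing changes, and therefore depends only on the number of components, since any $l$-component link can be reduced to the $l$-component unlink $U_l$ by a finite sequence of crossing changes together with Reidemeister moves.

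It then remains to compute $\bar H(U_l;q^N,q)|_{q=1}$. Since $\bar H(\cdot\,;q^N,q)$ coincides with the quantum $SU(N)$ invariant in the fundamental representation, the factorisation of the Chern--Simons partition function on a disjoint union gives $\bar H(U_l;q^N,q)=[N]_q^{\,l}$, where $[N]_q=(q^N-q^{-N})/(q-q^{-1})$ is the quantum integer; alternatively this identity can be derived inductively by resolving a positive kink introduced on one component via the skein identity above and using $\bar H(\mathrm{unknot};q^N,q)=[N]_q$. Setting $q=1$ yields $[N]_1=N$, so $\bar H(\mathcal{L};q^N,q)|_{q=1}=N^l$ for every $l$-component link, and substituting into (\ref{KhovanovC}) produces $Z(\mathcal{L};\lambda,1)=\sum_{N=0}^\infty N^l \lambda^N$. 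The main subtlety is the bookkeeping of the sign convention in (\ref{SU(N)links}): skein triples mix links of different parities, and the $(-1)^{l-1}$ arising from iterating the quantum dimension $(a^{-1}-a)/z$ over the components of $U_l$ must cancel precisely against the parity-dependent sign built into (\ref{SU(N)links}) so that the result is unambiguously $+[N]_q^{\,l}$, and in the $q=1$ limit $+N^l$.
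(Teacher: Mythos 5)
Your proposal is correct and follows essentially the same route as the paper: both arguments set $q=1$ in the skein relation so that the smoothing term vanishes, conclude that $\bar H$ at $q=1$ is invariant under crossing changes and hence reduces to the $l$-component unlink, and then evaluate the unlink as $\bigl(\tfrac{a-a^{-1}}{q-q^{-1}}\bigr)^{l}$ (one factor from each extra component plus one from the normalisation), which tends to $N^l$ as $q\to 1$. Your extra bookkeeping of the parity-dependent sign in (\ref{SU(N)links}) is a welcome clarification of a point the paper's proof passes over silently, but it does not change the substance of the argument.
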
 
\begin{proof}
    By the definition (\ref{KhovanovC}) of the HZ transform, the coefficient of $\lambda^N$ in the r.h.s is equal to the unnormalised HOMFLY--PT polynomial $\bar{H}$. To see that this becomes equal to $N^l$ in the limit $q\to 1$, we observe that in the skein relation (\ref{skein}) for the HOMFLY--PT polynomial the factor $z=q-q^{-1}$ vanishes, and hence there are no contributions from  smoothings of a crossing. The switching of over- to under-crossing contributes a factor $1=q^N\big\vert_{q=1}$, and since this operation leaves the number of components $l$ invariant the computation will stop at an $l$-component unlink. Since the latter contributes a factor $(\frac{a-a^{-1}}{q-q^{-1}})^{l-1}$ and noting that there is an additional  factor $\frac{a-a^{-1}}{q-q^{-1}}$ due to the normalisation included in $\bar{H}$ and since $\lim_{q\to 1}\frac{a-a^{-1}}{q-q^{-1}}=N$, this concludes the proof.
\end{proof}

\subsection*{Morton-Franks-Williams inequality}\label{rmk:MFW}
Before we conclude this section, it is important to elaborate on the topological significance of the parameter $m$ 
that appears in the HZ transform (\ref{eq:factorisableZZ}). In particular, it  is related to the $a$-span in the HOMFLY--PT polynomial of a knot or link and according to the \emph{Morton-Franks-Williams (MFW) inequality} \cite{Morton, Franks}, it
provides a lower bound of its braid index\footnote{The braid index is equal to the least number of Seifert circles in any of the knot's projections.} $\iota$. Explicitly, $\iota\ge \frac{1}{2}(E_h-E_l)+1$, where $E_l$ and $E_h$ are the lowest and highest exponents of $a$ in the normalised HOMFLY--PT polynomial $H(a,z)$. In terms of  the HZ parameter $m$, the MFW invequality can be simply rephrased as 
\be
\iota\geq m.
\ee

Although  the inequality is sharp for most knots, there exist 5  exceptional cases   with up to $10$ crossings, namely $9_{42},9_{49}, 10_{132},10_{150}$ and $10_{156}$, which  all have braid index $\iota=4$ (see  the braid index table 15.9 of  \cite{Jones} and the comments therein). The reason  this occurs  for the knots $10_{132}$ and $10_{156}$, is that the HOMFLY--PT polynomial can not distinguish them from other knots that have  a lower braid index and for which the inequality is sharp. In particular, $H(10_{132})=H(5_1)$ and $H(10_{156})=H(8_{16})$, which hence determines $m(10_{132})=\iota(5_1)=2$ and $m(10_{156})=\iota(8_{16})=3$, which are smaller than their actual braid index. Similarly, for the two-component link $L10n42$ holds\footnote{These two links   can be distinguished, however,  by  the Kauffman polynomial and the multi-variable Alexander polynomial \cite{KnotInfo}.}

Such coincidences, may be interpreted through  the HZ transform to be 
  a result of a cancellation between several factors of the form  $(1-\lambda q^{\beta_i})$, which occur both in  the numerator and denominator. In the case of $10_{132}$ for which the HZ is factorised, this can be clearly seen from $Z(10_{132}^{4k})$ in (\ref{1324k}) at $k=0$, in which the factors $(1-\lambda q)$ and $(1-\lambda q^{-1})$ cancel. Prior to  this cancellation $m=4$ is equal to the braid index $\iota(10_{132})$. 
  Beyond 10-crossings, the same holds, for example, for the knots  $12n_{749}=\mathcal{K}_{1,2}$ and $8_{20}\otimes10_{139}=\mathcal{K}_{0,3}$,  
  as their HOMFLY--PT polynomials are equal to the ones for\footnote{Classifying hyperbolic knots which have the same HOMFLY--PT polynomial as a torus knot, which hence are also cases of non-exact MFW inequality, seems to be  an interesting task that we aim to address in the future.} $T(2,7)$ and $T(2,11)$, respectively, and  hence they both have $m=2$. Their braid index $\iota=3$ coincides with $m$ in 
  (\ref{containT211}), in which, indeed, a factor cancels at the values $(j=1,k=2)$ and $(j=0,k=3)$. 
   Similarly, for the remaining cases which do not admit HZ factorisability, it can be understood that the HZ transform is quasi-factorisable, but the overall factor cancels out. By considering the corresponding families $\mathcal{K}\otimes F_{\iota}^ k$ obtained by full twists, we can deduce what those "missing factors" are. In particular, we find $(1-\lambda q^5)$ for $9_{42}$, $(1-\lambda q^{11})$ for $9_{49}$,  $(1-\lambda q^{9})$ for $10_{150}$, $(1-\lambda q^7)$  for $10_{156}$ and $(1-\lambda q^6)$ for $L10n42$\footnote{Note that $Z(L10n42\otimes F_4^k)$ is only quasi-factorisable at $k=1$ and non-factorisable for $k>1$.}. Notwithstanding, the HOMFLY--PT polynomial itself is, in fact, oblivious to this information, and recovering it in this way through the HZ transform requires prior knowledge of the braid index $\iota$ in each case.

\section{Closed formulas for knot polynomials via inverse HZ transform\label{sec:InverseHZ}}
The HOMFLY--PT is a lengthy and complicated polynomial  for knots and links with a high number of crossings, and closed formulas for it exist only for special classes of knots and links, such as the torus knots \cite{RossoJones}. However, in the previous section we have shown that in specific cases the HZ transform of the HOMFLY--PT polynomial takes a remarkably simple, factorised form, that can be easily extrapolated to include members of infinite families. Hence, it is interesting to write down closed formulas for the HOMFLY--PT polynomial of such families, given their HZ function $Z(\mathcal{K})$. This can be achieved by employing the inverse HZ transform, according to the following theorem.

\begin{thm}\label{thmInverseHZ}
The HOMFLY--PT polynomial $H({\mathcal{K}};q^N,q)$  can be obtained from $Z({\mathcal{K}};\lambda,q)$ via a contour integral around the pole at $\lambda=0$ as
\be\label{InverseHZ}
H({\mathcal{K}};q^N,q)= \frac{q-q^{-1}}{q^N-q^{-N}} \frac{1}{2 \pi i} \oint_{\{\lambda=0\}} d\lambda \frac{1}{\lambda^{N+1}} Z({\mathcal{K}};\lambda,q)
\ee
Alternatively, choosing a contour enclosing all the simple poles at $\lambda = q^{-\beta_i}$, it can be expressed as
\be\label{AlternativeJonescontour}
H({\mathcal{K}};q^N,q)= -\frac{q-q^{-1}}{q^N-q^{-N}} \frac{1}{2\pi i}\ \oint_
{\{all \hskip 1mm poles\hskip 1mm \lambda=q^{-\beta_i}\}} d \lambda \frac{1}{\lambda^{N+1}} Z(\mathcal{K};\lambda,q),
\ee
\end{thm}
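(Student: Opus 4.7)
The plan is to treat $Z(\mathcal{K};\lambda,q)$ as a generating function in $\lambda$ and extract its coefficients by the standard residue-at-zero formula, then recover the normalised HOMFLY--PT polynomial by dividing out the overall quantum-integer factor that distinguishes $H$ from $\bar H$.

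For the first formula, I would start from the defining expansion (\ref{KhovanovC}), namely $Z(\mathcal{K};\lambda,q)=\sum_{N=0}^\infty \bar H(\mathcal{K};q^N,q)\,\lambda^N$. This series is convergent for $|\lambda|$ sufficiently small (assuming $|\lambda q^\beta|<1$ for all relevant exponents $\beta$, as noted in the discussion of (\ref{HZ2})), so $Z$ is holomorphic on a punctured neighbourhood of $\lambda=0$. A standard application of Cauchy's integral formula then gives
\be
\bar H(\mathcal{K};q^N,q)=\frac{1}{2\pi i}\oint_{\{\lambda=0\}}\frac{Z(\mathcal{K};\lambda,q)}{\lambda^{N+1}}\,d\lambda,
\ee
and multiplying both sides by $(q-q^{-1})/(q^N-q^{-N})$ yields (\ref{InverseHZ}) by virtue of the identification $\bar H=\tfrac{q^N-q^{-N}}{q-q^{-1}}H$ inherited from (\ref{SU(N)invariant}).

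For the alternative contour (\ref{AlternativeJonescontour}), the idea is to apply the residue theorem on the Riemann sphere. The poles of $Z(\mathcal{K};\lambda,q)/\lambda^{N+1}$ in the finite plane consist of the pole at $\lambda=0$ and the simple poles at $\lambda=q^{-\beta_i}$ coming from the denominator of (\ref{eq:factorisableZ}). Since $Z$ is rational with numerator of $\lambda$-degree $m$ and denominator of $\lambda$-degree $m+1$ (counting the explicit factor of $\lambda$ together with the $m-1$ factors $(1-\lambda q^{\alpha_i})$ against the $m+1$ factors $(1-\lambda q^{\beta_i})$), the integrand behaves as $O(\lambda^{-N-2})$ as $\lambda\to\infty$, so its residue at infinity vanishes for every $N\geq 0$. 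The sum of all residues on $\mathbb{CP}^1$ being zero, we obtain
\be
\mathrm{Res}_{\lambda=0}\!\left[\frac{Z}{\lambda^{N+1}}\right]=-\sum_i \mathrm{Res}_{\lambda=q^{-\beta_i}}\!\left[\frac{Z}{\lambda^{N+1}}\right],
\ee
which, combined with (\ref{InverseHZ}), gives (\ref{AlternativeJonescontour}) directly with the displayed minus sign.

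The only delicate point is the degree-counting at infinity, which relies on the explicit factorised form (\ref{eq:factorisableZ}). More generally, a non-factorised $Z$ would still be rational with numerator degree one less than denominator degree (as one sees by writing $Z$ in partial fractions in $\lambda$ after substituting $a^\beta\mapsto(1-\lambda q^\beta)^{-1}$ in each monomial of $\bar H$), so the vanishing of the residue at infinity is automatic in every case of interest and the argument works in full generality. No further subtlety arises, since the poles at $\lambda=q^{-\beta_i}$ are simple by inspection of (\ref{eq:factorisableZ}).
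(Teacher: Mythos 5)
Your proposal is correct and follows essentially the same route as the paper: the first formula is just the Cauchy coefficient formula for the Maclaurin series $Z=\sum_N \bar H(q^N,q)\lambda^N$ (the paper dresses this up as an inverse discrete Laplace/Mellin transform, but the content is identical), followed by division by the normalisation factor $(q^N-q^{-N})/(q-q^{-1})$. For the second formula the paper merely asserts that the residue at an interior pole equals minus the sum of the residues of the exterior poles; your explicit degree count showing that $Z/\lambda^{N+1}=O(\lambda^{-N-2})$ at infinity, so that the residue at infinity vanishes for all $N\ge 0$ both in the factorised case and for a general rational $Z$, supplies exactly the justification the paper leaves implicit.
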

\begin{proof}
The way to recover the HOMFLY--PT polynomial from the HZ function is by applying the  inverse HZ transform, i.e. by  inverse Laplace transform. First, notice that the HZ function in (\ref{KhovanovC}) is a Maclaurin series with coefficients the unnormalised HOMFLY--PT polynomial $
\bar{H}(q^N,z)=\sum_{i}N_iq^{N\beta_i}z^{c_i}$, ($z=q-q^{-1}$), with $N_i,\beta_i, c_i\in\mathbb{Z}$. The radius of convergence $\rho$ of this series is given by $1/\rho=\lim_{N\to\infty}|\bar{H}(N+1)|/|\bar{H}(N)|$, which gives $\rho=|q^{-\beta_k}|$ for a fixed $k$ s.t. $\beta_k>\beta_i\;\forall i$. Hence $Z$ is analytic in the disk 
$\mathcal{C}=\{\lambda\in\mathbb{C}:|\lambda|<|q^{-\beta_k}|\}$. 
To understand the equivalence of $Z$ with a discrete Laplace transform we set $\lambda=e^{-s}$ for some complex parameter $s=\gamma+i\sigma$ ($\gamma,\sigma\in \mathbb{R}$), with which it becomes periodic $Z(s)=Z(s+2\pi i)$ and it  is analytic for $\gamma>\ln 1/\rho=\ln |q^{\beta_i}|$. Then, from the inverse Laplace (or Mellin) transform we get $\bar{H}(N)=(2\pi i)^{-1} \int_{\gamma-i\infty}^{\gamma+i\infty} Z(s)e^{sN} ds$. After the change of variables $\lambda=e^{-s}$ this gives the contour integral in (\ref{InverseHZ}) evaluated on a counterclockwise contour including only the  pole of order $N$ at $\lambda=0$, which is the only one lying in the domain of analyticity $\mathcal{C}$. Finally, dividing by the normalisation factor $(q^N-q^{-N})/z$ gives the normalised HOMFLY--PT polynomial.  The alternative contour integral in (\ref{AlternativeJonescontour}) utilises the fact that the residue of a pole inside a given contour is equal to minus the sum of residues of the poles lying outside of it.
\end{proof}

As an application of the above theorem, the HOMFLY--PT polynomial  of knots that admit HZ factorisability, i.e. satisfy (\ref{eq:factorisableZ}), can be written in terms of the HZ parameters $m$, $\{\alpha_i\}$ and $\{\beta_i\}$    (substituting $a=q^N$) as
\be\label{FactorisedHOMFLY}
\boxed{H({\mathcal{K}};a,q)= \frac{q-q^{-1}}{a-a^{-1}}\sum_{j=0}^m \frac{\prod_{i=0}^{m-2}(1-q^{\alpha_i-\beta_j})}{\prod_{i=0,i\neq j}^{m}(1-q^{2(i-j)})}q^{-\beta_j}a^{\beta_j}.}
\ee

\noindent{}Some explicit examples follow.

$\bullet$
 The two-strand torus knots $T(2,n)$ have HZ parameters $\alpha_0=3n$, $\beta_i=n-2+2i$. 
 From the above theorem, their HOMFLY--PT polynomial can be derived as \be
 H(T(2,n); a,q) = \frac{a^{n-1}(q^{n+1}-q^{-(n+1)}) - a^{n+1}(q^{n-1}-q^{-(n-1)})}{q^2-q^{-2}},\ee
 in agreement with the Rosso--Jones formula \cite{RossoJones}. This can be written as a polynomial in $a$ and $z=q-q^{-1}$ after the substitutions $q=(z+\sqrt{z^2+4})/2$ and $ q^{-1}= (z-\sqrt{z^2+4})/2$.
 
$\bullet$ The Pretzel knots $P(\bar2, 3,2j+1)$ have HZ transform (\ref{eq:Pmin23plus2j}) with parameters $\alpha_0=n+9$, $\alpha_1=3n-9$, $\beta_i=n-3+2i$, where $n= 6 + 2j$. 
Using (\ref{FactorisedHOMFLY}), we obtain
\ba
H(P(\bar2, 3,2j+1);a,q)&=& \frac{a^{n-2}}{q^{n+6}(q^4-1)}\bigg(q^2(1+q^6)(q^{2n}-q^6)\nonumber\\
&&\hspace{-30mm}+ a^2 (q^{10}(1+q^2+q^6)-q^{2n}(1+q^4 + q^6))+a^4 q^2 (q^{2n}-q^{12})\bigg).
\ea 
$\bullet$ For the family $T(3,s)\otimes E_3^k$ with $s\mod3=1$, the HOMFLY--PT polynomial can be derived from (\ref{t3sE3k}) via the above formula 
to be 
\ba
H(T(3,s)\otimes E_3^k)\hspace{-2mm}&=&\hspace{-2mm}\frac{(a/q)^{4 k + 2 s -2}}{(1-q^2)^2 (1+q^2)(1+q^2+q^4)} \bigg( 1- a^2 q^2 - a^2 q^4 + a^4 q^6\nonumber\\
&&\hspace{-20mm}+ q^{2 k + 2 s}(a^2 - q^2 + a^2 q^4 - a^4q^2)+ q^{6 k + 2 s + 2} ( a^2 - q^2 + a^2 q^4 -a^4q^2)\nonumber\\
&&\hspace{-20mm}+ q^{8 k + 4 s} (a^4 -a^2 q^2 + q^6 - a^2 q^4)\bigg).
\ea  
\vskip2mm
The HOMFLY--PT reduces to the Jones polynomial $J(q^2)=H(q^2,q-q^{-1})$ when $N=2$.
Therefore we deduce the following
\begin{cor}\label{corInverseJones}
\noindent{}The Jones polynomial $J(\mathcal{K};q^2)$ can be expressed as the following contour integral of $\lambda^{-3}Z$ around the double pole at $\lambda=0$,
\be\label{JonesContour}
J(q^2) =\frac{1}{q+ q^{-1}} \frac{1}{2\pi i}\ \oint_
{\{\lambda=0\}} d \lambda \frac{1}{\lambda^3} Z(\mathcal{K};\lambda,q).
\ee
\end{cor}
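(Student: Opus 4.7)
The plan is to obtain the corollary as a direct specialisation of Theorem~\ref{thmInverseHZ} at $N=2$, combined with the identification of the HOMFLY--PT polynomial with the Jones polynomial at that value.

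First, I would recall from the introduction that setting $N=2$ in $H(\mathcal{K};q^N,q)$ (equivalently, $a=q^2$ in $H(\mathcal{K};a,z)$ with $z=q-q^{-1}$) reduces the HOMFLY--PT polynomial precisely to the Jones polynomial $J(\mathcal{K};q^2)$. So the content of the corollary amounts to evaluating the inverse HZ formula (\ref{InverseHZ}) at $N=2$.

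Next, I would substitute $N=2$ directly into (\ref{InverseHZ}), which yields
\begin{equation*}
J(q^2) \;=\; H(\mathcal{K};q^2,q) \;=\; \frac{q-q^{-1}}{q^2-q^{-2}} \,\frac{1}{2\pi i}\oint_{\{\lambda=0\}} d\lambda\, \frac{1}{\lambda^{3}}\, Z(\mathcal{K};\lambda,q).
\end{equation*}
The only remaining simplification is the prefactor: since $q^2-q^{-2}=(q-q^{-1})(q+q^{-1})$, one has $\frac{q-q^{-1}}{q^2-q^{-2}}=\frac{1}{q+q^{-1}}$, which produces exactly (\ref{JonesContour}). The integrand $\lambda^{-3}Z(\mathcal{K};\lambda,q)$ inherits the pole of order $N+1=3$ at $\lambda=0$ from the Maclaurin expansion of $Z$, so the residue calculation picks up the coefficient of $\lambda^{2}$ in $Z$, which is $\bar{H}(\mathcal{K};q^2,q)$, consistently with (\ref{SU(N)invariant}) at $N=2$.

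There is essentially no obstacle here beyond bookkeeping, because all the analytic content (the domain of analyticity of $Z$, the choice of contour, and the equivalence between the discrete Laplace inversion and the contour integral) has already been established in the proof of Theorem~\ref{thmInverseHZ}. The corollary is purely a corollary in the literal sense: evaluate at $N=2$ and simplify the normalising factor. If desired, one could additionally remark that the alternative contour (\ref{AlternativeJonescontour}) gives the same $J(q^2)$ by summing residues at the poles $\lambda=q^{-\beta_i}$ when $Z$ is rational (for example in the HZ-factorisable case), but this is not needed for the statement as written.
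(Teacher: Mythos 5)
Your proposal is correct and follows exactly the route the paper intends: the corollary is obtained by specialising Theorem~\ref{thmInverseHZ} to $N=2$ and simplifying $\frac{q-q^{-1}}{q^{2}-q^{-2}}=\frac{1}{q+q^{-1}}$, which is all the paper does (it gives no separate proof beyond the remark that $H$ reduces to $J$ at $N=2$). The only cosmetic point is that $\lambda^{-3}Z$ actually has a double pole at $\lambda=0$ (since $Z$ carries an overall factor of $\lambda$ because $\bar H$ vanishes at $N=0$), not a pole of order $3$, but this does not affect the residue computation, which still extracts the coefficient of $\lambda^{2}$ in $Z$.
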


\vskip2mm \noindent{Example.} The Jones polynomial for the family $8_{20}^{3k}$ can be computed from (\ref{P(-2,3,-2j-1),3k}) at $j=1$ to be 
\be
J(8_{20}^{3k};q)=q^{3k}\left(1+q^2-q^{3k-1}(1-q+q^2-q^3+q^4-q^5+q^6)\right).
\ee

\vskip 2mm

 Theorem~\ref{thmInverseHZ} holds for all $N\ge 1$, but it requires a specification for $N=0$ ($a=1$), corresponding to the Alexander polynomial. This is because dividing by $(a-a^{-1})$ becomes problematic in the limit $a\to 1$.

\vskip 2mm
\begin{thm}\label{thmInverseAlex}
The Alexander polynomial can be obtained from $Z(\mathcal{K};\lambda,q)$ by
\be\label{AlexanderContour} 
\Delta(q^2) = \frac{q-q^{-1}}{2} \frac{1}{2i\pi} \sum_{i=0}^m (-\beta_i)
\oint_{\{
\lambda= q^{-\beta_i}\}}d\lambda  \frac{Z(\mathcal{K};\lambda,q)}{\lambda},
\ee
in which 
the integration contour at each $i$ should include the pole 
 at $\lambda= q^{-\beta_i}$. Via the residue theorem this becomes
 \be\label{InverseAlexResidue}
\Delta(q^2) =\frac{q-q^{-1}}{2}\sum_{\beta_i}(-\beta_i) Res_{\lambda=q^{-\beta_i}}(\lambda^{-1}Z).
 \ee
 \end{thm}
 \begin{proof}
     Under partial fraction decomposition $Z(\mathcal{K})$ can be decomposed as 
 \be\label{left}
Z(\mathcal{K}) = \frac{\lambda g(\lambda,q)}{\prod_i (1- q^{\beta_i }\lambda)} = \sum_{i=0}^m \frac{g_i(q)}{(1-q^{\beta_i}\lambda)},
 \ee
 in which the functions $g_i(q)$ are polynomial in $q^{\pm1}$ and are independent of $\lambda$. 
 By residue calculus, $g_i(q)$ can be obtained from $Z(\mathcal{K})$ via contour integrals around the pole at $\lambda = q^{-\beta_i}$ as
 \be
 g_i (q) =\frac{1}{2\pi i} \oint_{\{\lambda= q^{-\beta_i}\}}   \frac{-g(\lambda,q)}{\prod_j (1- q^{\beta_j }\lambda)}d\lambda.
 \ee 
 These clearly are the coefficients of $a^{\beta_i}$ (which is replaced by $(1-q^{\beta_i}\lambda)^{-1}$ under HZ transform)
 in the unnormalised HOMFLY--PT polynomial and hence the latter can be expressed as $\bar{H}(a,q) =\sum_i  g_i(q)a^{\beta_i}$. The Alexander polynomial $\Delta(q^2)$ can be obtained as the limit $\lim_{a\to1}H(a,q)=\lim_{a\to1}\frac{q-q^{-1}}{a-a^{-1}}\bar H(a,q)$, which evaluates to $0/0$. Using de l'Hôpital's rule 
 \be\label{Alex}
 \Delta(q^2) =\frac{1}{2} (q-q^{-1}) \frac{\partial}{\partial a} \sum_i  g_i(q)a^{\beta_i}\bigg\vert_{a=1}
 = \frac{1}{2} (q-q^{-1}) \sum_i \beta_i g_i(q).
 \ee
  \end{proof}

Note that the above theorems apply to knots and links with both factorised and non-factorised HZ transform.
\vskip2mm \noindent{Example.}  The figure-8 knot with (non-factorisable) HZ function \be
Z(4_1;\lambda,q)= \frac{\lambda(1 + \lambda (q^{-5}-q^{-3}-q^{-1}-q-q^3+q^{5}) + \lambda^2 )}{(1-q^{3}\lambda)(1-q \lambda)(1-q^{-1}\lambda)(1-q^{-3}\lambda)},\ee 
 via (\ref{InverseAlexResidue}) 
 yields $\Delta(4_1;q^2) 
= 3 - q^2-q^{-2}$, which becomes the standard Alexander polynomial $\Delta(4_1;t)$ \cite{Bar-Natan} after replacing $q^2=t$.

For torus knots there is a well known  formula for their Alexander polynomial
 \be
 \Delta(T(m,n);t) = t^{-\frac{(m-1)(n-1)}{2}} \frac{(t^{m n}-1)(t-1)}{(t^m-1)(t^n-1)},
 \ee 
which has alternating   coefficients $\pm 1$. This  can be seen, for example, in $\Delta(T(3,4))$ $= t^3-t^2 + 1- t^{-2}+ t^{-3}$ and $\Delta(T(3,5))= t^4 - t^3 + t - 1 + t^{-1} - t^{-3} + t^{-4}.$ 
Applying Theorem~\ref{thmInverseAlex} to $Z(\mathcal{K})$  given by (\ref{eq:factorisableZ}), with the help of the residue formula we obtain the following closed expression for the Alexander polynomial for knots with factorised HZ transform\footnote{   Alternatively, the Alexander polynomial for factorised cases could be determined by first computing the normalised HOMFLY--PT polynomial using (\ref{FactorisedHOMFLY}), in which the factor $(a-a^{-1})^{-1}$ always cancels, and then set $a=1$.}
\be\label{eq:factInverseAlex}
\boxed{\Delta(\mathcal{K};q^2) = \frac{q-q^{-1}}{2}  \sum_{j=0}^m 
\frac{\prod_{i=0}^{m-2}(1-q^{\alpha_i-\beta_j})}{\prod_{i=0,i\neq j}^m(1-q^{2(i-j)})}\beta_jq^{-\beta_j}.}
\ee
This formula is applied to several examples below.

$\bullet$ The Alexander polynomial for the pretzel family $P(\overline{2}, 3,\overline{ 2j+1})$  
  can be determined from its HZ parameters $m=3$, $\alpha_0=13-2j$, $\alpha_1=3-6j$ and $e=1-2j$ via (\ref{eq:factInverseAlex}) to be
  \be\label{AlexPretzel}
\Delta(P(\overline{2},3,\overline{2j+1});q^2)=\frac{q^{-2 (j+1)}}{1+q^2}
   \left(1-q^2+q^6+q^{4
   j}\left(1-q^4+q^6\right) \right).
   \ee
   For the family $P(\bar2,3,\overline{2j+1})^{3k}$  (including $k$ full twists) from (\ref{P(-2,3,-2j-1),3k}) we get
   \be
\Delta(P(\bar2,3,\overline{2j+1})^{3k};q^2)= \frac{(1+q^2)(q^{4j}+q^{8+12k})+\left(1+ q^{10+4j}\right) q^{6 k}}{q^{2 (1+j+3k)}\left(1+q^2\right) \left(1+q^2+q^4\right)} ,
   \ee
   which reduces to (\ref{AlexPretzel}) at $k=0$, and at $k=j=3$ gives $ \Delta(P(\bar 2,3,\bar 7)^9;t=q^2)= 
 t^7-t^6 + 2 t^4 - 3 t^3 + 2 t^2 -1 + 2 t^{-2} -3 t^{-3} + 2 t^{-4} - t^{-6} + t^{-7}$. 
 At $j=3$ it can also be recursively determined 
 (for $k\ge 3$) by
 $
 \Delta(P(\bar 2,3,\bar7)^{3k}) =
  \Delta(P(\bar 2,3,\bar7)^{3(k-1)})+ t^{3k-2}- t^{3k-3} - t^{-3k +3}+ t^{-3k+2}.
 $

$\bullet$  Another example is $P(\bar 2, 3, 2j+1)^{3k}$, 
 for which  (\ref{3kII}) yields
   \be
 \Delta(P(\bar 2, 3, 2j+1)^{3k};t)= \frac{1+t+t^{3
k+5}+\left(1+\left(1+t\right) t^{3
   k+4}\right) t^{2 j+3 k+2}}{t^{j+3 k+2}\left(1+t\right) \left(1+t+t^2\right)}.
   \ee

$\bullet$ For $10_{128}^{4k}$ 
the Alexander polynomial can be derived explicitly  from  (\ref{1284k}) as
\be
\Delta(10_{128}^{4k})=\frac{1 +  t^{9 +12 k} +  t^{4 k} (1 -t  + t^3 - t^4 + t^{5}) + t^{8 k+4} (1 - t + t^2 - t^4 + t^{5})}{t^{3+6k}(1 + t) (1 + t^2)}.
   \ee

$\bullet$ For the family $5_2^{3k}$ (shown in Fig.~\ref{fig:5_2_3k}), from  (\ref{5_2^{3k}}) we obtain
\be 
 \Delta(5_2^{3k};t)=\frac{1-t+t^2-t^3+t^4+t^{-3 k}+t^{3 k+4}}{t
   \left(1+t+t^2\right)},
\ee
or recursively by  
$ \Delta(5_2^{3k})= \Delta(5_2^{3(k-1)}) + t^{3k+1} - t^{3k} - t^{-3k} + t^{-(3k+1)}$ ($k\ge 1$).

    It may be interesting to note that the Alexander-Conway polynomial $\Delta(\mathcal{K})+ \rho$,  with $\rho=\{-1,0,1\}$, for several  knots $\mathcal{K}$ for which HZ factorisability holds, 
   can be written in terms of $\Delta(3_1)$ and $\Delta(5_1)$ as follows
\ba\label{threecases}
&&\Delta(5_2)+1 = 2 \Delta(3_1),\;\;\Delta(8_{20}) = (\Delta(3_1))^2,\nonumber\\
&&\Delta(10_{125})-1 = \Delta(3_1)(\Delta(5_1)-1),\nonumber\\
&&\Delta(10_{128})-1 = \Delta(3_1) (2 (\Delta(5_1))+ \Delta(3_1)-3),\;\;\Delta(10_{132})= \Delta(5_1),\nonumber\\
&&\Delta(10_{161})+ 1 = \Delta(3_1)(2 \Delta(3_1)^2 + 2 \Delta(3_1)+\Delta(5_1)-3),\nonumber\\
&&\Delta(10_{139}) = \Delta(3_1) (\Delta(3_1)\Delta(5_1)+ 2 \Delta(3_1)^2-2).  
\ea
Note also that $\Delta(5_1)+1 = \Delta(3_1)(\Delta(3_1)+1)$.

Furthermore, noting that  $q^2=t=(-1)^{1/3}$  is a root of $z^2=(q-q^{-1})^2= q^2 +q^{-2}- 2=-1$, at which the the Alexander polynomials for 
  $3_1$, $8_{20}$, $10_{139}$ are vanishing, it may be interesting to consider evaluation of the Alexander polynomials of the remaining factorised cases at this value.
  We find \\
  $\bullet$  For the pretzel family $ \Delta(P(\bar2,3,\overline{2j+1});t= (-1)^{1/3}) = 0$ for $j\mod3=1$, while it becomes equal to $\pm1$ for $j=2,3,8,9,...$ and $j=5,6,11,12,...$, respectively.\\ 
  $\bullet$ For the $5_2^{3k}$ family  
 $ \Delta(5_2^{3k};t = (-1)^{1/3}) = -1$ when $k$ is even, otherwise it is $0$.\\
 $\bullet$ The Alexander polynomial  $\Delta(10_{128}^{4k};t=(-1)^{1/3})$ 
 takes the values $\{0,-1,1\}$ when $k\mod 3=\{1,2,3\}$, respectively.\\
 $\bullet$ The Alexander polynomial for $10_{132}^{4k}$ (which can be derived from
 (\ref{1324k})) shows the same mod 3 behavior, since
$\Delta(10_{132}^{4k};t=(-1)^{1/3})= \{0,1,-1\}$, at $k\mod 3=\{1,2,3\}$, respectively. \\
These observations lead to the following remark.
\begin{remark}
\textup{  The Alexader polynomial of knots with factorised HZ
  take the values  $0$ or $\pm1$,
  when evaluated at  $t=q^2=(-1)^{1/3}$. }
  \end{remark}

 \section{Kauffman   polynomial and HZ factorisability}
The  Kauffman polynomial $KF(\mathcal{K};a,z)$ \cite{Kauffman} of a knot or link $\mathcal{K}$  
 can be defined as\footnote{The definition here differs from the original in \cite{Kauffman} by $a\mapsto a^{-1}$. Our convention is such that for positive knots $\mathcal{K}^+$, the Kauffman polynomial  has positive powers of $a$,  as it is the case for the HOMFLY--PT polynomial defined in (\ref{skein}) and hence it will allow their comparison later in this section.} $KF(\mathcal{K};a,z)=a^{w(\mathcal{K})}L(\mathcal{K};a,z)$, where $w(\mathcal{K})$ is the writhe of the diagram, while $L(\mathcal{K})$ satisfies the skein relation
\ba\label{KauffSkein}
&&L(\img{Kauffpositive.png})+L(\img{Kauffnegative.png})=zL(\img{KfZero.png})+zL(\img{KfInfinity.png})\\\nonumber
&&L(\img{PosLoop.png})=aL(\img{NonLoop.png}),\;\;(\img{NegLoop.png})=a^{-1}L(\img{NonLoop.png})
\ea
and the normalisation condition $L(\img{unknot.png})=1$.  For two disconnected knots  $KF(\mathcal{K}_1\sqcup\mathcal{K}_2)=(-1+\frac{a+a^{-1}}{z})KF(\mathcal{K}_1)KF(\mathcal{K}_2)$. 
Note that although the skein relation for $L(\mathcal{K})$ applies to unoriented links, a choice of orientation is necessary to compute the writhe $w(\mathcal{K})$. Hence, the Kauffman polynomial is orientation depended, but  only  through the overall factor $a^{w(\mathcal{K})}$.

As it is the case for the HOMFLY--PT polynomial, it also admits a definition via  Chern--Simons theory  with the gauge group\footnote{Note that the Euler characteristics for the moduli space
 of Riemann surfaces \cite{HZ} can be extended to Lie algebras of type $D$ (i.e. $\mathfrak{so}_{2N}$) containing symmetric and anti-symmetric terms \cite{Hikami2}. However, the Harer-Zagier transform of the Kauffman polynomial does not exhibit factorisation even for the simplest torus knots \cite{Morozov}. } $SO(N+1)$. 
Such a definition is in agreement with (\ref{KauffSkein}) after an overall multiplication with $(-1+\frac{a+a^{-1}}{z})$ and the substitutions  $a= q^{N}$ and $z=q-q^{-1}$. The resulting polynomial is referred to as the unnormalised (or unreduced) Kauffman polynomial. 

 \subsection{Relation between the HOMFLY--PT and Kauffman polynomials of HZ-factorisable knots and links \label{sec:HandKF}}
 There is a well known relation between the Kauffman polynomial $KF(\mathcal{K};a,q)$ and the Jones polynomial $J(\mathcal{K};t)$   (with $t=q^2$) given by \cite{Lickorish} 
 \be
 KF(\mathcal{K}; - t^{\frac{3}{4}}, t^{\frac{1}{4}}+ t^{-\frac{1}{4}}) = J(\mathcal{K};t).
 \ee

  Although, in general, the Kauffman polynomial contains many more terms as compared to the HOMFLY--PT polynomial, in the case of torus knots $T(m,n)$, with $(m,n)$ co-prime, there exist a relation between them. Denoting the  Kauffman 
 and  HOMFLY--PT polynomials for torus knots by $KF_{m,n}$ and $H_{m,n}$, respectively, their relation reads \cite{Jones,Labastida}
 \ba\label{KF}
 H_{m,n}(a,z) &=& \frac{1}{2}(\widetilde {KF}_{m,n}(a,z)+\widetilde {KF}_{m,n}(a,-z))\nonumber\\
 &-& \frac{z}{2(a-a^{-1})}(\widetilde{ KF}_{m,n}(a,z)-\widetilde {KF}_{m,n}(a,-z)).
 \ea
 Here $\widetilde {KF}$ refers to the Dubrovnik version of the the Kauffman polynomial, 
 which is obtained by substituting  $a\mapsto i a$ and $z\mapsto i z$, i.e. $\widetilde {KF}(a,z)=KF(ia,iz)$ \cite{Labastida}.  
 The first and second parts, which shall be labeled as \ba\label{tildeKF}
 &&\widetilde{KF}_{\rm even}:=\frac{1}{2}(\widetilde {KF}(a,z)+\widetilde {KF}(a,-z)),\nonumber\\
 &&\widetilde{KF}_{\rm odd}:=\frac{1}{2}(\widetilde {KF}(a,z)-\widetilde {KF}(a,-z)),\ea 
 contain the terms with even and odd powers of $a$ and $z$, respectively. 
 It can be  easily verified for simple torus knots, such as $3_1=T(2,3)$, for instance, for which $KF(3_1)=a^5z+a^4z^2-a^4+a^3z+a^2z^2-2a^2$. The odd  part is $\widetilde {KF}_{\rm odd}=a^3 z( 1-a^2)$, which is divisible by $(a-a^{-1})$. With the even part being $\widetilde {KF}_{\rm even}=- a^4 z^2- a^4 + a^2 (z^2 + 2)$, the r.h.s of (\ref{KF}) indeed yields the HOMFLY--PT polynomial of the trefoil  $-a^4 + a^2( z^2+2)=H(3_1)$.
Note, however that (\ref{KF}) does not hold for torus links, i.e. when $(m,n)$ are not co-prime.

The above relation also does not hold for general, non-torus knots. 
For instance,  for $4_1$ the r.h.s. of (\ref{KF}) becomes $(a^2+ a^{-2})(z^2+1) -z^4- 3 z^2 -1$, which differs from its HOMFLY--PT polynomial $H(4_1;a,z)=
 a^2+a^{-2}-z^2-1$.  
 
 It is remarkable, however, that for $5_2$, which is a hyperbolic knot but has a factorised HZ transform, the formula  (\ref{KF}) is true. Explicitly,  $\widetilde{KF}_{\rm odd}(5_2)=-a^7 (z^3 + 2  z) + 2 a^5 (z^3+ z)-a^3 z^3$, while  $\widetilde{KF}_{\rm even}(5_2)= a^6(z^4-2 z^2 +1) +a^4(z^4-z^2+1) +a^2(z^2-1)$, hence  yielding the correct HOMFLY--PT polynomial $H(5_2,a,z)$.

 \vskip 2mm
 \begin{thm}\label{thm:}

      The HOMFLY--PT and Kauffman polynomials for the following families of knots 
     \textup{ (i)}     $P(2,-3,\pm(2j+1))$, 
\textup{(ii)} $\mathcal{K}_{j,k}:=\sigma_1\sigma_2^{-2j-1}\otimes E_3^k$ 
      for $k=2,3$ and 
      \textup{(iii)} $5_2^{3k}=5_2\otimes F_3^k$, which admit a factorisable HZ transform, 
      satisfy
      \be\label{KFandH1}
     \boxed{H(\mathcal{K};a,z) =\widehat{KF}(\mathcal{K};a,z):=\widetilde{KF}_{\rm even}(\mathcal{K};a,z)-\frac{z}{a-a^{-1}}\widetilde{KF}_{\rm odd}(\mathcal{K};a,z).}
      \ee
 \end{thm}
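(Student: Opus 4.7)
The plan is to reduce the identity $H(\mathcal{K}) = \widehat{KF}(\mathcal{K})$ for each of the three families to the classical Jones--Labastida relation (\ref{KF}) for torus knots, which is already known. Each family is generated from short torus braids by iterating either a full twist $F_3$ (families (i) and (iii)) or a Jucys--Murphy concatenation $E_3$ (family (ii)), so an induction on the respective twist parameter is natural, with base cases $5_2, 8_{20}, 10_{125}, 10_{139}, 10_{161}, 12n_{242}$ (and the small-$j$ pretzels) each verified individually in the discussion preceding the theorem.

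On the HOMFLY side the excerpt already supplies the needed recursive expansions: formula (\ref{523krec}) writes $H_{5_2^{3k}}$ as an explicit linear combination of $H_{T(3,n)}$; (\ref{P232j+1}) expresses $H_{P(\bar 2, 3, 2j+1)}$ in terms of $H_{T(2,n)}$, and an analogous formula handles the opposite-sign pretzel; and (\ref{eq:recursivejk})--(\ref{eq:recursivej3}) reduce $H(\mathcal{K}_{j,k})$ for $k = 2,3$ to smaller $j$ together with torus knots $5_1, 7_1$ and pretzels from family (i). One then applies the Dubrovnik Kauffman skein (\ref{KauffSkein}) to exactly the same crossings at which the HOMFLY skein was invoked, obtaining parallel recursive expansions for $\widetilde{KF}$ of each family in terms of $\widetilde{KF}$ of torus knots and links. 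Splitting into even- and odd-$z$ parts and assembling $\widehat{KF} = \widetilde{KF}_{\mathrm{even}} - \frac{z}{a-a^{-1}} \widetilde{KF}_{\mathrm{odd}}$ yields an expression for $\widehat{KF}(\mathcal{K})$; termwise substitution of (\ref{KF}) in the torus-knot pieces converts this to the HOMFLY expansion of the first step, hence to $H(\mathcal{K})$.

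The main obstacle is matching coefficients between the two expansions. The HOMFLY skein is oriented and three-term, whereas the Dubrovnik skein is unoriented and four-term, so the intermediate diagrams differ; in particular, the $\infty$-smoothing in the Kauffman recursion produces torus \emph{links} with non-coprime parameters, for which (\ref{KF}) fails. The technical crux is to show that these non-torus-knot contributions either cancel in pairs or reassemble, after forming $\widehat{KF}$, into a polynomial whose coefficients agree with those of the HOMFLY expansion. A convenient way to organise this cancellation is representation-theoretically in the Birman--Murakami--Wenzl algebra $BMW_3$ (or $BMW_4$ for the four-strand representatives of family (ii)): the central full twist acts by a scalar on each irreducible summand, Jucys--Murphy elements act diagonally, and the Markov trace on each summand reduces the identity to a finite verification. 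Once that representation-theoretic bookkeeping is carried out, the three families follow by straightforward induction on the respective twist parameter.
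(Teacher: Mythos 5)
Your overall strategy coincides with the paper's: derive a Kauffman skein recursion for each family parallel to the known HOMFLY--PT recursions (\ref{523krec}), (\ref{eq:recursivejk}), (\ref{eq:recursivej3}) and the pretzel recursion, form $\widehat{KF}$, show the recursions agree, and close the induction with torus-knot base cases where (\ref{KF}) is classical. You also correctly locate the crux: the Dubrovnik skein is four-term and unoriented, so the recursion for $\widetilde{KF}$ produces extra terms (including torus \emph{links} with non-coprime parameters, for which (\ref{KF}) fails) that have no counterpart on the HOMFLY side, and everything hinges on showing these reassemble or cancel after forming $\widehat{KF}$.

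The gap is that you do not carry out that step. The paper's proof consists almost entirely of this bookkeeping, done explicitly: it writes out the full recursive formulas for $\widetilde{KF}$ of each family (with all the extra terms such as $za^{2j-1}(1-(a-a^{-1})(z+z^{-1}))\widetilde{KF}(T(2,-3))$, the geometric sums $\sum_i a^{\pm 2i}$, etc.), substitutes the explicit polynomials $\widetilde{KF}_{\rm even/odd}(5_1^+)$, $\widetilde{KF}_{\rm even/odd}(7_1^+)$, $\widehat{KF}(5_2^+)$, $\widehat{KF}(7_2^+)$, $\widehat{KF}(7_3^+)$, and verifies that the surplus lines cancel identically. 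For family (iii) there is an additional genuine obstruction you gesture at but do not resolve: the sum in (\ref{523krec}) runs over $T(3,n)$ including the torus links $T(3,3k')$, and the paper must separately compute $\widehat{KF}(T(3,3k'))-H(T(3,3k'))=3a^{8k'}$ (via a further recursion through the twisted torus links $T(3,n,2,1)$ and the relation (\ref{HandKFT3n21})) and check that this discrepancy is exactly absorbed by the leftover term $-\frac{3z^2}{a-a^{-1}}(1+z^2)a^{3+6k}(a^{2k}-a^2)$ in (\ref{KFhat523k}). Your proposed alternative -- diagonalising the full twist and the Jucys--Murphy elements on the irreducible summands of $BMW_3$ or $BMW_4$ and comparing Markov traces -- is a plausible organising principle, but as written it is only a plan: no eigenvalues are computed, no trace identities are verified, and the torus-link discrepancy is not addressed. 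Since the entire content of the theorem lives in precisely this verification, the proposal as it stands is an outline of the paper's argument rather than a proof.
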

 \begin{proof}

   \noindent{}
   (i) For the first pretzel family $P(2,\bar{3},2j+1)$ (with $j\geq0$), using (\ref{KauffSkein}) we find  the following recursive relation for (the Dubrovnik version of) its Kauffman polynomial 
\ba
\nonumber
&&\hspace{-15mm}\widetilde{KF}(P(2,\overline{3},2j+1))=a^{2}\widetilde{KF}(P(2,\overline{3},2j-1))\\\nonumber
&&+z^2a^{2j+2}\sum_{i=1}^{j}a^{-2i}\widetilde{KF}(P(2,\overline{3},2i-1))-z^2a^{2j-4}\sum_{i=1}^{j}a^{2i}\\
&&+za^{2j-1}(1-(a-a^{-1})(z+z^{-1}))\widetilde{KF}(T(2,-3))-za^{4j-3}.
\ea
Here $P(2,\overline{3},-1)=T(2,-5)=5_1^-$ and $T(2,-3)=3_1^-$ with $\widetilde{KF}_{2,-3}=-a^{-4}+2a^{-2}+za^{-3}(-1+a^{-2})+z^2a^{-2}(1-a^{-2})$. 
After some simple algebraic manipulations it can be shown that 
\ba
&&\hspace{-10mm}\widehat{KF}(P(2,\overline{3},2j+1))=a^{2}\widehat{KF}(P(2,\overline{3},2j-1))\nonumber\\
&&\hspace{-4mm}+z^{2}a^{2j+2}\sum_{i=1}^{j}a^{-2i}\widehat{KF}(P(2,\overline{3},2i-1))-a^{2j}(1-a^{-2}+z^{2})\widehat{KF}_{2,-3}.
\ea
This is the same expression as for the recursive formula of its HOMFLY--PT polynomial  found in\footnote{Note that in \cite{Petrou} we considered the mirror family $P(\bar{2},3,\overline{2j+1})$  and hence the HOMFLY--PT  polynomial of $P(2,\bar{3},2j+1)$ is related to that via $(a,z)\mapsto (a^{-1},-z)$.} \cite{Petrou}. That is, all the extra factors arising from the last term in the Kauffman skein relation (\ref{KauffSkein}), which is not present in the HOMFLY--PT skein relation (\ref{skein}), cancel. 
Using the fact that (\ref{KFandH1}) holds for the torus knots $3_1$ and $5_1$, the proof can be  completed by induction.

The proof is very similar for the second pretzel family $P(2,\bar{3},\overline{2j+1})$, which has Kauffman polynomial
\ba
\nonumber
&&\hspace{-15mm}\widetilde{KF}(P(2,\bar{3},\overline{2j+1});a,z)=a^{-2}\widetilde{KF}(P(2,\bar{3},\overline{2j-1});a,z)\\\nonumber
&&+z^2a^{-2j-2}\sum_{i=1}^{j}a^{2i}\widetilde{KF}(P(2,\bar{3},\overline{2i-1});a,z)-z^2a^{-2j-4}\sum_{i=1}^{j}a^{-2i}\\
&&+za^{-2j-3}(1-(a-a^{-1})(z+z^{-1}))\widetilde{KF}_{2,-3}(a,z)+za^{-4j-5},
\ea
 yielding
\ba
&&\hspace{-12mm}\widehat{KF}(P(2,\overline{3},\overline{2j+1}))=a^{-2}\widehat{KF}(P(2,\overline{3},\overline{2j-1}))\nonumber\\
&&\hspace{-8mm}+z^{2}a^{-2j-2}\sum_{i=1}^{j}a^{2i}\widehat{KF}(P(2,\overline{3},\overline{2i-1}))+a^{-2j-2}(1-a^{-2}+z^{2})\widehat{KF}_{2,-3}.
\ea
Again, this agrees with the recursive formula for its HOMFLY--PT polynomial\footnote{Note that this is an alternative but equivalent recursive formula to (\ref{P232j+1}) for $H(P(\overline{2},3,2j+1))$ (after $(a,z)\mapsto (a^{-1},-z)$), which was expressed in terms of torus knots only. }
\ba
&&\hspace{-12mm}H(P(2,\overline{3},\overline{2j+1}))=a^{-2}H(P(2,\overline{3},\overline{2j-1}))\nonumber\\
&&\hspace{-4mm}+z^{2}a^{-2j-2}\sum_{i=1}^{j}a^{2i}H(P(2,\overline{3},\overline{2i-1}))+a^{-2j-2}(1-a^{-2}+z^{2})H_{2,-3}.
\ea

(ii) The  Kauffman polynomial of the family $\mathcal{K}_{j,2}$ with $j\geq 0$, which is the closure of the braid $\sigma_2^{-2j-1}\sigma_1(\sigma_2^2\sigma_1^2)^2$, can be obtained recursively by 
\ba
\nonumber
&&\hspace{-12mm}\widetilde{KF}(\mathcal{K}_{j,2})=a^{-2}\widetilde{KF}(\mathcal{K}_{j-1,2})+z^2a^{-2j-2}\sum_{i=1}^{j}a^{2i}\widetilde{KF}(\mathcal{K}_{i-1,2})\\\nonumber
&&-z^2a^{-2j+8}\sum_{i=1}^{j}a^{-2i}-z^2a^{-2j}\widetilde{KF}(T(3,4))+z^2a^{-2j+6}\widetilde{KF}(5_2^+)\nonumber\\
&&+za^{-2j+3}(-1+(a-a^{-1})(z+z^{-1}))\widetilde{KF}(5_1^+)+za^{-4j+7}.
\ea
This yields
\ba\label{eq:Kauffjk}
&&\hspace{-12mm}\widehat{KF}(\mathcal{K}_{j,2})=a^{-2}\widehat{KF}(\mathcal{K}_{j-1,2})+z^{2}a^{-2j-2}\sum_{i=1}^{j}a^{2i}\widehat{KF}(\mathcal{K}_{i-1,2})\nonumber\\
&&\hspace{-4mm}-a^{-2j+2}(1-a^{2}+z^{2})\widehat{KF}(5_1^+)-z^2a^{-2j}\widehat{KF}(T(3,4))\nonumber\\
&&\hspace{-4mm}+z^2a^{-2j+4}\widehat{KF}(5_1^+)-za^{-2j+3}\widetilde{KF}_{\rm odd}(5_1^+)+z^2a^{-2j+6}\widehat{KF}(5_2^+)\nonumber\\
&&\hspace{-4mm}+z^2/(a-a^{-1})\left(a^{-2j+3}\widetilde{KF}_{\rm even}(5_1^+)-a^{-4j+7}\right)-z^2a^{-2j+8}\sum_{i=1}^ja^{-2i}.
\ea
Using  $\widetilde{KF}_{\rm odd}(5_1^+)=(2a^5-a^7-a^9)z+(a^5-a^7)z^3$, $\widetilde{KF}_{\rm even}(5_1^+)=3a^4-2a^6+(4a^4-3a^6-a^8)z^2+(a^4-a^6)z^4$ and $\widehat{KF}(5_2^+)=H(5_2^+)=a^2+a^4-a^6+(a^2+a^4)z^2$, one can easily show that the terms on the last two lines exactly cancel and hence (\ref{eq:Kauffjk})  has the  same form as the recursive formula for the HOMFLY--PT polynomial in (\ref{eq:recursivejk}). Since the theorem is valid for the knots $5_1$, $5_2$ and $T(3,4)=P(\bar{2},3,3)$, the proof can again be completed by induction.

Similarly, the family $\mathcal{K}_{j,3}$ ($j\geq 0$)  
has the recursive formula for its Kauffman polynomial 
\ba
\nonumber
&&\hspace{-7mm}\widetilde{KF}(\mathcal{K}_{j,3})=a^{-2}\widetilde{KF}(\mathcal{K}_{j-1,3})+z^2a^{-2j-2}\sum_{i=1}^{j}a^{2i}\widetilde{KF}(\mathcal{K}_{i-1,3})\\\nonumber
&&-z^2a^{-2j+12}\sum_{i=1}^{j}a^{-2i}-z^2a^{-2j+2}\widetilde{KF}(P(\bar{2},3,5))-z^2a^{-2j}\widetilde{KF}(P(\bar{2},3,7))\nonumber\\
&&+z^2a^{-2j+10}\widetilde{KF}(7_2^+)+z^2a^{-2j+8}\widetilde{KF}(7_3^+)+za^{-4j+11}\nonumber\\
&&+za^{-2j+5}(-1+(a-a^{-1})(z+z^{-1}))\widetilde{KF}(7_1^+),
\ea
which yields
\ba\label{eq:Kauffjk3}
&&\hspace{-12mm}\widehat{KF}(\mathcal{K}_{j,3})=a^{-2}\widehat{KF}(\mathcal{K}_{j-1,3})+z^{2}a^{-2j-2}\sum_{i=1}^{j}a^{2i}\widehat{KF}(\mathcal{K}_{i-1,3})\nonumber\\
&&-z^2a^{-2j+2}\widehat{KF}(P(\bar{2},3,5))-z^2a^{-2j}\widehat{KF}(P(\bar{2},3,7))\nonumber\\
&&-a^{-2j+4}(1-a^{2}+z^{2})\widehat{KF}(7_1^+)\nonumber\\
&&+z^2a^{-2j+6}\widehat{KF}(7_1^+)-za^{-2j+5}\widetilde{KF}_{\rm odd}(7_1^+)\nonumber\\
&&+z^2/(a-a^{-1})\left(a^{-2j+5}\widetilde{KF}_{\rm even}(7_1^+)-a^{-4j+11}\right)\nonumber\\
&&+z^2a^{-2j+10}\widehat{KF}(7_2^+)+z^2a^{-2j+8}\widehat{KF}(7_3^+)-z^2a^{-2j+12}\sum_{i=1}^ja^{-2i}.
\ea
Using that $\widetilde{KF}_{\rm even}(7_1^+)=4 a^6 - 3 a^8 + 10 a^6 z^2 - 7 a^8 z^2 - 2 a^{10} z^2 - a^{12} z^2 + 
 6 a^6 z^4 - 5 a^8 z^4 - a^{10} z^4 + a^6 z^6 - a^8 z^6$, $\widetilde{KF}_{\rm odd}(7_1^+)=3 a^7 z - a^9 z - a^{11} z - a^{13} z + 4 a^7 z^3 - 3 a^9 z^3 - a^{11} z^3 +
  a^7 z^5 - a^9 z^5$ and $\widehat{KF}(7_2^+)=a^2 + a^6 - a^8 + a^2 z^2 + 3 a^6 z^2 - a^8 z^2 + a^6 z^4$, $\widehat{KF}(7_3^+)=a^4 + 2 a^6 - 2 a^8 + 3 a^4 z^2 + 4 a^6 z^2 - 3 a^8 z^2 + a^{10} z^2 + 
 a^4 z^4 + a^6 z^4 - a^8 z^4$ and after some  algebra, one can show that the terms in the last 3 lines in (\ref{eq:Kauffjk3}) exactly cancel and hence it  has the  same form as the recursive formula for the HOMFLY--PT polynomial in (\ref{eq:recursivej3}), which only involves knots that satisfy Theorem~\ref{thm:}. For the last 2 families the proof can be extended to include the cases with $j<0$, whose recursive formulas will differ slightly from (\ref{eq:Kauffjk}) and (\ref{eq:Kauffjk3}) in a similar way as they do for the two pretzel families.

 (iii) For the family $5_2^{3k}$, obtained by $k $ full twists $F_3$ on $5_2$, we find
  \ba
\nonumber
&&\hspace{-8mm}\widetilde{KF}(5_2^{3k})=a^{2}(1+z^{2})\left(\widetilde{KF}_{T(3,3k+1)}+a^{4}z^{2}\sum_{i=0}^{3(k-1)}a^{2i}\widetilde{KF}_{T(3,3k-1-i)}\right)\nonumber\\
    &&
    +a^{6k+2}(1+z^{2})(1-a^{2}+z^2)+a^{6}z^{2}\widetilde{KF}_{T(3,3k-1)}\nonumber\\
    &&-z^2a^{8k+5}(z+a)+za^{6k+3}(1+z^2)(1-za)-za^{8(k+1)}(z+a)\widetilde{KF}(3_1^-)\nonumber\\
    &&-z^2(1+z^2)a^{8k+4}\left(1+\left(3-\frac{a-a^{-1}}{z}\right)\sum_{i=1}^{k-1}a^{-2i}\right),
\ea
with which we get
\ba\label{KFhat523k}
\hspace{-2mm}\widehat{KF}(5_2^{3k})&=&a^{2}(1+z^{2})\left(\widehat{KF}_{T(3,3k+1)}+a^{4}z^{2}\sum_{i=0}^{3(k-1)}a^{2i}\widehat{KF}_{T(3,3k-1-i)}\right)\nonumber\\
&&+a^{6k+2}(1+z^{2})(1-a^2+z^2)+a^{6}z^{2}\widehat{KF}_{T(3,3k-1)}\nonumber\\
&&-\frac{3z^2}{a-a^{-1}}(1+z^2)a^{3+6k}(a^{2k}-a^2).
\ea
This recursive formula is almost the same as (\ref{523krec}) for the HOMFLY--PT polynomial of $5_2^{3k}$, apart from an extra term in the last line. This is present due to the fact that $\widehat{KF}(T(3,3k))\neq H(T(3,3k)$ for 3-stranded torus links, which are included in the summation of the first line whenever $i \mod 3=2$. 
 To redeem this, we compute
 \be
 H(T(3,3k))=za\sum_{i=0}^{3(k-1)}a^{2i}H(T(3,3k-1-i,2,1))+\frac{a^{6k-2}}{z^2}(1-a^2+z^2)^2
 \ee
 and
 \ba
\nonumber
\widetilde{KF}(T(3,3k))&=&-za\sum_{i=0}^{3(k-1)}a^{2i}\widetilde{KF}(T(3,3k-1-i,2,1))\\\nonumber
&&+a^{6k}(1-(a-a^{-1})(z+z^{-1}))^2+z^2a^{8k}(1-a^{-2k+2}) 
\nonumber\\
&&+za^{6k-1}\left(a^2+3a^{2k}\sum_{i=0}^{k-2}a^{-2i}\right)\left(-1+\frac{a-a^{-1}}{z}\right).
\ea
The latter yields
\ba
&&\hspace{-8mm}\widehat{KF}(T(3,3k))=za\sum_{i=0}^{3(k-1)}a^{2i}\widehat{\widehat{KF}}(T(3,3k-1-i,2,1))\\
&&+a^{6k}(1+(a-a^{-1})^2(z+z^{-1})^2)+(a-a^{-1})a^{6k+1}+\frac{z^2}{a-a^{-1}}a^{6k+1}\nonumber\\
&&+a^{8k}(1-a^{-2k+2})(z^2+3)+3\frac{z^2}{a-a^{-1}}a^{8k-1}\sum_{i=0}^{k-2}a^{-2i}+2a^{6k}(z^2+1),\nonumber
\ea
where $\widehat{\widehat{KF}}(\mathcal{L})=\frac{z}{a-a^{-1}}\widetilde{KF}_{\rm even}(\mathcal{L})-\widetilde{KF}_{\rm odd}(\mathcal{L})$. By further using (\ref{HandKFT3n21}) one can show that 
\ba
&&\hspace{-12mm}\sum_{i=0}^{3(k-1)}a^{2i}\widehat{\widehat{KF}}(T(3,3k-1-i,2,1))=\sum_{i=0}^{3(k-1)}a^{2i}H(T(3,3k-1-i,2,1))\nonumber\\
&&\hspace{20mm}-\frac{za^{8k}}{a-a^{-1}}(\sum_{i=0}^{k-1}a^{-2i}+a^{-2k}\sum_{i=0}^{k-2}a^{2i}+a^{-2}\sum_{i=0}^{k-2}a^{-2i}).
\ea
It is then straight forward to compute $\widehat{KF}(T(3,3k))-H(T(3,3k))=3a^{8k}$.
Taking this into account and using that $\widehat{KF}_{m,n}=H_{m,n}$ for torus knots, the recursive formula (\ref{KFhat523k}) becomes exactly equal to (\ref{523krec}) for the HOMFLY--PT polynomial of $5_2^{3k}$.
\end{proof}

 Using the data in \cite{KnotInfo}, one can further easily confirm that (\ref{KFandH1}) is also satisfied by the knots $10_{128}$, $10_{132}$, $10_{139}$ and $12n_{318}$, which, together with the ones included in the above families, complete the list with up to 13 crossings for which the HZ transform of their HOMFLY--PT polynomial factorises. 
We have also checked that some other factorised cases, such as  twisted torus knots 
also satisfy (\ref{KFandH1}) and, in general, no exception has been found so far. 
 In fact, we have further confirmed that, for knots with up to 12 crossings, (\ref{KFandH1}) is valid if and only if the HZ transform of their HOMFLY--PT polynomial factorises. 
This motivates the following.
\begin{conj}\label{HomflyANDKauff}
 The HOMFLY--PT and Kauffman polynomials for a knot $\mathcal{K}$ satisfy (\ref{KFandH1})   if and only if the HZ transform of its HOMFLY--PT polynomial admits a factorised  form (\ref{eq:factorisableZ}), i.e.
      
      \be\label{KFandH}
    \boxed{ H(\mathcal{K};a,z) =\widehat{KF}(\mathcal{K};a,z)\iff Z(\mathcal{K};\lambda,q)=\frac{\lambda\prod_{i=0}^{m-2}(1-\lambda q^{\alpha_i})}{\prod_{i=0}^{m}(1-\lambda q^{\beta_i})}.}
      \ee
      
      \end{conj}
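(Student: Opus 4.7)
The plan is to attack both directions of the biconditional, with the $(\Leftarrow)$ direction accessible by extending the skein-theoretic induction of Theorem~\ref{thm:}, and the $(\Rightarrow)$ direction requiring structural input beyond pure skein theory.

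For the $(\Leftarrow)$ direction, assuming the (currently empirical) claim of Section~\ref{sec:factknots} that every HZ-factorisable knot is generated from a classifiable base list via the operations $F_m$ and $E_m$, it would suffice to: (i) verify $H = \widehat{KF}$ on the base list --- torus knots are covered by \cite{Labastida}, and the finitely many explicit hyperbolic cases enumerated in Sec.~\ref{sec:factknots} require only a finite check; and (ii) show that both HZ factorisability and the identity $H = \widehat{KF}$ are preserved under $F_m$ and $E_m$. Step (ii) would proceed as in Theorem~\ref{thm:}: derive a recursion for $\widetilde{KF}(\mathcal{K} \otimes F_m)$ by repeated application of (\ref{KauffSkein}), then show that the Kauffman-specific $\infty$-smoothing contributions organise into quantities divisible by $(a-a^{-1})$, so that $\widehat{KF}$ satisfies the same recursion as $H$. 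Extending this to arbitrary $m$ and to the Jucys--Murphy operation is combinatorially intricate but conceptually parallel to the cases already treated.

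For the $(\Rightarrow)$ direction, I would use the inverse HZ formalism of Theorem~\ref{thmInverseHZ} to write $\bar H(a,q) = \sum_i g_i(q)\, a^{\beta_i}$, and decompose $\widehat{KF}$ in the same $a$-basis via the even/odd splitting of $\widetilde{KF}$ (noting that $\widetilde{KF}_{\rm odd}/(a-a^{-1})$ preserves the $a$-support of $\widetilde{KF}_{\rm even}$ under the Dubrovnik substitution). The aim is to show that the identity $H = \widehat{KF}$ forces each $g_i(q)$ into the rational form
\[
g_i(q) = \frac{\prod_{j=0}^{m-2}(1 - q^{\alpha_j - \beta_i})}{\prod_{j \ne i}(1 - q^{2(j-i)})}
\]
required by \eqref{FactorisedHOMFLY}, and in particular that $\bar H$ is supported on $m+1$ consecutive even $a$-powers $\beta_i = e+2i$, which is equivalent to HZ factorisability.

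The main obstacle is clearly the $(\Rightarrow)$ direction: the Kauffman relation is a single polynomial identity, whereas HZ factorisability is a structural condition on each individual coefficient $g_i(q)$, together with the arithmetic-progression constraint on the $\beta_i$. A direct algebraic comparison is unlikely to suffice, so I expect one must import input from the topological string framework of Sec.~\ref{sub:BPS}. Specifically, one would first rigorously establish the asserted equivalence $H = \widehat{KF} \iff \hat N^{c=2}_{g,Q} = 0$ via Ooguri--Vafa-type integrality for the $SO/Sp$-coloured invariants, and then show that the vanishing of the two-crosscap BPS invariants forces the $N$-dependence of the coloured HOMFLY series to collapse into a finite sum of geometric progressions in $q^N$ --- which is precisely the content of HZ factorisability via \eqref{HZ2}. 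This last link, between string-theoretic integrality and the pole structure of $Z$, is in my view the true crux of the conjecture.
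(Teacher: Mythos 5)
The statement you are trying to prove is explicitly labelled a \emph{conjecture} in the paper: the authors do not prove it. What they actually establish is (a) Theorem~\ref{thm:}, which proves the single implication ``HZ-factorisable $\Rightarrow H=\widehat{KF}$'' for four specific families via family-by-family recursions for $\widetilde{KF}$, and (b) an exhaustive computational check of the full biconditional for knots up to 12 crossings. So there is no paper proof to compare yours against; the question is whether your plan closes the gap the authors left open. It does not.

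On the $(\Leftarrow)$ direction, your step (i)+(ii) strategy rests on two inputs that are themselves only empirical in the paper: that every HZ-factorisable knot arises from a finite base list under $F_m$ and $E_m$ (the paper only says factorisability ``seems to be preserved'' under these operations, and notes counterexamples to naive preservation, e.g.\ $T(5,7)\otimes E_5$ is not factorisable and $(\sigma_2\sigma_1)^5\sigma_2\otimes E_3$ fails while a different braid word for the same link succeeds); and that the cancellation of the $\infty$-smoothing terms into multiples of $(a-a^{-1})$ happens uniformly. In Theorem~\ref{thm:} that cancellation is verified by explicit algebra separately for each family, using the concrete values of $\widetilde{KF}_{\rm even}$ and $\widetilde{KF}_{\rm odd}$ of the seed knots; no structural mechanism is identified that would make it automatic for arbitrary $m$ or for general Jucys--Murphy concatenations. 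Asserting that the extension is ``conceptually parallel'' is where the actual mathematical content would have to go.

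On the $(\Rightarrow)$ direction, your proposed route through the BPS invariants is circular as stated. The equivalence $H=\widehat{KF}\iff\hat N^{c=2}_{g,Q}=0$ is, by Eq.~(\ref{c=2}), essentially a definitional restatement (the $c=2$ generating function \emph{is} $\widehat{\overline{KF}}-\bar H$ up to normalisation), so it buys nothing; and the claim that vanishing of the two-crosscap invariants forces the $N$-dependence of $\bar H$ to collapse into the factorised pole structure of $Z$ is precisely the conjecture again, now phrased in string-theoretic language. No known integrality statement (LMOV or otherwise) constrains the \emph{numerator} of $Z$ to factorise into monomials $(1-\lambda q^{\alpha_i})$; the denominator structure $\beta_i=e+2i$ follows for any knot from the $a$-support of $\bar H$, but the numerator factorisation is the genuinely nontrivial condition and your plan contains no mechanism that produces it. Your proposal is a reasonable map of the difficulty, but both directions still lack the key idea.
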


 \vskip2mm
 For torus knots, there is also a relation between the Kauffman polynomial $\widetilde{KF}(a,z)$ and the Alexander-Conway polynomial $\Delta (z)$, which can be derived from (\ref{KF}) \cite{Labastida}.
As a consequence of Theorem~\ref{HomflyANDKauff}, since the Alexander-Conway polynomial corresponds to the  $a\to1$ limit of the HOMFLY--PT polynomial, we have 
\begin{cor}
The relation   
between the Alexander-Conway and Kauffman polynomials
 \be\label{AKF}
 \Delta(\mathcal{K},z) = 1- \frac{z}{2} \frac{\partial}{\partial a} \widetilde{KF}_{\rm odd}(\mathcal{K};a,z)\bigg\vert_{a=1}.
 \ee
 which is true for torus knots \cite{Labastida}, also holds for the hyperbolic families $\mathcal{K}=P(2,-3,\pm(2j+1))$, $\mathcal{K}_{j,2}$, $\mathcal{K}_{j,3}$ and $5_2^{3k}$.
\end{cor}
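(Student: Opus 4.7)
The strategy is to specialise the identity (\ref{KFandH1}) from Theorem~\ref{thm:} to $a=1$. For each family in the statement, Theorem~\ref{thm:} provides
\begin{equation*}
H(\mathcal{K};a,z) = \widetilde{KF}_{\rm even}(\mathcal{K};a,z) - \frac{z}{a-a^{-1}}\widetilde{KF}_{\rm odd}(\mathcal{K};a,z),
\end{equation*}
and, since the left-hand side is a Laurent polynomial in $a$, the quotient on the right must also be polynomial. This forces $\widetilde{KF}_{\rm odd}(\mathcal{K};a,z)$ to be divisible by $(a-a^{-1})$, so in particular $\widetilde{KF}_{\rm odd}(\mathcal{K};1,z)=0$, matching the explicit pattern already observed for $3_1$ in Sec.~\ref{sec:HandKF}. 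This divisibility makes the $a\to 1$ limit tractable.

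Next, I would evaluate both sides at $a=1$. The left-hand side gives the Alexander--Conway polynomial $\Delta(\mathcal{K};z)=H(\mathcal{K};1,z)$, consistent with the conventions used in Theorem~\ref{thmInverseAlex}. The second term on the right is a $0/0$ expression that is resolved by L'H\^opital's rule,
\begin{equation*}
\lim_{a\to 1}\frac{\widetilde{KF}_{\rm odd}(\mathcal{K};a,z)}{a-a^{-1}} = \frac{1}{2}\frac{\partial}{\partial a}\widetilde{KF}_{\rm odd}(\mathcal{K};a,z)\bigg|_{a=1},
\end{equation*}
exactly as in the derivation of the Alexander polynomial from the unnormalised HOMFLY--PT polynomial in the proof of Theorem~\ref{thmInverseAlex}. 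Combining these yields
\begin{equation*}
\Delta(\mathcal{K};z) = \widetilde{KF}_{\rm even}(\mathcal{K};1,z) - \frac{z}{2}\frac{\partial}{\partial a}\widetilde{KF}_{\rm odd}(\mathcal{K};a,z)\bigg|_{a=1},
\end{equation*}
which reduces to (\ref{AKF}) precisely when $\widetilde{KF}_{\rm even}(\mathcal{K};1,z)=1$.

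The only remaining obstacle, and the main step, is therefore to verify $\widetilde{KF}_{\rm even}(\mathcal{K};1,z)=1$ for the four families in question. The cleanest route is inductive, re-using the recursive skeletons already assembled inside the proof of Theorem~\ref{thm:}: the recursion for $\widetilde{KF}(P(2,\bar 3,\pm(2j+1)))$, the recursions (\ref{eq:Kauffjk}) and (\ref{eq:Kauffjk3}) for $\widetilde{KF}(\mathcal{K}_{j,2})$ and $\widetilde{KF}(\mathcal{K}_{j,3})$, and the $F_3$-twist recursion underlying (\ref{KFhat523k}) for $\widetilde{KF}(5_2^{3k})$. After extracting even parts and setting $a=1$, each of these recursions collapses to a trivial identity provided the base cases (the torus knots $3_1,5_1,7_1,T(3,4),T(3,3k\pm1)$, together with $5_2$, $7_2$, $7_3$ and $P(\bar 2,3,2j\pm1)$ at the initial value) satisfy $\widetilde{KF}_{\rm even}(\cdot;1,z)=1$. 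For the torus base cases this is exactly what makes (\ref{AKF}) hold in Labastida's original setting, and for the small HZ-factorisable non-torus seeds it can be checked directly from their tabulated Kauffman polynomials in \cite{KnotInfo}.

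I expect the subtle point to be the bookkeeping at step 4, not any genuinely new idea: once the extra non-HOMFLY terms in each Kauffman recursion have been shown to vanish under the $\widehat{KF}$ projection (which is already done inside Theorem~\ref{thm:}), the same cancellations force the $a=1$ value of $\widetilde{KF}_{\rm even}$ to be propagated unchanged by the recursion. A minor extra care is needed for the $5_2^{3k}$ family, where the appearance of three-stranded torus \emph{links} $T(3,3k)$ inside the recursion produces the residual term $3a^{8k}$ that was tracked explicitly in the proof of Theorem~\ref{thm:}; its contribution to $\widetilde{KF}_{\rm even}(5_2^{3k};1,z)$ must be checked to vanish so that the base value $1$ is preserved.
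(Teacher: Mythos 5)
Your proposal is correct and follows essentially the same route as the paper, which obtains the corollary simply by setting $a=1$ in the identity $H=\widehat{KF}$ of Theorem~\ref{thm:} and resolving the resulting $0/0$ term by L'H\^opital, exactly as in Labastida's derivation of (\ref{AKF}) from (\ref{KF}) for torus knots. The only difference is that you isolate and propose to verify explicitly the auxiliary fact $\widetilde{KF}_{\rm even}(\mathcal{K};1,z)=1$ (equivalently $\widetilde{KF}(\mathcal{K};1,z)=1$, given the divisibility of $\widetilde{KF}_{\rm odd}$ by $a-a^{-1}$), a step the paper leaves implicit; your inductive bookkeeping via the recursions of Theorem~\ref{thm:} would settle it, modulo the minor slip of calling $7_2$, $7_3$ ``HZ-factorisable seeds'' when they are merely base cases of the recursion.
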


We have further tested that this relation holds for all the remaining up-to-12-crossings knots with factorised HZ transform. 
As a natural consequence of the $(\Leftarrow)$ part of conjecture~\ref{HomflyANDKauff},    (\ref{AKF})  should remain valid for all the hyperbolic families  with  factorised HZ transform. However, ($\Rightarrow$) does not hold when restricting to the Alexander case ($a=1$), since the hyperbolic knots  $9_{28}$, $11a_3$ and $12n_{428}$ whose HZ transform does not factorise, do satisfy (\ref{AKF}).

Some consistency checks for the Conjecture~\ref{HomflyANDKauff} can be provided by the following considerations. 
First, note that when the HOMFLY--PT polynomial of a knot has a factorised  HZ transform, it means that it is expressible in terms of the HZ parameters $\{\alpha_i\},m,e$ as in  (\ref{FactorisedHOMFLY}), which is an expansion in powers of $a=q^N$ with $q$-polynomial coefficients. 
Alternatively, the unnormalised HOMFLY--PT polynomial of a knot\footnote{More generally, for a link with $n$ components the lowest power of $z$ in $\bar{H}$ is $-n$.} can be written as an expansion in powers of $z$ as 

\be\label{eq:Hexpansion}
\bar{H}=\sum_{l=-1}^{\nu}p_l(a)z^{2l+1},
\ee
in which the coefficients $p_l$ are polynomial in $a$ and $\nu\in\mathbb{Z}$ is a constant. Similarly, we can expand $z^{-1}(a-a^{-1})\widetilde{KF}_{\rm even}=\sum_{l=-1}^{\nu^{\rm even}}\kappa_l^{\rm even}(a)z^{2l+1}$ and $\widetilde{KF}_{\rm odd}=\sum_{l=0}^{\nu^{\rm odd}}\kappa_l^{\rm odd}(a)z^{2l+1}$. For all knots it holds that $p_{-1}(a)=\kappa_{-1}^{\rm even}(a)$ \cite{MarinoKauffman}, which in the factorised cases can be expressed in terms of the HZ parameters as 
   \be p_{-1}(a)=\sum_{j=0}^ma^{e+2j}(-1)^{j+1}\frac{2}{(2m)!!}\binom{m}{j}\prod_{i=0}^{m-2}(\alpha_i-e-2j).
   \ee 
For  (\ref{KFandH1}) to be valid it is required that $\forall l>-1$ s.t. $2l+1\leq\nu$ $\implies \kappa_l^{\rm even}(a)-\kappa_l^{\rm odd}(a)=p_l(a)$, while $\forall l$ s.t. $2l+1>\nu$ $\implies \kappa_l^{\rm even}(a)=\kappa_l^{\rm odd}(a)$. For the latter to be true, a \emph{necessary (but not sufficient) condition} is  that 
\be\label{eq:KauffNecCond}
\nu^{\rm even}=\nu^{\rm odd},
\ee
which can happen only if \emph{the Kauffman polynomial has even highest power of $z$}.  For alternating knots the highest power of $z$ is equal to $n-1$, where $n$ is the number of crossings \cite{Kauffman}. This implies that no alternating knot with even number of crossings can satisfy (\ref{KFandH1}). In the light of conjecture \ref{HomflyANDKauff}, this also becomes \emph{a necessary  condition for factorisability of the HZ transform}. 
Indeed, the list of up-to-13-crossings hyperbolic knots with factorised HZ transform (see Sec.~\ref{sec:factorisedHZ}) contains only non-alternating knots with even number of crossings\footnote{Note that for odd number of crossings the only knots with factorised HZ are alternating and include $5_2$ and the torus knots $T(2,n)$, which are excluded from that list. Up to $n=9$  the latter are listed as $n_{1}$ (with $n$ odd) in the Rolfsen table, and extend to $11a_{367}$ and $13a_{4878}$ \cite{KnotInfo}.}. The highest power of $z$ in the Kauffman polynomial for non-alternating knots is not known in general, but for up to 10-crossing-knots it is equal to $n-2$ (with the only exception being $10_{161}$ for which it is $6=n-4$) and hence satisfy the necessary condition (\ref{eq:KauffNecCond}).   

 The above discussion applies only to knots. However, it is also of interest to examine the relation between the HOMFLY--PT and Kauffman polynomials in the case of links, with multiple components. 

\begin{remark}
\textup{    The HOMFLY-PT polynomial $H(\mathcal{K};a,z)$ and $\widehat{KF}(\mathcal{K};a,z)$ for knots or links with odd number of components are always even Laurent polynomials in $a$. 
However, for  links with even number of components, $H(\mathcal{K};a,z)$ involves only odd powers of $a$, and hence (\ref{KFandH1})
cannot hold. 
 As a consequence, while for links with odd number of components (including knots) we compare $H$ with 
  \be\label{KFhat}
 \widehat{KF}(\mathcal{L}):=\widetilde{KF}_{\rm even}(\mathcal{L})-\frac{z}{a-a^{-1}}\widetilde{KF}_{\rm odd}(\mathcal{L}),
 \ee
 when the number of components is even, we should instead compare  $H$ with} 
\be\label{KFhathat}
\widehat{\widehat{KF}}(\mathcal{L}):=\frac{z}{a-a^{-1}}\widetilde{KF}_{\rm even}(\mathcal{L})-\widetilde{KF}_{\rm odd}(\mathcal{L}).
\ee
\end{remark}

\paragraph{2-component links.} As discussed in Sec.~\ref{sec:factorisedHZ},  all the links 
 admitting a factorisable HZ transform known to us, have 2 components. 
\textup{The HOMFLY--PT polynomial for the majority of these HZ-factorisable 2-component links $\mathcal{L}= (\mathcal{K}_1,\mathcal{K}_2)$, 
satisfy the following relation to their Kauffman polynomial 
\be\label{linkcriterion}
\boxed{\frac{a-a^{-1}}{z}(H(\mathcal{L})-\widehat{\widehat{KF}}(\mathcal{L})) = a^{4lk(\mathcal{L})}.}
\ee
Here $lk(\mathcal{L})$
is the linking number, which is obtained as a sum of  $\pm\frac{1}{2}$ over the crossings between ${\rm \mathcal{K}_1\; and\; \mathcal{K}_2}$.}  
\noindent{}We have explicitly verified that this holds\footnote{As a reminder,  our conventions differ by the data for the Kauffman polynomial in  \cite{KnotInfo} by $a\mapsto a^{-1}$.} for several such links from Sec.~\ref{sec:factorisedHZ}, such as $L7n2\{0\}=L7n2\{1\}$  (with $lk=0$), $L9n12\{1\}$ ($lk=2$),  $L9n14\{0\}$ ($lk=-1$), $L11n132\{0\}^+$  ($lk=1$), $L11n133\{0\}^+$ ($lk=3$), 
$L11n208\{0\}$ ($lk=2$), $T(4,5,2,1)$ ($lk=4$) and for the ones included in the following families.
 \vskip1mm

$\bullet$ The 2-stranded torus links with parallel orientation $T(2,2k)$ with linking number $lk(T(2,2k))=k$. These include $L2a1 (k=1)$, $L4a1\{1\} (k=2)$, $L6a3\{0\} (k=3)$,
$L8a14\{0\} (k=4)$, $L10a118\{0\} (k=5)$. For instance, in the case of $L2a1\{1\}$  with $lk=1$ the r.h.s. of (\ref{linkcriterion}) becomes $a^4$. 
The opposite choice of orientation $L2a1\{0\}$, results in a HOMFLY-PT polynomial $H(L2a1\{0\})$ that differs from $H(L2a1\{1\})$ only by $a\to -a^{-1}$ and hence it still has a factorised HZ transform. In this case
the r.h.s becomes $a^{-4}$, in agreement with $lk(L2a1\{0\})=-1$.

 $\bullet$ The twisted torus links $T(3,n,2,1)$ satisfy 
\be\label{HandKFT3n21}
H(T(3,n,2,1)) = \widehat{\widehat{KF}}(T(3,n,2,1)) +\frac{z}{a-a^{-1}} a^{4lk(n)}
\ee
where $lk(3k)=2k$, $lk(3k+1)=2(k+1/2)$ and $lk(3k+2)=2(k+1)$. For instance, for $T(3,3,2,1)=L7n1\{0\}^+$: $lk(3)=2$, for $T(3,4,2,1)=L9n15\{0\}^+$: $lk(4)=3$ and for  $T(3,5,2,1)=L11n204\{0\}^+$: $lk(5)=4$.

Some implications from  (\ref{linkcriterion}) are the following. At $N=0$, i.e. at $a=q^N=1$, we obtain  that $\widetilde{KF}_{\rm even}=-1$ since $\widetilde{KF}_{\rm odd}$ and $H$, which are multiplied by $(a-a^{-1})z^{-1}$, both vanish.
For $a=q$ $(N=1)$, the HOMFLY--PT polynomial $H$  and the factor $(a-a^{-1})/z$ become 1, yielding $-1+\widetilde{KF}_{\rm odd}-\widetilde{KF}_{\rm even}-q^{4lk(\mathcal{L})}=0$. 

 However, the link 
$L10n42\{1\}$, despite admitting HZ factorisability, is an exception, which does not satisfy  (\ref{linkcriterion}), since evaluation of the l.h.s. yields 
\ba\label{10n42}
&&a^6+a^4 \left(-z^6-6 z^4-10 z^2-5\right)+\frac{z^6+6 z^4+10 z^2+5}{a^2}\\\nonumber
&&+a^2 \left(z^8+9 z^6+27 z^4+30
   z^2+10\right)-z^8-9 z^6-27 z^4-30 z^2-10.\ea
This can alternatively be written as $H-\widetilde{KF}_{odd}+ \frac{z}{a-a^{-1}}(\widetilde{KF}_{even}+ a^{-6}) = a^{-3}q^{-9} (\frac{1-q^{10}}{1+q^2})((1+a^4)(q^2+q^8)-a^2(1+q^{10}))$ and yields $H=\widetilde{KF}_{odd}$ at $q=1$ ($z=0$). It is noteworthy that, although the HOMFLY--PT polynomial can not distinguish this link from $L9n14\{0\}$, their  Kauffman polynomials $\widehat{\widehat{KF}}$ are very different. 
The reason for this link being exceptional remains mysterious and deserves further investigation. 
The breaking of (\ref{linkcriterion})  in this case, implies that a relation between the HOMFLY--PT and Kauffman polynomials for links is no longer in 1-1 correspondence with HZ factorisability, but it  can still serve as a criterion for it, according to the following.
\begin{conj}\label{HomflyANDKauffLINKS}
 If the HOMFLY--PT and Kauffman polynomials for a 2-component link $\mathcal{L}$ satisfy (\ref{linkcriterion}), then the HZ transform of its HOMFLY--PT polynomial admits a factorised  form,
  i.e.
     \be\label{conjKFlink}
 \boxed{ H(\mathcal{L})-\widehat{\widehat{KF}}(\mathcal{L}) = \frac{za^{4\text{lk}(\mathcal{L})}}{{a-a^{-1}}}\implies \textup{HZ factorisation}}
 \ee

      \end{conj}

\paragraph{3-component links.} The 3-stranded torus links, 
although their HZ transform is not factorisable, satisfy the simple relation
\be\label{L6n1x}
H(T(3,3k))-\widehat{KF}(T(3,3k))=-3a^{8k},
\ee
as proven in  (iii) of Theorem~\ref{thm:}. 
For instance, $T(3,3)=L6n1\{0,1\}$ has a simple, but not factorisable HZ transform of its HOMFLY--PT polynomial, as given in (\ref{L6n1}).
Different choices of orientation for the same link
are labeled as $L6n1\{0,0\},$ $L6n1\{1,0\},$ $ L6n1\{1,1\}$ and have the same HOMFLY--PT polynomial, whose HZ transform becomes $Z = \lambda(1+ (q^{9}+q^{7}+q+q^{-1}) \lambda + q^{8}\lambda^2)/(1-q^{-1}\lambda)(1-q \lambda)(1-q^{3}\lambda)(1-q^{5}\lambda)$. 
They also have the same Kauffman polynomial, which relates to their HOMFLY--PT as
$H - \widehat{KF}= -1 - a^{4}(2+ 4 z^2 + z^4)$.

\paragraph{4-component links.}For the torus link $T(4,4)$, which is the closure of $(\sigma_1\sigma_2\sigma_3)^4$ and its HZ transform does not factorse, we find 
\be
H(T(4,4))-\widehat{\widehat{KF}}(T(4,4))=\frac{a^{12}}{(a-a^{-1})z}(6+4z^2+3a^2(2+z^2)(-2+a^2)),
\ee
while for $T(4,8)$ we get
\ba
&&\hspace{-15mm}H(T(4,8))-\widehat{\widehat{KF}}(T(4,8))=\frac{a^{24}}{\left(a-a^{-1}\right) z} \bigg(z^2(3 a^8-2) \nonumber\\
&&+6 a^6 \left(1+z^2\right)-6 a^4 \left(2+4 z^2+z^4\right)+6 a^2 \left(1+3 z^2+z^4\right)\bigg).
\ea
Thus we observe that, from 4 components, not only the HZ transform for torus links is not factorisable, but also they no longer have a simple relation between their HOMFLY--PT and Kauffman polynomials. For both of the above links it holds that $\widetilde{KF}_{\rm even}(a=1)=-1$.

\subsection{BPS invariants and HZ factorisability\label{sub:BPS}}

There is an intriguing relation between  link polynomials and BPS invariants of topological strings, known as the LMOV conjecture\footnote{The initials BPS and LMOV stand for Bogomol'nyi–Prasad–Sommerfield and Labastida-Mariño-Ooguri-Vafa, respectively.
}~\cite{Labastida,OoguriVafa, LabastidaMarino}. 
Such a relation is based on the gauge/string duality put forward in \cite{GopakumarVafa1998} and suggests that the coefficients of knot invariants count  the number of BPS states of open topological strings. 
In this context, the HOMFLY--PT/Kauffman relation presented in the previous section, corresponds to the vanishing of the two-crosscap BPS invariants $\hat{N}^{c=2}=0$ of  unoriented strings, 
as we shall explain below.

\paragraph{BPS invariants for knots.}
The unnormalised HOMFLY--PT polynomial of knots, corresponding to the $SU(N)$-invariant in CS theory, can be expanded as 
\be\label{orientable1}
 \bar H({\mathcal{K}};a,z)= \sum_{Q}\sum_{g= 0}^{g_{\rm max}}\hat{N}_{g,Q} a^{Q} z^{2g-1},
 \ee 
in which the coefficients $\hat{N}_{g,Q}\in\mathbb{Z}$ are constant integers, while, for an $m$-strand braid,  the index $Q$ can be replaced by $e+2i$ with $i\in\{0,m\}$ (recall that $e$ is the lowest power of $a$ in $\bar H$). Expressed in this way, the fundamental\footnote{Beyond the fundamental case, the polynomials associated to topological strings are an extension of ordinary knot polynomials involving correction terms \cite{LabastidaMarino}. In such cases, the integer coefficients are denoted by $N_{R,g,Q}$, where the subscript $R$ denotes a representation, usually indicated by the corresponding Young diagram (see e.g. \cite{MorozovCheck,Gukov}) and can be interpreted as the magnetic charge. 
 Here, since we restrict only to the fundamental representation $R=[1]=\square$,  we shall omit the additional subscript on the coefficients $\hat{N}_{g,Q}$ for simplicity.} HOMFLY--PT polynomial can be interpreted   as a generating function  which serves to solve enumerative problems in the geometry of topological string theory.  Namely, the integer coefficients $\hat{N}_{g,Q}$ in (\ref{orientable1}) 
 correspond to the counting of BPS states of open  topological strings, which 
 determine the oriented  string amplitudes \cite{OoguriVafa}.  
 The topological string interpretations of the indices 
 $g$ and $Q$ are the spin and the $D2$ brane charge, respectively \cite{LabastidaMarino}. 

 Some basic properties of $\hat{N}_{g,i}$ can be derived from (\ref{orientable1}). First, note that 
 $\bar{H}(a=1,z)$, which is the unnormalised Alexander polynomial, is always vanishing
 due to the overall factor $a-a^{-1}$ that it includes\footnote{Since $a=1$ corresponds to $N=0$, this is also the reason that there is no constant $\lambda$-term in the HZ transform \cite{Morozov}.}, and hence (\ref{orientable1})with $Q=e+2i$ for $i\in\{0,m\}$, yields  
 \be\label{N=0contribution}
 \sum_{i=0}^m\sum_{g\ge 0}^{g_{\rm max}}\hat{N}_{g,i} z^{2g-1}=0.
 \ee
 Since $z$ is apriori an arbitrary parameter, this implies that at each  $g$ 
 \be\label{sum0} \sum_{i=0}^m \hat N_{g,i} = 0.
 \ee 
Noting that the derivative  $\frac{\partial}{\partial a}\bar{H}\vert_{a=1}=\frac{2}{z} H$ 
and since $H(a,z)\vert_{z=0}=1$,
by taking the $a$-derivative of (\ref{orientable1}) and using (\ref{N=0contribution}) we find that in the $z\to 0$ limit
\be\label{sum1}
\sum_{i=0}^m
i \hat N_{0,i}= 1.
\ee
 
Tables~\ref{tab:BPSknots}a-c below  include the values of $\hat N_{g,i}$ and $e$ for the examples of $3_1^+$, $5_2^+$ 
and $8_{20}^+$.
\begin{table}[hbtp] 
   \caption{Examples of BPS states $\hat{N}_{g,i}$ for three HZ-factorisable knots.}
\begin{minipage}{0.25\linewidth}
%
\centering
\subcaption{$[3_1^+]$; $e=1$}
\begin{tabular}{|l|lcc|}
\hline
g \textbackslash i&0&1&2\\
\hline
0&  -2& 3 & -1    \\
1&  -1 &  1& 0 \\
\hline
\end{tabular}
\end{minipage}
\hfill
\hskip2mm
\begin{minipage}{0.35\linewidth}
\centering
\subcaption{$[5_2^+]$; $e=1$}
\begin{tabular}{|l|lccc |  }
\hline
g \textbackslash i&0&1&2&3\\
\hline
0&  -1& 0 & 2& -1    \\
1&  -1 &  0& 1& 0 \\
\hline
\end{tabular}
\end{minipage}
\hfill
\hspace{-4mm}
\begin{minipage}{0.35\linewidth}
\centering
\subcaption{$[8_{20}^+]$; $e=-1$} 
\begin{tabular}{|l|lccc |  }
\hline
g \textbackslash i&0&1&2&3\\
\hline
0&  1& -5 & 6& -2    \\
1&  1 &  -5& 5& -1 \\
2& 0& -1& 1& 0 \\
\hline
\end{tabular}
\end{minipage}
\label{tab:BPSknots}
\end{table} 
It can be easily checked that these examples satisfy
 (\ref{sum0}) and (\ref{sum1}).
 The  BPS invariants $\hat N_{g,i}$ in the case of $8_{20}^-$ were also computed in \cite{MorozovCheck}  
 and their values after $a\to  a^{-1}$, due to mirroring, and up to a sign 
 agree with  Table~\ref{tab:BPSknots}c.

 The Kauffman polynomial, corresponding to the gauge group $SO(N+1)$ or $Sp(N+1)$, may admit a similar 
topological string interpretation, according to the Mariño conjecture \cite{MarinoKauffman}. This  claims that the gauge/string duality for orthogonal or symplectic groups may be obtained from the unitary case 
by introducing an orbifold involution 
in the target space, 
combined with an orientation reversal in the string worldsheet, which  
produces unoriented strings. 
The  correlation functions  of the $SO$-gauge theory will then 
roughly correspond to the correlation functions in the orbifold 
string theory and 
will consist of a sum of two sectors, one involving oriented strings  and one with unoriented strings, which is  twisted by the orientifold.

For the unorientable part (twisted  sector), the string amplitudes  will have contributions from  curves 
  with 1 or 2 crosscaps $c$, which are encoded in the BPS invariants $\hat{N}_{g,Q}^{c=1}$ and $\hat{N}_{g,Q}^{c=2}$ \cite{MarinoKauffman}. 
A  
surface with 
$c=1$ is the real projective plane  $\mathbb{RP}^2$, 
while with two crosscaps $c=2$ it is the Klein bottle, which can be obtained from the connected sum of two projective planes as 
$ \mathbb{RP}^2\# \mathbb{RP}^2$ \cite{BrezinHikamianti}.

The 
generating function for the enumeration of unoriernted BPS invariants  
becomes 
  \cite{MarinoKauffman,Bouchard,Paul} 
  \be\label{hatg2}
  \hat g = \sum_{g,Q} (\hat{N}_{g,Q}^{c=1} z^{2g}a^{Q}+
  \hat{N}_{g,Q}^{c=2}z^{2g+1}a^Q).
  \ee
This can be expressed as 
a combination of the  unnormalised  HOMFLY--PT polynomial $\bar{H}$ and the (Dubrovnik version of the) unnormalised Kauffman polynomial $\widetilde{KF}$ (i.e. multiplied by the overall factor $\frac{a-a^{-1}}{z}-1$), using our conventions, as 
  \be\label{hatg}
  \hat g = \left(\frac{a-a^{-1}}{z}-1\right)(\widetilde {KF}_{\rm even}+\widetilde{KF}_{\rm odd})- \bar H.
  \ee 
Considering the parity of the powers of $a$ and $z$ in the knot polynomials, this can be split into the crosscap $c=1$ and $c=2$ parts, respectively, as 

  \be\label{c=1}
  \sum_{g,Q} \hat{N}_{g,Q}^{c=1}z^{2g} a^Q = - \left(\widetilde{KF}_{\rm even}- \frac{a-a^{-1}}{z}\widetilde{KF}_{\rm odd}\right)=-\widehat{\widehat{\overline{KF}}},
  \ee
where $\widehat{\widehat{KF}}$ is defined in (\ref{KFhathat})  and
\be\label{c=2}
\sum_{g,Q} \hat{N}_{g,Q}^{c=2} z^{2g-1} a^Q=\frac{a-a^{-1}}{z}\widetilde{KF}_{\rm even}-  \widetilde{KF}_{\rm odd}- \overline{ H }=\widehat{\overline{KF}}- \overline{ H }
.
 \ee
These formulas are consistent (up to  overall minus signs)
with the values of the Kauffman LMOV invariants  computed via the methods of \cite{MorozovCheck}, 
for all the knots (up to 8 crossings),  for which the data are available in 
 \cite{MorozovWebsite}. 
For instance,  in the example of $8_{20}^+$, the values of $\hat N_{g,Q}^{c=1}$ obtained via (\ref{c=1}) which are listed in  Table 4, which are listed in  Table~\ref{tab:BPS820}, match the result in 
\cite{MorozovCheck} 
\begin{table}[hbtp]
\caption{$\hat N_{g,Q}^{c=1}[8_{20}^+]$}
\centering

\label{tab:BPS820}
\begin{tabular}{|l|ccccc |}
\hline
\small{g \textbackslash Q}&-2&0&2&4&6\\
\hline
0&  -1 &5  & -12&10 & -3    \\
1&0 &  5&- 16& 15&-4 \\
2& 0& 1& -7& 7&-1 \\
3& 0& 0& -1& 1&0\\
\hline
\end{tabular}
\end{table}
\noindent{}and satisfy the relation $\sum_Q\hat N_{g,Q}^{c=1}=0$ for $g>0$.
From (\ref{c=2}),
the value of ${\hat N}_{g,Q}^{c=2}$ is found to be zero for $8_{20}$,
again in agreement with \cite{MorozovCheck}.
 Similarly, for the trefoil $3_1$, the r.h.s. of (\ref{c=1}) becomes $-(3 + z^2)a^2 + (3 +z^2)a^4 - a^6$, 
 which yields values of $\hat N_{g,Q}^{c=1}$,  consistent with the  result reported in \cite{Paul}.
 \begin{remark}
\textup{According to (\ref{c=2}),  the relation (\ref{KFandH1}) between the Kauffman and HOMFLY--PT polynomial of a knot is equivalent to the vanishing of the two-crosscaps BPS invariants $\hat{N}_{g,Q}^{c=2}$, i.e.
\be\label{BPSandHKF}
\boxed{H=\widehat{KF} \iff \hat{N}_{g,Q}^{c=2}=0.}
\ee
From a 
geometric point of view, this implies that there are no contributions from surfaces with two crosscaps, which are Klein bottles,  in the string amplitudes when the HOMFLY--PT/Kauffman relation is valid.}
 \end{remark} 
In agreement with Conjecture~\ref{HomflyANDKauff},  $\hat{N}^{c=2}_{g,Q}$ was found to be vanishing  for torus knots  \cite{Stevan} and for the 
hyperbolic knots  $5_2$, $8_{20}$, $10_{125}$, $10_{128}$, $10_{132}$, $10_{139}$ and $10_{161}$ \cite{MorozovCheck}, which are (up to 10 crossings) precisely the ones with factorised HZ transform.  

\paragraph{BPS invariants for two-component links.}For a 2-component link $\mathcal{L}=(\mathcal{K}_1,\mathcal{K}_2)$,
assuming the fundamental representation on each component\footnote{ As before, we omit the subscripts $\square,\square$ used in \cite{MorozovCheck} to label representations on each component.}, the BPS invariants corresponding to the orientable part $\hat{h}^{\mathcal{L}}$, given as the coefficients in the expansion $\hat{h}^{\mathcal{L}}= \sum_{g\geq 0,Q} \hat{N}^{c=0}_{g,Q}  z^{2g}a^{Q}$,
can be expressed in terms of the HOMFLY--PT polynomial of the link  and its components as \cite{MorozovCheck} 
 \be\label{hath}
 \hat{h}^{\mathcal{L}}= 2 a^{-2lk(\mathcal{L})}\bar{H}(\mathcal{L})+2 a^{2lk(\mathcal{L})}\bar{H}(\tilde{\mathcal{L}})-
 4 \bar{H}(\mathcal{K}_1) \bar{H}(\mathcal{K}_2),
 \ee
 where $lk(\mathcal{L})$ is the linking number of $\mathcal{L}$, while $\tilde{\mathcal{L}}$ denotes the link obtained from $\mathcal{L}$ by reversing the orientation of one of its components\footnote{The linking numbers of  $\tilde{\mathcal{L}}$ and $\mathcal{L}$ are related by $lk(\tilde{\mathcal{L}})=-lk(\mathcal{L})$. Note that for a fixed orientation, the linking number of a link $\mathcal{L}$ and its mirror image $\mathcal{L}^*$ are also related by  $lk(\mathcal{L}^*)=-lk(\mathcal{L})$. }. The expression (\ref{hath}) 
 includes the additional factors $a^{\pm 2\text{lk}(\mathcal{L})}$ as compared to the one presented in \cite{MorozovCheck}, which are missing there due to their different choice of conventions (see Eq.~2.16 in \cite{MarinoKauffman}). 
 The unorientable part  $ \hat{g}^{\mathcal{L}} =  \sum_{g\geq 0,Q} \left( \hat{N}^{c=1}_{g,Q} z^{2g+1}+\hat{N}^{c=2}_{g,Q} z^{2g+2} \right)a^{Q}$ can be expressed as
 \be\label{unoriented}
 \hat{g}^{\mathcal{L}}=
 a^{-2lk(\mathcal{L})}K^{\mathcal{L}}- K^{\mathcal{K}_1} K^{\mathcal{K}_2}-a^{-2lk(\mathcal{L})}\bar{H}^{\mathcal{L}} -a^{2lk(\mathcal{L})}\bar{H}^{\tilde{\mathcal{L}}}+ 2 \bar{H}^{\mathcal{K}_1} \bar{H}^{\mathcal{K}_2},
 \ee
 in which $K^{\mathcal{L}}$ corresponds to the (Dubrovnik version of the) unnormalised 
 Kauffman polynomial and hence it
 should be replaced by $(\frac{a-a^{-1}}{z}-1)(\widetilde{KF}_{\rm even}(\mathcal{L})+\widetilde{KF}_{\rm odd}(\mathcal{L}))$. By splitting  the terms with odd or even parity,
 writing these in terms of $$\widehat{\overline{KF}}=\frac{a-a^{-1}}{z}\widetilde{KF}_{\rm even}-\widetilde{KF}_{\rm odd}\;\;{\rm and}\;\; \widehat{\widehat{\overline{KF}}}=\widetilde{KF}_{\rm even}-\frac{a-a^{-1}}{z}\widetilde{KF}_{\rm odd},$$ 
 respectively, we find 
 \be\label{c1general}
 \sum_{g,Q} \hat{N}^{c=1}_{g,Q} z^{2g+1}a^Q= a^{-2lk(\mathcal{L})}\widehat{\overline{KF}}(\mathcal{L})+\widehat{\overline{KF}}(\mathcal{K}_1)\widehat{\widehat{\overline{KF}}}(\mathcal{K}_2)+\widehat{\widehat{\overline{KF}}}(\mathcal{K}_1)\widehat{\overline{KF}}(\mathcal{K}_2)
 \ee
   \ba\label{c2general}
 \sum_{g,Q}\hat{N}_{g,Q}^{c=2}
 z^{2g+2} a^Q\hspace{-2mm} &=&\hspace{-2mm}-a^{-2lk(\mathcal{L})}\widehat{\widehat{\overline{KF}}}(\mathcal{L})-\widehat{\widehat{\overline{KF}}}(\mathcal{K}_1)\widehat{\widehat{\overline{KF}}}(\mathcal{K}_2)-\widehat{\overline{KF}}(\mathcal{K}_1)\widehat{\overline{KF}}(\mathcal{K}_2)
 \nonumber\\
 &&\hspace{-2mm}-a^{-2lk(\mathcal{L})}\bar{H}(\mathcal{L})-a^{2lk(\mathcal{L})}\bar{H}(\tilde{\mathcal{L}})+2\bar{H}(\mathcal{K}_1)\bar{H}(\mathcal{K}_2).
 \ea
 \begin{remark}\label{rmk:K0vsK1}
 \textup{The BPS invariants for 2-component links are independent of orientation, since
 \be\label{KFreverseOrient}
 a^{-2lk(\mathcal{L})}  KF((\mathcal{L}))= a^{-2lk(\tilde{\mathcal{L}})} KF(\tilde{\mathcal{L}}).
 \ee}
  \end{remark}
\noindent{}The  identity (\ref{KFreverseOrient}) can be understood by recalling that the only orientation dependence in the Kauffman polynomial is through the overall factor $a^{w(\mathcal{L})}$, involving the writhe. For a link $\mathcal{L}=(\mathcal{K}_1,\mathcal{K}_2)$ the writhe counts twice the linking number $lk(\mathcal{K}_1,\mathcal{K}_2)$ (since the latter involves a factor $\frac{1}{2}$ in its definition) plus any additional self crossings of each of the components $\mathcal{K}_1$ and $\mathcal{K}_2$. Orientation reversing on one\footnote{The same identity can also apply to $l$-component links $\mathcal{L}=(\mathcal{K}_1,...,\mathcal{K}_l)$ upon orientation reversal on several of its components, only when these are chosen in such a way that the overall linking number $lk(\mathcal{K}_1,...,\mathcal{K}_l)$ (which is equal to the sum of the linking number of each pair of components) exactly reverses sign.}   component 
leaves the self crossings  unchanged, while the  linking number reverses sign.
 
 For the link
 $L7a3=(3_1^+,\bigcirc)$, which does not admit a factorisable HZ transform,
 the BPS invariants  can be evaluated from the above definition 
 and the obtained results
 agree with
 \cite{MorozovCheck}. Namely, using that $lk(L7a3)=0$ and hence $H(L7a3\{0\})=H(L7a3\{1\})$ and that $\bar{H}(3_1) \bar{H}(\bigcirc) =z^{-2}(a-a^{-1})^2 (-a^{4} + a^{2} z^2 + 2 a^{2})$, while noting that for 2-component links $\bar{H}$ is given by (\ref{SU(N)links}) (i.e. it is multiplied by an overall minus sign), 
 from (\ref{hath}) we find 
 \ba\label{Lc=0}
 &&\hspace{-6mm}\sum_{g,Q} \hat{N}^{c=0}_{g,Q} z^{2g} a^{Q} = 4 \biggl(\frac{a^{-1}-a}{z} H(L7a3;a,q) -\frac{(a-a^{-1})^2}{z^2} H(3_1;a,q)\biggr)\\\nonumber
 &&\hspace{-3mm}=(8 - 24 a^2 + 24 a^4-8 a^6)+ z^2 ( 4 -20 a^2 + 20 a^4-4 a^6) + z^4 (-4 a^2 + 4 a^4).
 \ea
 Note that  the subtraction by the term $z^{-2}(a-a^{-1})^2 H(3_1)$ completely cancels the term of order $z^{-2}$, which is always present in $\bar{H}$ for two-component links. 
The BPS invariant of  crosscap number two, i.e. $\hat{N}^{c=2}_{g,Q}$,  is given by (\ref{c2general}) using that $\bar{H}(3_1)=\widehat{\overline{KF}}(3_1)$ and $\widetilde{{KF}}(\bigcirc)=1$ 
\ba\label{Lc=1}
 \sum_{g,Q}\hat{N}_{g,Q}^{c=2}
 z^{2g+2} a^Q &=&-\widehat{\widehat{\overline{KF}}}(L7a3)-2\bar{H}(L7a3)\nonumber
\\
 &&\hspace{-20mm}-\widehat{\widehat{\overline{KF}}}(3_1)-\frac{a-a^{-1}}{z}\widehat{\overline{KF}}(3_1)+2\bar{H}(\bigcirc)\bar{H}(3_1)\nonumber\\
&&\hspace{-20mm}= -\widetilde{KF}(L7a3)_{\rm even}+ \frac{a-a^{-1}}{z}\widetilde{KF}(L7a3)_{\rm odd}+\frac{2(a-a^{-1})}{z}H(L7a3)\nonumber\\
 &&\hspace{-20mm}+  \left(\frac{a-a^{-1}}{z}\right)^2 H(3_1)- \widetilde{KF}(3_1)_{\rm even} +\frac{a-a^{-1}}{z}\widetilde{KF}(3_1)_{\rm odd}.
 \ea 
 The r.h.s. becomes $z^2(2 - 5 a^2+ 3 a^4+ a^6 - a^8)+ z^4 (1- 5 a^2 + 4 a^4)+z^6(-a^2+ a^4)$, which is the same as the one in\footnote{Note that in \cite{MorozovCheck} the values for $\hat{N}^{c=1}$ and $\hat{N}^{c=2}$ were accidentally swapped, as we confirmed with the authors in a private communication.} \cite{MorozovCheck}. Note that   the terms of order $z^{-2}$ and $z^{0}$ in the first three and last three terms are canceled with each other.
 The $c=1$ BPS invariant can be computed via (\ref{c1general}) to be
  \ba\label{Lc=2}
\sum_{g,Q} \hat{N}^{c=1}_{g,Q} z^{2g+1}a^Q&=& \widehat{\overline{KF}}(L7a3)+\widehat{\overline{KF}}(3_1)+\frac{a-a^{-1}}{z}\widehat{\widehat{\overline{KF}}}(3_1)\\
&=&-(-5a + 14 a^3 -12a^5 + 2 a^7 + a^9)z\nonumber\\
&&-(-2a + 11 a^3 -10 a^5 +a^7)z^3 - (2a^3-2a^5)z^5\nonumber
 \ea 
agreeing (up to an overall minus sign) with \cite{MorozovCheck}. 

Another example is the link $\mathcal{L}=L4a1\{1\}=T(2,4)=(\bigcirc,\bigcirc)$ with $lk(L4a1\{1\})=2$, for which $H(\mathcal{L})\neq H(\tilde{\mathcal{L}}=L4a1\{0\})$, is HZ-factorisable. Its 
 $c=0$ BPS invariants are the coefficients of the polynomial (\ref{hath}), which becomes  
 \ba\label{t24c0}
 \hat{h}&=&-2\left(\frac{a-a^{-1}}{z}\right)\left( a^{-4} H(L4a1\{1\})+ a^4 H(L4a1\{0\})\right)
 - 4 (\bar{H}(\bigcirc))^2 \nonumber\\
 &=& 6 a^{-2}- 10 + 2 a^2 + 2 a^4 + z^2 (-2 + 2 a^{-2}).
 \ea
  The BPS invariants for one crosscap $(c=1)$  are obtained from (\ref{c1general}) as
 \ba
 \sum_{g,Q} \hat N^{c=1} z^{2g+1} a^{Q}&=&a^{-4}\widehat{\overline{KF}}(L4a1\{1\}) 
 +2\frac{a-a^{-1}}{z}\nonumber\\
 &=& z \left(4a^{-1}- 4 a\right)+ z^3 \left(a^{-1}-a\right), 
 \ea
 in which the last term cancels the term of order $z^{-1}$.
 
For the  links $L6a3=T(2,6)=(\bigcirc,\bigcirc)$ with $\text{lk}(L6a3\{0\})=3$, and  $L7n1^+=T(3,3,2,1)=(3_1^+,\bigcirc)$ with $\text{lk}(L7n1\{0\}^+)=2$  the  BPS invariants 
 are similarly evaluated and are given in Tables~\ref{BPSL6a3} and Table~\ref{BPSL7n1}, respectively. The two crosscap BPS invariants  for $L4a1$, $L6a3$ and $L7n1^+$ are computed via (\ref{c2general}) to be 
  $\hat N^{c=2}=0.$  These evaluations  agree (up to some overall minus signs) with the 
 results provided to us, through a personal communication with the authors of \cite{MorozovCheck}.  Some further evaluations for links with factorised HZ transform are included in the Appendix.  
 
\begin{table}[h!]
\caption{BPS invariants for $L6a3$.}
\centering

\label{BPSL6a3}
\begin{tabular}{c}
\begin{minipage}{0.49\linewidth}
\centering
$\hat{N}^{c=0}_{g,Q}:$ \begin{tabular}{|l|lcccc| c |}
\hline
\small{g \textbackslash Q}&-2&0&2&4&6\\
\hline
0&  12& -20 & 6&0&2    \\
1&  10 & - 12& 2&0&0 \\
2&  2 & - 2& 0&0&0 \\
\hline

\hline
\end{tabular}
\end{minipage}
\hfill
\hspace{-6mm}
\begin{minipage}{0.49\linewidth}
\centering
$\hat{N}^{c=1}_{g,Q}:$ \begin{tabular}{|l|lc | c |}
\hline
\small{g \textbackslash Q}&-1&1\\
\hline
0&  9& -9    \\
1&  6 &  -6 \\
2&  1 &  -1\\
\hline
\end{tabular}
\end{minipage}
\end{tabular}
\end{table}

  \vspace{-1mm}
\begin{table}[h!]
\caption{BPS invariants for $L7n1^+ $.}
\centering

\label{BPSL7n1}
\begin{tabular}{c}
\begin{minipage}{0.49\linewidth}
\centering
$\hat{N}^{c=0}_{g,Q}:$ \begin{tabular}{|l|lccc| c |}
\hline
\small{g \textbackslash Q}&0&2&4&6\\
\hline
0&  10& -22 & 14&-2    \\
1&  10 & - 14& 4&0 \\
2&  2 & - 2& 0&0 \\
\hline
\end{tabular}
\end{minipage}
\hfill
\hspace{-6mm}
\begin{minipage}{0.49\linewidth}
\centering
$\hat{N}^{c=1}_{g,Q}:$ \begin{tabular}{|l|lcc | c |}
\hline
\small{g \textbackslash Q}&1&3&5\\
\hline
0&  8& -12&4    \\
1&  6 &  -7&1 \\
2&  1 &  -1&0\\
\hline
\end{tabular}
\end{minipage}
\end{tabular}
\end{table}

 \FloatBarrier

    We have confirmed that $\hat{N}^{c=2}=0$ is valid for all the up-to-12-crossings links  that admit HZ factorisability. The only exception is again $L10n42$, which does also not satisfy the HOMFLY--PT-Kauffman relation (\ref{linkcriterion}).  Note in fact that, unlike for knots, the condition $\hat{N}^{c=2}=0$ for 2-component links does not seem to be a direct consequence of (\ref{linkcriterion}). Additionally, it is interesting that although the HZ factorisability depends on the choice of orientation (unless the linking number is $0$),  the vanishing of the two-crosscap BPS invariants is orientation independent. This implies that further investigation is needed in order to understand the intricate inter-relations between HZ factorisability, the HOMFLY--PT-Kauffman relation and the vanishing of the two-crosscap BPS invariants in the case of links, and also to extend them to more than two components. Links are actually of interest from a physics viewpoint, due to the fact that    their Khovanov homology is more clearly related to the space of BPS states in M-theory  \cite{Witten2}.

\section{ Homological relation between Kh and HZ}\label{sec:Kh}
 Khovanov homology  (Kh)  is a knot invariant that can be thought of as  the categorification of  the Jones polynomial. 
 It is a doubly graded (co)homology theory, which is constructed in a way such that the (unnormalised) Jones polynomial takes the role of the
 graded Euler characteristic  \cite{Khovanov,Bar-Natan}.  Both the HOMFLY--PT polynomial and Khovanov homology  are strong knot invariants that generalise the Jones polynomial, but they are sensitive to different aspects of knots and are apriori unrelated to each other. 
In this section,  we establish a connection between Khovanov homology and the exponents of $q$ in the factorised HZ function, 
by associating them to the gradings of the non-trivial homology groups, as explained in the sequel.

The content of  Khovanov homology is contained  in the Khovanov polynomial (the notation here 
 follows\footnote{The subscript $\mathbb{Q}$ indicates that this is a rational homology. We further remark that here $t$ refers to an arbitrary parameter that is not related to $q$, as in the previous sections. 
To avoid confusion, within this section, the Jones polynomial will be expressed only in terms of $q^2$ or $q$.}  \cite{Bar-Natan1}) 
\be\label{eq:KhovanovPoly}
Kh_{\mathbb{Q}}(q,t) = \sum_{j,r} q^j t^r \dim{H^{r,j}}.
\ee
At $t=-1$, this polynomial can be thought of as a generating function for the 
Euler characteristic of the doubly-graded homology group $H^{r,j}$, with respect to one of the gradings.  
In particular, the Euler characteristic $\chi$, labeled as 
$\chi_j:=\chi(j)$  at each fixed $j$, is expressed as the alternating sum over $r$
of the dimensions of $H^{r,j}$, i.e.  $\chi_j=\sum_r (-1)^r \dim H^{r,j}$ (the coefficient of $q^j$).  
The $q$-graded Euler characteristic $Kh_{\mathbb{Q}}(q,t=-1)=\sum_j \chi_j q^j$ is, by construction, the unnormalised Jones polynomial $\bar{J}(q^2)$, 
while its normalised version is related to the Khovanov polynomial 
by 
\be\label{JonesKh}
J(q^2) = \frac{1}{q+q^{-1}} Kh_{\mathbb{Q}}(q,t=-1). 
\ee

The correspondence with the HZ exponents can be directly observed via the Khovanov homology tables, 
which list the dimensions of 
$H^{r,j}$, for each value of the gradings $r$ and $j$. For simple knots, such tables  are available in  \cite{Bar-Natan}, while, more generally, they can be derived
from a graphical
analysis using the Khovanov cube, as explained in \cite{Bar-Natan1}.  Direct comparison of the HZ function with the tables in  \cite{Bar-Natan}, reveals that the HZ exponents for a given knot coincide with the values of $j$ (label of the rows) of its Khovanov table, while their appearance in the HZ numerator or denominator corresponds to  odd or even parity of the grading $r$ (label of the columns), respectively. 

 Below we make this correspondence more explicit through Prop.~\ref{propJonesCoeffpm2} and through  several examples of  knots and links with factorised HZ transform, the Khovanov tables of which are obtained from\footnote{As before (c.f. footnote~\ref{ft:mirror-knot}), in some cases we consider the mirror image of knots as compared to the ones presented in \cite{Bar-Natan}. The difference of their Khovanov tables is a change of sign of the values of $r$ and $j$ and hence, also, in the signature $s$.}
\cite{Bar-Natan}. In these tables, we add an extra column labelled as  "HZ exponents" (or just "HZ"), indicating which numbers appear as exponents in the denominator  and which in the numerator.

\paragraph{Knots.} 
As a first, instructive example, we consider the right handed trefoil  knot $3_1=T(2,3)$, for which the HZ function is 
 \be 
 Z(3_1)= \frac{\lambda(1-q^9 \lambda)}{(1-q\lambda)(1-q^3 \lambda)(1-q^5\lambda)}\ee
and the Kh table is shown in Table~\ref{tab:kh31}.
\begin{table}[h!]
\caption{$ (\dim H^{r,j})$ for $3_{1}$ \cite{Bar-Natan}.}
\centering

\label{tab:kh31}
\begin{tabular}{|l|lccc | c |c|}
\hline
j \textbackslash r&3&2&1&0&$\chi $& HZ exponents\\
\hline
1&  &  & & \textcircled{1}&    1& denominator\\
3&   &  & & \textcircled{1}&  1& \\
5&   & \textcircled{1}& &  &   1&\\
\hline
7&  & $\ast$ &  & &  0 &\\
\hline
9& \fbox{1}&  & & &  -1 &numerator\\
\hline
\end{tabular}
\end{table}

\vskip 1mm
\noindent{}The $j$-value  $9$ (corresponding to the boxed entry  in Table 7) and  $(1,3,5)$ (circled entries), 
coincide with the exponents in the numerator and denominator  of $Z(3_1)$, respectively. The entry indicated by $\ast$ corresponds to the appearance of $\mathbb{Z}_2$ in the integral homology table (labelled as ${\rm dim}\mathcal{G}_{2r+i} KH_{\mathbb{Z}}^r$ in \cite{Bar-Natan}), as we shall explain below.
  Whenever $j$ coincides with an exponent in the numerator or denominator of HZ, their contribution to $\chi$ is equal to $-1$ or $+1$, respectively. As mentioned above, this is due to the different parity of $r$: odd for numerator exponents  and even for denominator exponents.

It is noteworthy that  the table can be split into two parts, with only  denominator or numerator exponents, since there are no overlapping values between them. 
We say that, the Kh table in the factorised cases can be \emph{separated}\footnote{  However, separability of the Kh table does not necessarily imply HZ factorisability. An example is the
 knot $11n_{24}$, whose HZ can be expressed as $Z=((1-\lambda q^9)(1-\lambda q^{-9})-(1-\lambda q^7)(1-\lambda q^{-7})+(1-\lambda q^5)(1-\lambda q^{-5}))/(1-\lambda q^{-3})(1-\lambda q^{-1})(1-\lambda q^{1})(1-\lambda q^3)$ which is clearly not factorisable, but its Kh table can be separated.}.   
Moreover, for simple knots with factorisable HZ transform, most entries in the Khovanov homology tables are   $1$. 
 
The Khovanov polynomial of the trefoil
becomes 
\be\label{KhQ}
Kh_{\mathbb{Q}}(3_1;q,t) = q^{1}+ q^{3} + q^{5} t^2 + q^{9} t^3.
\ee
Substituting $t=-1$, we obtain the generating function of the Euler characteristics $\chi$
\be\label{Kh3_1}
Kh_{\mathbb{Q}}(q,t=-1) =q^{1}+ q^{3}+ q^{5} - q^{9},
\ee
which reflects the above observations about the exponents of $q$ and the signs of the coefficients. 
Dividing with $(q+q^{-1})$ and further substituting  $q^2\mapsto q$, this results in the familiar Jones polynomial of the trefoil knot
\be\label{J31}
J(3_1;q)= - q^{4}+q^{3} + q^{1}.
\ee

The \emph{signature} $s=s(\mathcal{K})$ of a knot,  which is an invariant quantity defined as the difference between
 the number of positive and negative eigenvalues of its Seifert matrix, can also be  obtained from the Khovanov table  \cite{Bar-Natan} by 
\be\label{r,j}
j- 2 r = s+1 \hskip 2mm {\rm or} \hskip 2mm  j-2r=s-1.
\ee
In the example of $3_1$, from Table~\ref{tab:kh31} we observe that for $j=9$, $r=3$, the signature $s$ satisfies $j-2r=3=s+1$, while for  $j=1$,  $r=0$,  $j- 2r = 1 = s-1$, and hence $s(3_1)=2$.

\begin{remark}\label{rmk:altJones}
\textup{The Jones polynomial of the trefoil can alternatively be obtained
from its Khovanov table according to}
\be\label{remark5-2}
J(\mathcal{K};q)= q^{s/2}[ -1 + \sum_{j,r} (-1)^r q^{r} H^{r,j}],
\ee
\textup{which gives (\ref{J31}) for $s=2$. As we shall see in several examples below, (\ref{remark5-2}) is valid whenever the Kh table contains only 2 diagonal lines.} 
\end{remark}
The simple Kh table structure of the trefoil is shared for all 2-stranded torus knots $T(2,n)$. In general, it consists of a region with the rows $j$ corresponding to  the three HZ denominator exponents, and a region with $j$ corresponding to the single  numerator exponent,  with Euler characteristics $\chi=+1$ and $-1$, respectively. Between them there is a row with $\chi=0$, at which  $\ast$ is inserted, that
  connects the two regions via a "$\mathbb{Z}_2$-lego piece". 
  In particular, as mentioned above, the insertion of $\ast$ 
  indicates the presence of the $\mathbb{Z}_2$ torsion for the  integral homology at the specific location \cite{Bar-Natan} 
 (this can be evaluated from the cycles in the Kauffman state \cite{Przytyeki}). 
 The asterix $\ast$, positioned at $(j,r)$, 
 is accompanied by the entries $1$ at $(j-2,r)$ and  at $(j+2,r+1)$, which are  related to each other by a \emph{knight move} \cite{Shumakovitch}, that is, they are located one step up  and one right diagonal downward step as compared to $\ast$, respectively. 
 When grouped together with the $\mathbb{Z}_2$ entry, these yield 
 \be
 \begin{pmatrix}  & {1}\\
&{\mathbb{Z}_2}\\
{1}& \\
\end{pmatrix},
\ee
which is what we refer to as the \emph{$\mathbb{Z}_2$-lego piece}. For  factorised knots,
 the torsion $\mathbb{Z}_2$  almost always appears within the  
 knight move.

It is interesting to observe that the Khovanov table of the HZ-factorised knot $5_2^-$ can be obtained from the one for the torus knot $5_1^-$ by shifting all  $j$ to $j+2$ and then adding 
 a lego piece with the $\mathbb{Z}_2$ positioned at $(j,r)=(-5,-1)$.
 Similarly, the table of $8_{20}$ is obtained from the $5_2^-$ table by again shifting all $j\to j+2$ and adding a lego piece at the position $(1,1)$ (c.f Tables~\ref{table:kh5_2} and~\ref{tab:kh820}); and the one of $10_{125}$ by adding a lego piece at 
 $(7,3)$ to the $8_{20}$ table. This implies that there is a  successive sequence for factorised knots  $5_1^-\to 5_2^-\to 8_{20}\to 10_{125}$ obtained by addition of one $\mathbb{Z}_2$-lego piece. 
 Since these knots are, in fact, members of the pretzel family $P(2,-3,2j+1)$ with $j\geq -1$, and hence they are related by  adding a two-strand full twist on the rightmost tangle, the insertion of such  a lego piece may be related to this operation. 
 
 Moreover, the Kh table of $10_{132}$ ($\iota=4$) can be obtained from that of $5_2$ ($\iota=3$) with the shift of $(j,r)\to (j-2,r\to r-2)$ and the  addition of two lego pieces at $(-3,-1)$ and $(-7,-2)$. 
 The Kh table of $10_{161}$ ($\iota=3$) is obtained from the table of the torus knot $7_1$ ($\iota=2$) by inserting, again, two $\mathbb{Z}_2$-lego pieces, one at (-11,-3)  and one at 
 (-15,-5). 
 The integral homology of $10_{128}$ is similarly obtained  from $7_1$ by adding two lego pieces. 
 Therefore, under addition of two $\mathbb{Z}_2$-lego pieces, the sequences  $5_2 \Rightarrow 10_{132}$,  $7_1\Rightarrow 10_{161}$ and $7_1\Rightarrow 10_{128}$ are obtained. This operation seems to involve an increase in the braid index $\iota$. Finally, the Kh table of $10_{139}$ ($\iota=3$) is obtained from $9_1$ ($\iota=2$) by the shift of a $\mathbb{Z}_2$-lego piece.

\vskip1mm
\noindent{}Some further examples follow.
\vskip 2mm
 {\bf $\bullet$} $5_2^-$:
 $Z(5_2^-)= \frac{\lambda(1-q^{-3} \lambda)(1-q^{-13}\lambda)}{(1-q^{-1}\lambda)(1-q^{-3} \lambda)(1-q^{-5}\lambda)(1-q^{-7}\lambda)}$\\
\begin{table}[hbtp]
\caption{$ (\dim H^{r,j})$ for $5_2^-$.}
\centering

\label{table:kh5_2}
\centering
\begin{tabular}{|l|lccccc | c |c|}
\hline
j \textbackslash r&-5&-4&-3&-2&-1&0&$\chi $& HZ exponents\\
\hline
-1 &   &  &&& &\textcircled {1}&  1 &denominator\\
\hline
-3 &   & &  &   & \fbox{1}& \textcircled{1}&  0& denom. \& num.\\
\hline
-5&  &  &  &\textcircled{1}  & $\ast$  &   &  1&denominator\\
-7&  &  &  &\textcircled{1} $\ast$ &  &   &   1& denominator\\
\hline
-9&  &1 &1 &   &  & &   0 &      \\
-11& &$\ast$  &  &  &  & &    0& \\
\hline
-13&\fbox{1}& &   &   &  & & -1& numerator\\
\hline
\end{tabular}
\end{table}

\noindent{}\noindent{}In the Kh table of $5_2$, shown in  Table~\ref{table:kh5_2},  there  is an overlap of exponents at $j=-3$. Note, however, that due to the factorisability, the corresponding factors in the denominator and numerator cancel each other and the Euler characteristic $\chi$ in the row $j=-3$ of the Kh table~\ref{table:kh5_2} vanishes.  If we ignore such cancellations, we can still say that the Khovanov table is separable. 
The Jones polynomial, given by (\ref{JonesKh}),  becomes
$ J(5_2^-;q)= - q^{-6}+ q^{-5}- q^{-4}+ 2 q^{-3}- q^{-2} + q^{-1} $. This can also be obtained by Remark~\ref{rmk:altJones} with signature  $s(5_2^-)=-2$. 

\vskip 2mm
\newpage
 {\bf $\bullet$} $8_{20}$:
 $Z(8_{20})= \frac{\lambda(1-q^{11} \lambda)(1-q^{-3}\lambda)}{(1-q^{-1}\lambda)(1-q \lambda)(1-q^3\lambda)(1-q^5\lambda)}$
\begin{table}[hbtp]
\caption{$ (\dim H^{r,j})$ for $8_{20}$.}
\centering

\label{tab:kh820}
\centering
\begin{tabular}{|l|lcccccc | c |c|}
\hline
j \textbackslash r&-5&-4&-3&-2&-1&0&1&$\chi $& HZ\\
\hline
3&   &  &&& & &\fbox{1}    & -1& numerator\\
\hline
1&   &  &&& &\textcircled{1} & $\ast$  &1 &\\
-1&   & && &1 &1+\textcircled{1} &     &1 &\\
-3&   & &&\textcircled{1}& $\ast$  &  &     & 1  & denominator\\
-5&  &  &&\textcircled{1}  &   &   &  &  1&\\
\hline
-7&  &1  &1  &  &  &   &&   0& \\
-9&  &$\ast$ & &   &  & &&   0 &      \\
\hline
-11& \fbox{1}&  &  &  &  & &&    -1& numerator \\
\hline
\end{tabular}
\vskip2mm
\end{table}

\vskip 2mm
\noindent{}Its Kh table is shown in Table 9. The Jones polynomial $
J(8_{20}; q) = -q^5+q^4-q^3+ 2 q^2 - q + 2 - q^{-1}$ can also be obtained from Remark~\ref{rmk:altJones} with $s(8_{20})=0$. 
\vskip 2mm

 {\bf $\bullet$} Torus knot $T(3,5)=10_{124}$; $
Z(T(3,5))= \frac{\lambda(1-q^{21}\lambda)(1-q^{19}\lambda)}{(1-q^{13}\lambda)(1-q^{11}\lambda)(1-q^{9}\lambda)(1-q^7\lambda)}
$
\setlength{\tabcolsep}{4pt} 
\renewcommand{\arraystretch}{0.8}
\begin{table}[hbtp]
\caption{$ (\dim H^{r,j})$ for $T(3,5)=10_{124}$.}
\centering

\label{tab:kht35}
\centering
\begin{tabular}{|l|lccccccc | c |c|}
\hline
j \textbackslash r&0& 1&2&3&4&5&6&7&$\chi $& HZ\\
\hline
21&   &  &&& & & &\fbox{1}    & -1& numerator\\
19&   &  &&& &\fbox{1}& & $\ast$  &-1 & \\
\hline
17&   &  &&  &  &1 &$\underline{1}$ &     &0 &\\
15&   &  &&1& 1&  &   &  &0& \\
\hline
13&  &  && $\ast$ &  \textcircled{\underline{1}}&   &  &  & 1&\\
11&  &  &\textcircled{1}&  &  &   &&   &1& \\
9& \textcircled{1} & & &   &  & & &&   1 &  denominator    \\
7& \textcircled{1} & & &   &  & & && 1& \\
\hline
\end{tabular}
\end{table}
\vskip2mm

\noindent{}The Khovanov homology is shown in Table~\ref{tab:kht35}. The Jones polynomial $J(10_{124};q)=-q^{10}+q^6+q^4$, can be obtained from $\sum q^j \chi_j= q^7+q^9+q^{11}-q^{19}-q^{21}$ after dividing by $(q+q^{-1})$ and replacing $q^2\to q$. The signature is $s(10_{124})=8$. However, Remark~\ref{rmk:altJones} does not hold in this case, since there more than $2$ diagonal lines 
in the Kh table. The entries belonging to the third diagonal are underlined (these correspond to the entries colored by red in \cite{Bar-Natan}).

It is remarkable that the entries in the homological tables of torus knots $T(3,n)$ have a (roughly) reflection symmetric structure about the main diagonal. 
This excludes the entries of the form $\begin{pmatrix}1\\1\end{pmatrix}$ appearing in the $r=4 k$ or $r=0$ column, when $n= 3 k + 1$ or  $n= 3 k  + 2 $, respectively. In the latter case, this can be seen in Table~\ref{tab:kht35} for $n=5$.

\vskip 2mm
 {\bf $\bullet$} Torus knot $T(4,7)$:
$
Z(T(7,4))= \frac{\lambda(1-q^{37}\lambda)(1-q^{35}\lambda)(1-q^{33}\lambda)}{(1-q^{25}\lambda)(1-q^{23}\lambda)(1-q^{21}\lambda)(1-q^{19}\lambda)(1-q^{17}\lambda)}
$\\
\setlength{\tabcolsep}{4pt} 
\renewcommand{\arraystretch}{0.8}
\begin{table}[hbtp]
\caption{$ (\dim H^{r,j})$ for $T(4,7)$.}
\centering

\label{tab:kht47}
\begin{tabular}{|l|lccccccccccccc | c |c|}
\hline
j \textbackslash r&0& 1&2&3&4&5&6&7&8&9&10&11&12&13&$\chi $& HZ\\
\hline
39&   &  &  &  & & & &  & & & & & 1& 1 & 0&\\
\hline
37&   &  &  & & & & &  & &  & & \fbox{1}&$\ast$  &$\ast \ast$   & -1&\\
35&   &  & &  & & & &  &  & &$\ast$ & \fbox{1}+1$\ast$&$\underline{1}$ &  &-1& num.\\
33&   &  &  &  & &  & &  & &\fbox{1}+1 &1$\ast$ &$\ast$ &   & & -1 &\\
\hline
31&  &   &  &  &  &   & & $\underline{1}$& &$\ast\oplus\ast\ast$ & $\underline{1}$&  &  &  &0     &\\
29&  &  &   &  &  & $\underline{1}$ & &   1$\ast$& 2&  &  &  & & &0 &\\
27&  &  &   &  &  &   $\underline{1}$& 1& $\ast$  &   &   & &  & &  & 0 &    \\
\hline
25&  & & &  $\underline{1}$ & $\underline{1}$ & & \textcircled{1}& & & & & &  && 1& \\
23&  & & & $\ast$    & \textcircled{1} & & & & & & & & && 1& \\
21& & & \textcircled{\underline{1}}&     &  & & & & & & & & && 1& denom.\\
19&\textcircled{\underline{1}} & & &     &  & & & & & & & & && 1& \\
17&\textcircled{\underline{1}} & & &     &  & & & & & & & & && 1& \\
\hline
\end{tabular}
\vskip2mm
\end{table}

\noindent{}In Table~\ref{tab:kht47} the notation $\ast$ and $\ast\ast$ is used to denote the presence of $\mathbb{Z}_2$ and $\mathbb{Z}_4$, respectively, in the integral homology. The Jones polynomial is $J(q) = q^9+q^{11}+q^{13} - q^{14}+q^{15} -q^{16}-q^{18}$. The signature of $T(4,7)$ is $s=14$, but  again Remark~\ref{rmk:altJones} does not hold since there are five diagonals in the Kh table (there are 10 off-diagonal entries, which are underlined  in Table~\ref{tab:kht47}).
The configuration of entries in the last 5 rows in the Kh table of $T(4,n)$, for all  odd $n$, contains a fixed pattern,  
which includes 
one $\mathbb{Z}_2 $, indicated as $(\ast)$, and the entries \textcircled{1} at the values of $j$ corresponding to denominator exponents and at even values of $r$.

   \begin{remark}\label{rmk5-1}
\textup{The sum of Euler characteristics $\chi$ over all $j$ is always equal to $2$. This is due to the fact that the Jones polynomial for all knots satisfies
$J(q=1)=1$ (c.f. Prop.~\ref{prop:Zq1}), while  the value $2$ comes form  the normalisation factor  
$\left (q+q^{-1})\right\vert_{q=1}=2$.}
\end{remark}

\begin{prop}\label{propJonesCoeffpm2}
     When a knot $\mathcal{K}$ is HZ-factorisable,
     i.e. $Z(\mathcal{K})$ is of the form (\ref{eq:factorisableZ}), then its graded Euler characteristic, or unnormalised Jones polynomial, can be written as 
     \be\label{JonesfromHZ}
     Kh_{\mathbb{Q}}(\mathcal{K};q,-1)=\bar{J}(\mathcal{K};q^2)=-\sum_{i=0}^{m-2}q^{\alpha_i}+\sum_{i=0}^m q^{\beta_i}. 
     \ee
    
\end{prop}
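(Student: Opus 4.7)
The plan is to read off the $\lambda^2$-coefficient of $Z(\mathcal{K};\lambda,q)$ directly from the factorised form (\ref{eq:factorisableZ}), and to match it with $\bar{J}(\mathcal{K};q^2)$ via the generating-function interpretation of the HZ transform.

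First I would observe that, by the defining series (\ref{KhovanovC}), $[\lambda^N]Z(\mathcal{K};\lambda,q)=\bar{H}(\mathcal{K};q^N,q)$ for every $N\geq 0$. Setting $N=2$ and invoking (\ref{SU(N)invariant}) gives $\bar{H}(\mathcal{K};q^2,q)=(q+q^{-1})H(\mathcal{K};q^2,q-q^{-1})=(q+q^{-1})J(\mathcal{K};q^2)=\bar{J}(\mathcal{K};q^2)$, which by (\ref{JonesKh}) is the graded Euler characteristic $Kh_{\mathbb{Q}}(\mathcal{K};q,t=-1)$. So the claim reduces to computing $[\lambda^2]Z(\mathcal{K};\lambda,q)$ in closed form under the factorisation hypothesis.

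Next I would substitute (\ref{eq:factorisableZ}), absorb the overall $\lambda$ in the numerator, and expand
$$f(\lambda):=\frac{\prod_{i=0}^{m-2}(1-\lambda q^{\alpha_i})}{\prod_{i=0}^{m}(1-\lambda q^{\beta_i})}$$
to first order in $\lambda$. Using $(1-\lambda q^{\beta_i})^{-1}=1+q^{\beta_i}\lambda+O(\lambda^2)$ for each denominator factor and the exact expansion $1-\lambda q^{\alpha_i}$ for each numerator factor, the constant term of $f(\lambda)$ is $1$ and at order $\lambda$ exactly one factor at a time can contribute a linear term while every other factor contributes its constant $1$, yielding
$$[\lambda^1]f(\lambda)=\sum_{i=0}^{m}q^{\beta_i}-\sum_{i=0}^{m-2}q^{\alpha_i}.$$
Combined with the overall $\lambda$ in the numerator, this is precisely the right-hand side of (\ref{JonesfromHZ}).

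No genuine obstacle is expected: the argument is a one-line Taylor expansion and all the work is done by the factorisability hypothesis. The substantive content of the proposition is thus conceptual rather than technical---it exhibits that, for HZ-factorisable knots, $\bar{J}(q^2)$ has coefficients in $\{-1,0,+1\}$ supported exactly on the HZ exponents $\{\alpha_i\}\cup\{\beta_i\}$, with sign determined by whether the exponent appears in the numerator or denominator of $Z$. In particular, the HOMFLY--PT polynomial and the Jones polynomial carry equivalent information in the factorised case, consistent with the remarks made around Tables~7ff.
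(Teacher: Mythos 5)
Your proposal is correct and is essentially the paper's own argument: the paper applies Corollary~\ref{corInverseJones} (dropping the $(q+q^{-1})^{-1}$ normalisation) to compute $\bar{J}(q^2)=\mathrm{Res}_{\lambda=0}(\lambda^{-3}Z)=\lim_{\lambda\to0}\frac{d}{d\lambda}\frac{\prod_i(1-\lambda q^{\alpha_i})}{\prod_i(1-\lambda q^{\beta_i})}$, which is exactly your extraction of the $\lambda^2$-coefficient of $Z$ via a first-order Taylor expansion. The only cosmetic difference is that you invoke the defining series (\ref{KhovanovC}) at $N=2$ directly rather than routing through the contour-integral formulation.
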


\begin{proof} 
    This can be easily proven by applying Cor.~\ref{corInverseJones} (excluding the normalisation factor $(q+q^{-1})^{-1}$ in (\ref{JonesContour})) to the general expression of the factorised HZ transform (\ref{eq:factorisableZ}). Explicitly,  $\bar{J}(q^2)=Res_{\lambda=0}(\lambda^{-3}Z)=\lim_{\lambda\to0}\frac{d}{d\lambda}\frac{\prod_{i=0}^{m-2}(1-\lambda q^{\alpha_i})}{\prod_{i=0}^{m}(1-\lambda q^{\beta_i})}=-\sum_{i=0}^{m-2}q^{\alpha_i}+\sum_{i=0}^m q^{\beta_i}$, where $\beta_i=e+2i$.
\end{proof}

Since $Kh_{\mathbb{Q}}(\mathcal{K};q,-1)=\sum_j\chi_j q^j$,
     (\ref{JonesfromHZ}) indeed implies that the Euler characteristic $\chi_j$ in the Khovanov homology of $\mathcal{K}$ is equal to  $ +1$ or $-1$ when $j$ coincides with an HZ  exponent in the denominator or numerator, respectively. Note that after dividing with $q+q^{-1}$ to obtain the normalised Jones polynomial, 
 the coefficients are often again just $\pm1$ (at least for simple knots), but there are exceptions such as $J(10_{128};q)=-q^{10}+q^9-2 q^8+2 q^7-q^6+2 q^5-q^4+q^3$  and 
$
J(8_{20}\otimes 10_{161};q) =
q^4-q^5+ 2 q^6 - q^7 + q^8 - q^9 + q^{10} -q^{11}+q^{12}-q^{13}+q^{14}-q^{15}$). Moreover, it should be pointed out that when the Jones polynomial is of the form (\ref{JonesfromHZ}), it does not necessarily imply HZ factorisability.

\begin{remark}
    \textup{
    Proposition~\ref{propJonesCoeffpm2} implies that in the factorised cases, $\bar{J}(\mathcal{K})$   essentially contains the same information as $Z(\mathcal{K})$ and hence it is equivalent to the HOMFLY--PT polynomial. Furthermore, if the Khovanov homology is restricted to the 2 main diagonals, the Khovanov table can be reconstructed by $\bar{J}$ plus the signature $s$. 
    Since the Alexander polynomial can be obtained from $Z(\mathcal{K})$ via Theorem~\ref{thmInverseAlex}, this implies that, for such knots, the Alexander polynomial can be related to the Jones polynomial and its categorification.}
    \end{remark}

It is noteworthy that neither Khovanov homology nor the HOMFLY--PT polynomial are stronger to each other. Examples that show this are  the pairs of knots $8_9$ and $12n_{462}$ or $8_8$ and $10_{129}$, which have the same Khovanov homology but are distinguished by the HOMFLY--PT polynomial. On the other hand, $Kh(5_1)\neq Kh(10_{132})$, while $H(5_1)= H(10_{132})$, but they have different signatures as $s(5_1)=4$ and $s(10_{132})=0$.

\paragraph{Links.}
In Sec.~\ref{sec:factorisedHZ}, we found that several two-component links also admit a factorisable HZ and hence it is natural to also consider their relations to Khovanov homology. 
An explicit example belonging to the family of 2-stranded torus links  $T(2,2k)$, is  $L6a3\{0\}=T(2,6)$, for which
\be\label{T(2,6)}
Z(L6a3\{0\}) = \frac{ \lambda(1+q^{18}\lambda)}{(1-q^4 \lambda)(1-q^6 \lambda)(1-q^8 \lambda)}
\ee
and the Kh table is shown in Table~\ref{tab:L5a3}.
\begin{table}[hbtp]
\caption{$ (\dim H^{r,j})$ for $L6a3\{0\}$.}
\centering

\label{tab:L5a3}
\begin{tabular}{|l|lcccccc|c|c|}
\hline
$j$ \textbackslash $r$&6&5&4&3&2&1&0&$\chi $&HZ\\
\hline
4&    & & & & &     &\textcircled{1}  &1&\\
6 &     & & & & &   & \textcircled{1}&1& denom.\\
8&     & & &  &\textcircled{1}&   &  &1&\\
\hline
10&     & &  &&$\ast$&    & & 0&\\
12&     &  &1&1& &    & & 0&\\
14&    &  &$\ast$&&  &    & & 0&\\
16&  1  &1&& &  &    & &  0& \\
\hline
18&  \fbox{1} && & &  &   & & 1&num.\\ 
\hline
\end{tabular}
\end{table}

\noindent{}The Euler characteristics  $\chi$ at each $j$ in the Kh homology table can again be determined by the exponents of the HZ transform, but now it is equal to $+1$ for $j$'s corresponding to both denominator and numerator exponents. This is related to the positive sign in the numerator of (\ref{T(2,6)}).

\begin{prop}\label{propJonesLinkCoeff1}
 When a 2-component link $\mathcal{L}$  has a factorised HZ transform  of its HOMFLY--PT polynomial with $m\leq 4$, i.e. $Z(\mathcal{L})$ is of the form (\ref{T(2,2j)}), (\ref{eq:factorisableZZ}), or (\ref{T4n21}), then the graded Euler characteristic, or unnormalised Jones polynomial, can be written as 
     \be
     Kh_{\mathbb{Q}}(\mathcal{L};q,-1)=\bar{J}(\mathcal{L};q^2)=q^{\al_0}-\sum_{i=1}^{m-2}q^{\alpha_i}+\sum_{i=0}^m q^{\beta_i}. \ee
\end{prop}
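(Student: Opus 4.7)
My plan is to prove the identity by direct coefficient extraction from the two possible factorised forms, in close parallel with the proof of Proposition~\ref{propJonesCoeffpm2}. The starting point is that from the defining series $Z(\mathcal{L};\lambda,q)=\sum_{N\ge 0}\bar H(\mathcal{L};q^N,q)\lambda^N$, the unnormalised Jones polynomial is nothing but the coefficient of $\lambda^{2}$, equivalently $\mathrm{Res}_{\lambda=0}\lambda^{-3}Z$. This is the same residue extraction used in Corollary~\ref{corInverseJones}, applied now without the overall factor $1/(q+q^{-1})$, since the object to be produced is the unnormalised polynomial $\bar J$, which agrees with the graded Euler characteristic $Kh_{\mathbb{Q}}(q,-1)$ by the link analogue of (\ref{JonesKh}).

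The computation is then a formal Maclaurin expansion. For $Z$ in the form (\ref{eq:factorisableZZ}) with $m=3$, I would write
\[
\prod_{i=0}^{3}\frac{1}{1-\lambda q^{\beta_i}}=1+\lambda\sum_{i=0}^{3}q^{\beta_i}+O(\lambda^{2})
\]
and multiply by $(1+\lambda q^{\alpha_0})(1-\lambda q^{\alpha_1})=1+\lambda(q^{\alpha_0}-q^{\alpha_1})+O(\lambda^{2})$. The coefficient of $\lambda$ in the product, which is $[\lambda^{2}]Z$, then equals
\[
q^{\alpha_0}-q^{\alpha_1}+\sum_{i=0}^{3}q^{\beta_i}=\sum_{i=0}^{m-2}(-1)^{i}q^{\alpha_i}+\sum_{i=0}^{m}q^{\beta_i}.
\]
The case $m=2$ corresponding to (\ref{T(2,2j)}), with only one numerator factor $(1+\lambda q^{\alpha_0})$, is the analogous one-line expansion, yielding $q^{\alpha_0}+\sum_{i=0}^{2}q^{\beta_i}$, which matches the right-hand side since $(-1)^{0}=1$ is the only sign appearing.

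Because the whole argument is a formal power-series manipulation, I do not foresee any real obstacle; the only subtlety is bookkeeping, specifically the sign convention of (\ref{SU(N)links}) for links with an even number of components. That sign has already been absorbed into the definition of $Z(\mathcal{L})$ used throughout the paper, so the coefficient extraction $[\lambda^{2}]Z=\bar H(\mathcal{L};q^{2},q)=\bar J(\mathcal{L};q^{2})$ produces the correctly signed unnormalised Jones polynomial with no further adjustment. A natural follow-up, which lies outside the scope of this proposition but is suggested by the same method, is the extension to $m\ge 4$: the Maclaurin expansion still works in principle, but one would first have to identify the sign pattern among the numerator factors, which has not been fixed here.
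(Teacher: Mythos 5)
Your proof is correct and follows essentially the same route as the paper: the paper also proves this (via the analogous Proposition~\ref{propJonesCoeffpm2}) by extracting the coefficient of $\lambda^2$ in $Z$ as $\mathrm{Res}_{\lambda=0}(\lambda^{-3}Z)=\lim_{\lambda\to 0}\frac{d}{d\lambda}$ of the rational factor, which is the same first-order expansion you carry out explicitly. Your handling of the sign from the numerator factor $(1+\lambda q^{\alpha_0})(1-\lambda q^{\alpha_1})$ and of the $m=2$ case (\ref{T(2,2j)}) matches the claimed formula.
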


\begin{proof}
    The proof is very similar to that of Proposition~\ref{propJonesCoeffpm2}.
\end{proof}

The Jones polynomial of $L6a3\{0\}$ can also be obtained by the alternating sum of 
${\rm dim}H^{r,j}$ (with signature $s=5$)
\ba
J(L6a3)& =& q^{s/2} \sum_j \sum_r (-1)^r {\rm dim} H^{r,j}\nonumber\\
&=& -q^{9/2}-q^{5/2} -q^{17/2}+q^{15/2}-q^{13/2}+q^{11/2}.
\ea

\begin{remark}\label{rmk:sumofEuler}
   \textup{ The sum of all Euler characteristics $\chi$ for a link 
with $l$ components satisfies $\sum_j \chi_j=2^l$. This is the same as the value  of their Jones polynomial at $q=1$, which is $2^{l-1}$ (c.f. Prop.~\ref{prop:Zq1}), multiplied by a factor $2=q+q^{-1}|_{q=1}$.
This is consistent with the sum of Euler characteristics for knots ($l=1$), which, as mentioned in Remark~\ref{rmk5-1}, it is always equal to $2$.}
\end{remark}

As a generalisation of the $sl(2)$ case in the above remark, which corresponds to Khovanov homology, we suggest that the sum of  Euler characteristics corresponding to an $sl(N)$ homology is equal to $N^l$, which is the value of the unnormalised HOMFLY--PT polynomial at $q\to 1$, as given in Prop.~\ref{prop:Zq1}.

\section{ Summary and discussions\label{Sec:summary} }

Extending the previous study \cite{Petrou}, we found  new infinite families of hyperbolic knots and links whose HOMFLY--PT polynomial has a factorised HZ transform. We prescribed methods for generating them involving full twists $F_m$ and  Jucys-Murphy twists $E_m$. Moreover, we have derived a closed formula (\ref{FactorisedHOMFLY}) for the HOMFLY--PT polynomial for such families, obtained via inverse HZ transform, that can be expressed in terms 
 of the number of strands $m$, and the HZ numerator and denominator exponents, which depend on the number of additional twists $F_m$ and $E_m$.

   In Theorem~\ref{thm:}  we have proven that   relation (\ref{KFandH1}),  which was known to  hold only between the Kauffman and HOMFLY--PT polynomials of torus knots, is also valid for the newly discovered hyperbolic families. This motivates the Conjecture~\ref{HomflyANDKauff}, which suggests that such a HOMFLY--PT/Kauffman relation is in 1-1 correspondence with HZ factorisability.   We further showed in (\ref{BPSandHKF}) that 
   relation (\ref{KFandH1})  is equivalent to the vanishing of the number $\hat{N}_{g,Q}^{c=2}$ of  BPS states  with two-crosscaps, in topological string theory. In the case of two-component links, a similar relation between the HOMFLY--PT and  Kauffman polynomials was proposed in (\ref{linkcriterion}) which, again, seems to correspond to the property $\hat{N}_{g,Q}^{c=2}=0$. 
   A more thorough investigation is required, however, to fully understand how the latter is implied by the 
   HOMFLY--PT/Kauffman relation 
   in the case of links.

The  vanishing of the 2-crosscap BPS degeneracy 
for these HZ-factorisable 
 knots and links 
 is a peculiar property whose physical interpretation remains largely  mysterious. 
It shows a relation between knots, links and supersymmetric gauge theory, similar to the relation between the HOMPLY-PT polynomial and Wilson loops in CS theory. 
Speculatively, it may be thought of as a supersymmetry manifesting in the HZ transform, 
viewed  as a Hilbert or Poincar\'e 
 series. Further investigation may provide the connection to supermatrices, which are realised  in the intersection numbers of a moduli space \cite{ BrezinHikamianti,Hikami,HikamiBrezinII,Hikami6}.

The relation between BPS states and link homology is also attractive. In \cite{Witten2},
  a new interpretation of Khovanov homology via a time-dependent gauge theory  is suggested; 
  while, in \cite{Gukov}, it is conjectured that the $\mathfrak{sl}(N)$ homology groups, which form the Khovanov-Rozansky homology, 
can be interpreted in terms of the BPS invariants of oriented topological strings. 
The latter refer to a refinement, with respect to an additional grading, 
of the integers $\hat{N}_{g,Q}$ introduced in Sec.~\ref{sub:BPS}.  Such a conjecture is based on the fact that the BPS spectrum $\hat{N}_{g,Q}$  can be obtained as the Euler characteristics of the cohomology of the moduli space of holomorphic curves of genus g, that have a  boundary ending on the Lagrangian submanifold associated to the knot.

The HZ transform, which was originally employed to compute the Euler characteristics of the moduli space of Riemann surfaces, maps the information contained in the integers $\hat{N}_{g,Q}$, into the sets of numerator and denominator exponents. In Sec.~\ref{sec:Kh} we have shown that when the HZ transform is factorisable, these exponents can be extracted from just the unnormalised Jones polynomial (c.f. Prop~\ref{propJonesCoeffpm2})  and their values are associated to the double-grading of the $\mathfrak{sl}(2)$ homology. 
Although the construction of a Kauffman link homology is a hard problem that remains open, it would be an interesting future task to investigate a homological  interpretation of the unoriented BPS invariants  $\hat{N}_{g,Q}^{c}$ ($c\neq 0$) in this simplified scenario, when a relation  between the $SU(N)$ 
and the $SO(N+1)$ invariants exists.

 \paragraph{Declaration of competing interest.} The authors have no competing interest to disclose.

\paragraph{Acknowledgements.}
We are grateful to Andrew Lobb for several useful discussions
and to Michael Willis for pointing out the Jucys--Murphy braid  to us. We also thank Alexei and Andrey Morozov for providing us with their knot data tables. We are grateful to Louis Kauffman for his insightful comments about knot polynomials.
We acknowledge   Dror Bar-Natan for his several suggestive lectures, including the one  at the  University of Tokyo-OIST Symposium in September 2023. We also thank Stavros Garoufalidis for  suggesting several computer softwares for knot invariants. We are grateful to Reiko Toriumi and her unit members for constructive feedback and especially Cihan Pazarbaşı for assisting with Mathematica coding. 
This work is supported by OIST funding  and by JSPS Kakenhi-19H01813.  

\newpage
\section*{Appendix. BPS invariants for links admitting HZ factorisation }

The BPS invariants were introduced in Sec.~\ref{sub:BPS}. Here we present the results for $\hat{N}^{c=0}_{g,Q}$, $\hat{N}^{c=1}_{g,Q}$ and $\hat{N}^{c=2}_{g,Q}$, obtained via (\ref{hath}), (\ref{c1general}) and (\ref{c2general}), respectively, for the links that admit HZ factorisability. In all cases, with the only exception being $L10n42$, we find $\hat{N}^{c=2}_{g,Q}=0$.
\vskip2mm
 \noindent{}$\bullet$ $T(2,8)=L8a14=(\bigcirc,\bigcirc)$; $lk(L8a14\{0\})=4$

 \vspace{-2mm}
\begin{table}[h!]
\begin{tabular}{c}
\begin{minipage}{0.49\linewidth}
\centering
$\hat{N}^{c=0}_{g,Q}:$
\begin{tabular}{|l|lccccc | c |}
\hline
\small{g \textbackslash Q}&-2&0&2&4&6&8\\
\hline
0&  20& -34 & 12&0&0&2    \\
1&  30 & - 40& 10&0&0&0 \\
2&  14 & - 16& 2&0&0&0 \\
3&  2 & - 2& 0&0&0&0 \\
\hline

\hline
\end{tabular}
\end{minipage}
\hfill
\hspace{-6mm}
\begin{minipage}{0.49\linewidth}
$\hat{N}^{c=1}_{g,Q}:$
\centering
\begin{tabular}{|l|lc | c |}
\hline
\small{g \textbackslash Q}&-1&1\\
\hline
0&  16& -16    \\
1&  20 &  -20 \\
2&  8 &  -8\\
3&  1 &  -1\\
\hline
\end{tabular}
\end{minipage}
\end{tabular}
\end{table}

\noindent{}$\bullet$ $T(2,10)=L10a118=(\bigcirc,\bigcirc)$; $lk(L10a118\{0\})=5$ 
 \vspace{-2mm}
\begin{table}[h!]
\begin{tabular}{c}
\begin{minipage}{0.54\linewidth}
\centering
$\hat{N}^{c=0}_{g,Q}:$
\begin{tabular}{|l|lcccccc | c |}
\hline
\small{g \textbackslash Q}&-2&0&2&4&6&8&10\\
\hline
0&  30& -52 & 20&0&0&0&2    \\
1&  70 & - 100& 30&0&0&0&0 \\
2&  56 & - 70& 14&0&0&0&0 \\
3&  18 & - 20& 2&0&0&0 &0\\
4&  2 & - 2& 0&0&0&0 &0\\
\hline

\hline
\end{tabular}
\end{minipage}
\hfill
\hspace{-6mm}
\begin{minipage}{0.45\linewidth}
$\hat{N}^{c=1}_{g,Q}:$
\centering
\begin{tabular}{|l|lc | c |}
\hline
\small{g \textbackslash Q}&-1&1\\
\hline
0&  25& -25   \\
1&  50 &  -50 \\
2&  35 &  -35\\
3&  10 &  -10\\
4&  1 &  -1\\
\hline
\end{tabular}
\end{minipage}
\end{tabular}
\end{table}

From the above two cases and also from the results for $T(2,4)$ and $T(2,6)$ given in (\ref{t24c0}) and   in Table~\ref{BPSL6a3}, we can observe that the torus links $T(2,2k)$ follow the pattern shown below, in which the sign of non-zero entries is indicated. 
 \vspace{-2mm}
\begin{table}[h!]
\begin{tabular}{c}
\begin{minipage}{0.49\linewidth}
\centering
$\hat{N}^{c=0}_{g,Q}:$
\begin{tabular}{|l|lccccc | c |}
\hline
\small{g \textbackslash Q}&-2&0&2&4&...&2k\\
\hline
0&  +& - & +&0&...&2    \\
1&  +& - & +&0&...&0 \\
\vdots&   &&\vdots &&& \\
k-2&  +& - & +&0&...&0 \\
k-1&  2 & - 2& 0&0&...&0 \\
\hline

\hline
\end{tabular}
\end{minipage}
\hfill
\hspace{-6mm}
\begin{minipage}{0.49\linewidth}
$\hat{N}^{c=1}_{g,Q}:$
\centering
\begin{tabular}{|l|lc | c |}
\hline
\small{g \textbackslash Q}&-1&1\\
\hline
0&  +& - \\
\vdots&  \vdots & \vdots  \\
k-3&  +&  -\\
k-2&  2k &  -2k\\
k-1&  1 &  -1\\
\hline
\end{tabular}
\end{minipage}
\end{tabular}
\end{table} 
 \noindent{}$\bullet$ $L7n2=(3_1^-,\bigcirc)$; $lk(L7n2)=0$ 
 \vspace{-1mm}
\begin{table}[h!]
\begin{tabular}{c}
\begin{minipage}{0.49\linewidth}
\centering
$\hat{N}^{c=0}_{g,Q}:$
\begin{tabular}{|l|lccc | c |}
\hline
\small{g \textbackslash Q}&-6&-4&-2&0\\
\hline
0&  -4& 12 & -12& 4    \\
1&  0 &  4& -4&0 \\
\hline

\hline
\end{tabular}
\end{minipage}
\hfill
\hspace{-6mm}
\begin{minipage}{0.49\linewidth}
$\hat{N}^{c=1}_{g,Q}:$
\centering
\begin{tabular}{|l|lccc | c |}
\hline
\small{g \textbackslash Q}&-7&-5&-3&-1\\
\hline
0&  3& -9 & 9& -3    \\
1&  1 &  -6& 6& -1 \\
2&  0 &  -1& 1&0 \\
\hline
\end{tabular}
\end{minipage}
\end{tabular}
\end{table}

 \noindent{}$\bullet$ $L9n12=(5_1^+,\bigcirc)$;  $lk(L9n12\{1\})=2$ 
 \vspace{-1mm}
\begin{table}[h!]
\begin{tabular}{c}
\begin{minipage}{0.49\linewidth}
\centering
$\hat{N}^{c=0}_{g,Q}:$
\begin{tabular}{|l|lcccc | c |}
\hline
\small{g \textbackslash Q}&2&4&6&8&10\\
\hline
0&  16& -38 & 26& -2&-2    \\
1&  28 &  -50& 22& 0&0 \\
2&  14 & - 18& 4& 0&0 \\
3&  2 &  -2& 0& 0&0 \\
\hline

\hline
\end{tabular}
\end{minipage}
\hfill
\hspace{-4mm}
\begin{minipage}{0.49\linewidth}
$\hat{N}^{c=1}_{g,Q}:$
\centering
\begin{tabular}{|l|lcccc | c |}
\hline
\small{g \textbackslash Q}&3&5&7&9&11\\
\hline
0&  15& -26 & 8& 6&-3    \\
1&  20 &  -30& 6& 5&-1 \\
2&  8 &  -10& 1& 1&0 \\
3&  1 &  -1& 0& 0&0 \\
\hline
\end{tabular}
\end{minipage}
\end{tabular}
\end{table}

\noindent{}$\bullet$ $L9n14=(3_1^+,\bigcirc)$; $lk(L9n14\{0\})=-1$ 
 \vspace{-2mm}
\begin{table}[h!]
\begin{tabular}{c}
\begin{minipage}{0.49\linewidth}
\centering
$\hat{N}^{c=0}_{g,Q}:$
\begin{tabular}{|l|lcccc| c |}
\hline
\small{g \textbackslash Q}&-2&0&2&4&6\\
\hline
0&  2& 4 & -22& 24&-8    \\
1& 0&2& -14 &  14& -2 \\
2&  0&0 & - 2& 2& 0 \\
\hline

\hline
\end{tabular}
\end{minipage}
\hfill
\hspace{-4mm}
\begin{minipage}{0.49\linewidth}
$\hat{N}^{c=1}_{g,Q}:$
\centering
\begin{tabular}{|l|lccc | c |}
\hline
\small{g \textbackslash Q}&1&3&5&7\\
\hline
0&  8& -21 & 19&-6    \\
1&  6&-22 &  21& -5 \\
2& 1&- 8 &  8&- 1 \\
3& 0& -1 &  1& 0 \\
\hline
\end{tabular}
\end{minipage}
\end{tabular}
\end{table}
\newpage

\noindent{}$\bullet$ $L9n15^+=T(3,4,2,1)=(3_1^+,\bigcirc)$; $lk(L9n15\{0\})=3$ 
 \vspace{-2mm}
\begin{table}[h!]
\begin{tabular}{c}
\begin{minipage}{0.49\linewidth}
\centering
$\hat{N}^{c=0}_{g,Q}:$
\begin{tabular}{|l|lccc| c |}
\hline
\small{g \textbackslash Q}&0&2&4&6\\
\hline
0&  20& -42 & 24& -2    \\
1&  30 &  -44& 14& 0 \\
2&  14 & - 16& 2& 0 \\
3&  2 &  -2& 0& 0 \\
\hline

\hline
\end{tabular}
\end{minipage}
\hfill
\hspace{-8mm}
\begin{minipage}{0.49\linewidth}
$\hat{N}^{c=1}_{g,Q}:$
\centering
\begin{tabular}{|l|lcc | c |}
\hline
\small{g \textbackslash Q}&1&3&5\\
\hline
0&  18& -27 & 9    \\
1&  21 &  -27& 6 \\
2&  8 &  -9& 1 \\
3&  1 &  -1& 0 \\
\hline
\end{tabular}
\end{minipage}
\end{tabular}
\end{table}

\noindent{}$\bullet$ $L10n42=(3_1^+,\bigcirc)$; $lk(L10n42\{1\})=-1$ 

 \vspace{-2mm}
\begin{table}[h!]
\begin{tabular}{c}
\begin{minipage}{0.49\linewidth}
\centering
$\hat{N}^{c=0}_{g,Q}:$
\begin{tabular}{|l|lcccc| c |}
\hline
\small{g \textbackslash Q}&0&2&4&6&8\\
\hline
0&  14& -42 & 44& -18&2    \\
1&  12 &  -44& 44& -12&0 \\
2&  2 & - 16& 16& -2&0 \\
3& 0& -2 &  2& 0& 0 \\
\hline

\hline
\end{tabular}
\end{minipage}
\hfill
\hspace{-8mm}
\begin{minipage}{0.49\linewidth}
$\hat{N}^{c=1}_{g,Q}:$
\centering
\begin{tabular}{|l|lccc | c |}
\hline
\small{g \textbackslash Q}&1&3&5&7\\
\hline
0&  8& -21 & 19&-6    \\
1&  6 &  -22& 21&-5 \\
2& 1&- 8 &  8&- 1 \\
3& 0& -1 &  1& 0 \\
\hline
\end{tabular}
\end{minipage}
\end{tabular}
\end{table}

\begin{table}[h!]
$\hat{N}^{c=2}_{g,Q}:$
\centering
\begin{tabular}{|l|lccc | c |}
\hline
\small{g \textbackslash Q}&0&2&4&6\\
\hline
0&  5& -15 & 15&-5    \\
1&  5 &  -20& 20&-5 \\
2& 1&- 8 &  8&- 1 \\
3& 0& -1 &  1& 0 \\
\hline
\end{tabular}
\end{table}

\vskip2mm
\noindent{}$\bullet$  $L11n132^+=(5_1^+,\bigcirc)$; $lk(L11n132\{0\})=1$ 
  \vspace{-2mm}
\begin{table}[h!]
\begin{tabular}{c}
\begin{minipage}{0.49\linewidth}
\centering
$\hat{N}^{c=0}_{g,Q}:$
\begin{tabular}{|l|lcccc | c |}
\hline
\small{g \textbackslash Q}&2&4&6&8&10\\
\hline
0&  12& -32 & 26& -4&-2    \\
1&  10 &  -30& 22& -2&0 \\
2&  2& - 6& 4& 0&0 \\
\hline

\hline
\end{tabular}
\end{minipage}
\hfill
\hspace{-4mm}
\begin{minipage}{0.49\linewidth}
$\hat{N}^{c=1}_{g,Q}:$
\centering
\begin{tabular}{|l|lcccc | c |}
\hline
\small{g \textbackslash Q}&3&5&7&9&11\\
\hline
0&  11& -23 & 8& 10&-6    \\
1&  10 &  -26& 6& 15&-5 \\
2&  2 &  -9& 1& 7&-1 \\
3&  0 &  -1& 0& 1&0 \\
\hline
\end{tabular}
\end{minipage}
\end{tabular}
\end{table}


\noindent{}$\bullet$ $L11n133^+=(5_1^+,\bigcirc)$; $lk(L11n133\{0\})=3$ 
 \vspace{-2mm}
\begin{table}[h!]
\begin{tabular}{c}
\begin{minipage}{0.49\linewidth}
\centering
$\hat{N}^{c=0}_{g,Q}:$
\begin{tabular}{|l|lcccc| c |}
\hline
\small{g \textbackslash Q}&2&4&6&8&10\\
\hline
0&  28&  -64& 42& -4&-2   \\
1&  70& -118 & 50& -2&0  \\
2&  56 & - 74& 18& 0&0 \\
3&  18 &  -20& 2&0& 0 \\
4&  2 &  -2& 0&0& 0 \\
\hline

\hline
\end{tabular}
\end{minipage}
\hfill
\begin{minipage}{0.49\linewidth}
$\hat{N}^{c=1}_{g,Q}:$
\centering
\begin{tabular}{|l|lcccc| c |}
\hline
\small{g \textbackslash Q}&3&5&7&9&11\\
\hline
0&   27&-44 & 15&3&-1    \\
1&  54 &  -75& 20&1&0 \\
2&  36 &  -44& 8&0&0 \\
3&  10 &  -11& 1&0&0 \\
4&  1 &  -5& 0&0&0 \\
\hline
\end{tabular}
\end{minipage}
\end{tabular}
\end{table}

\FloatBarrier

 \noindent{}$\bullet$ $L11n204^+=T(3,5,2,1)=(3_1^+,\bigcirc)$; $lk(L11n204\{0\})=4$
    \vspace{-1mm}
\begin{table}[h!]
\begin{tabular}{c}
\begin{minipage}{0.51\linewidth}
\centering
$\hat{N}^{c=0}_{g,Q}:$
\begin{tabular}{|l|lcccc | c |}
\hline
\small{g \textbackslash Q}&0&2&4&6&8\\
\hline
0&  34& -72 & 44& -8&2    \\
1&  72 &  -11& 44&-1&0 \\
2&  56 & -72& 16& 0&0 \\
3&  18 &  -20& 2&0&0 \\
4&  2 &  -2&0&0&0 \\
\hline

\hline
\end{tabular}
\end{minipage}
\hfill
\begin{minipage}{0.47\linewidth}
$\hat{N}^{c=1}_{g,Q}:$
\centering
\begin{tabular}{|l|lccc | c |}
\hline
\small{g \textbackslash Q}&1&3&5&7\\
\hline
0&  33& -51 & 19& -1    \\
1&  56 &  -77& 21& 0 \\
2&  36 &  -44& 8& 0 \\
3&  10 &  -11& 1& 0 \\
4&  1 &  -1& 0& 0 \\
\hline
\end{tabular}
\end{minipage}
\end{tabular}
\end{table}
\FloatBarrier
\noindent{}Observe that the BPS invariants for the family $T(3,n,2,1)$ (c.f. $L9n15$ above and $L7n1$ in Table~\ref{BPSL7n1}) again follow a certain pattern. This involves an alternating sign in each column and the fixed entries of $\pm2$ or $\pm1$ at the bottom left corner for $c=0$ and $c=1$, respectively.
\vskip1mm
 \noindent{}$\bullet$ $L11n208=(3_1^+,\bigcirc)$; $lk(L11n208\{0\})=-2$ 
     \vspace{-1mm}
\begin{table}[h!]
\begin{tabular}{c}
\begin{minipage}{0.51\linewidth}
\centering
$\hat{N}^{c=0}_{g,Q}:$
\begin{tabular}{|l|lccccc | c |}
\hline
\small{g \textbackslash Q}&-4&-2&0&2&4&6\\
\hline
0&  2& 0 & 10& -42&44&-14    \\
1&  0 &  0& 10& -44&44&-10 \\
2&  0 & 0& 2& -16&16&-2 \\
3&  0 &  0& 0& -2&2&0 \\
\hline

\hline
\end{tabular}
\end{minipage}
\hfill
\begin{minipage}{0.47\linewidth}
$\hat{N}^{c=1}_{g,Q}:$
\centering
\begin{tabular}{|l|lccc | c |}
\hline
\small{g \textbackslash Q}&1&3&5&7\\
\hline
0&  18& -42 & 34& -10    \\
1&  21 &  -62& 56& -15 \\
2&  8 &  -37& 36& -7 \\
3&  1 &  -10& 10& -1 \\
4&  0 &  -1& 1& 0 \\
\hline
\end{tabular}
\end{minipage}
\end{tabular}
\end{table}

\newpage

\end{document}